\documentclass[a4paper,12pt]{article}
\textwidth 16.5cm
\textheight 22.5cm
\oddsidemargin 0pt
\topmargin -1cm
\usepackage{authblk}
\usepackage[dvipsnames]{xcolor}
\usepackage{latexsym,amsmath,amssymb,amsfonts,amsthm,dsfont,enumerate,natbib,bbm,threeparttable, rotating, pdflscape, verbatim, amssymb}
\usepackage[normalem]{ulem}

\usepackage[colorlinks,linkcolor=red,citecolor=blue,urlcolor=blue,breaklinks]{hyperref}
\usepackage{cleveref}

\newtheorem{theorem}{Theorem}

\newtheorem{proposition}[theorem]{Proposition}
\newtheorem{lemma}[theorem]{Lemma}

\DeclareMathOperator*{\argmin}{argmin}
\DeclareMathOperator*{\argmax}{argmax}

\RequirePackage{lineno}




 

\DeclareMathOperator*{\sgn}{sgn}
  
\begin{document}


\title{Confidence intervals for high-dimensional Cox models}
\author[1]{Yi Yu}
\author[2]{Jelena Bradic} 
\author[3]{Richard J. Samworth}

\affil[1]{\footnotesize{School of Mathematics, University of Bristol}}
\affil[2]{Department of Mathematics, University of California at San Diego}
\affil[3]{Statistical Laboratory, University of Cambridge}

\maketitle

\begin{abstract}
The purpose of this paper is to construct confidence intervals for the regression coefficients in high-dimensional Cox proportional hazards regression models where the number of covariates may be larger than the sample size.  Our debiased estimator construction is similar to those in \cite{ZhangZhang2014} and \cite{GeerEtal2014}, but the time-dependent covariates and censored risk sets introduce considerable additional challenges.  Our theoretical results, which provide conditions under which our confidence intervals are asymptotically valid, are supported by extensive numerical experiments.
\end{abstract}

\section{Introduction}

Over the last 45 years, the Cox proportional hazards model \citep{Cox1972} has become central to the analysis of censored survival data.  It posits that the conditional hazard rate at time $t \in \mathcal{T}$ for the survival time $\tilde{T}$ of an individual given their $p$-variate covariate vector $\boldsymbol{Z}(t)$ can be expressed as 
\begin{equation}
\label{Eq:Cox}
\lambda(t) := \lambda_0(t)\exp\bigl\{\boldsymbol{\beta}^{o\top}\boldsymbol{Z}(t)\bigr\}, 
\end{equation}
where $\boldsymbol{\beta}^o \in \mathbb{R}^p$ is an unknown vector of regression coefficients and $\lambda_0(\cdot)$ is an unknown baseline hazard function.  With $n$ individuals from a population, we assume that for each $i=1,\ldots,n$ we observe a (possibly right-censored) survival time $T_i$, an indicator $\delta_i$ of whether or not failure is observed, and the corresponding covariate processes $\{\boldsymbol{Z}_i(t): t \in \mathcal{T}\}$.


When $p < n$, the maximum partial likelihood estimator (MPLE) \citep{Cox1975} may be used to estimate $\boldsymbol{\beta}^o$.  In the classical setting where the dimension $p$ is assumed to be fixed and the sample size $n$ is allowed to diverge to infinity, and under a strong (and hard to check) condition on the weak convergence of the sample covariance processes, \cite{AndersenGill1982} derived the asymptotic normality of the MPLE using counting process arguments and Rebolledo's martingale central limit theorem.  This result may be used to provide asymptotically valid confidence intervals for components of $\boldsymbol{\beta}^o$ (or more generally, for linear combinations $\mathbf{c}^\top \boldsymbol{\beta}^o$, for some fixed $\mathbf{c} \in \mathbb{R}^p$).

Our interest in this paper lies in providing corresponding confidence intervals in the high-dimensional regime, where $p$ may be much larger than $n$.  The motivation for such methodology arises from many different application areas, but particularly in biomedicine, where Cox models are ubiquitous and data on each individual, which may arise in the form of combinations of genetic information, greyscale values for each pixel in a scan and many other types, are often plentiful.  Our construction begins with the Lasso penalised partial likelihood estimator $\widehat{\boldsymbol{\beta}}$ studied in \citet{HuangEtal2013}, which is used as an initial estimator and which is sparse.  We then seek a sparse estimator of the inverse of negative Hessian matrix, which we will refer to as a \emph{sparse precision matrix estimator}.  In \cite{ZhangZhang2014} and \cite{GeerEtal2014}, who consider similar problems in the linear and generalised linear model settings respectively, this sparse precision matrix estimator is constructed via nodewise Lasso regression \citep{MeinshausenBuhlmann2006}.  On the other hand, \cite{JavanmardMontanari2013} and \cite{JavanmardMontanari2014} derived their precision matrix estimators by minimising the trace of the product of the sample covariance matrix and the precision matrix, and the covariates are assumed to be centred. However, in the Cox model setting, the counterpart of the design matrix is a mean-shifted design matrix, where the mean is based on a set of tilting weights, and this destroys the necessary independence structure.  Instead, we adopt a modification of the CLIME estimator \citep{CaiEtal2011} as the sparse precision matrix estimator, which allows us to handle the mean subtraction.  Adjusting $\widehat{\boldsymbol{\beta}}$ by the product of our sparse precision matrix estimator and the score vector yields a debiased estimator $\widehat{\boldsymbol{b}}$, and our main theoretical result (Theorem~\ref{thm-main}) provides conditions under which $\boldsymbol{c}^\top\widehat{\boldsymbol{b}}$ is asymptotically normally distributed around $\boldsymbol{c}^\top\boldsymbol{\beta}^o$.  The desired confidence intervals can then be obtained straighforwardly.  Further very recent applications of the debiasing idea, outside the regression problem context, can found in \citet{JankovaGeer2018a} and \citet{JankovaGeer2018b}.

The formidable theoretical challenges involved in proving the asymptotic normality of $\boldsymbol{c}^\top\widehat{\boldsymbol{b}}$ arise in part from our desire to provide results in quite a general setting.  In particular, we first wish to avoid the difficult assumption on the weak convergence of sample covariance processes inherent in the martingale central limit theorem approach \citep{BradicEtal2011}.  This entails a completely different line of attack, which we believe provides new insights even in the low-dimensional setting.  Second, we wish to allow the upper limit $t_+$ of the time index set $\mathcal{T}$ to be infinite, and do not assume that each subject has a constant, positive probability of remaining in the at risk set at time $t_+$.  This is in constrast to the work of, e.g., \cite{FangEtal2016}, where the authors propose hypothesis tests based on decorrelated scores and decorrelated partial likelihood ratios (but do not provide confidence intervals for regression coefficients).  We tackle this feature of the problem by means of a novel truncation argument.  Third, our theory aims to handle settings where $p$ may be much larger than $n$; in fact, we only assume that $p = o(\exp(n^a))$, for every $a > 0$; this is sometimes called the ultrahigh dimensional setting \citep[e.g.][]{FSW2009}.

Our estimators and inference procedure are given in Section~\ref{sec:methods}, and our theoretical arguments are presented in Section~\ref{sec:theory}.  Section~\ref{sec:numerical} is devoted to extensive numerical studies of our methdology on both simulated and real data.  These reveal in particular that valid $p$-values and confidence intervals for the noise variables can be obtained with a relatively small sample size, while a larger sample size is needed for good coverage of signal variables.  Various auxiliary results are given in the Appendix.


We conclude this introduction with some notation used throughout the paper.  For any set $S$, let $|S|$ denote its cardinality.  For a vector $\boldsymbol{v} = (v_1,\ldots,v_m)^\top \in \mathbb{R}^m$, let $\|\boldsymbol{v}\|_1$, $\|\boldsymbol{v}\|$ and $\|\boldsymbol{v}\|_{\infty}$ denote its $\ell_1$, $\ell_2$ and $\ell_\infty$ norms, respectively; we also write $\boldsymbol{v}^{\otimes 2} := \boldsymbol{v}\boldsymbol{v}^\top$.  Given a set $J \subseteq \{1,\ldots,m\}$, we write $\boldsymbol{v}_J := (v_j)_{j \in J} \in \mathbb{R}^{|J|}$.  For a matrix $\boldsymbol{A} = (A_{ij})_{i,j=1}^m \in\mathbb{R}^{m\times m}$, let $\|\boldsymbol{A}\|_{\infty} := \max_{i, j = 1, \ldots, m}|A_{ij}|$ be the entrywise maximum absolute norm, and let $\|\boldsymbol{A}\|_{\mathrm{op}, \infty} := \sup_{\boldsymbol{v}\neq 0}\bigl(\|\boldsymbol{A}\boldsymbol{v}\|_{\infty}/\|\boldsymbol{v}\|_{\infty}\bigr)$ and $\|\boldsymbol{A}\|_{\mathrm{op}, 1} := \sup_{\boldsymbol{v}\neq 0}\bigl(\|\boldsymbol{A}\boldsymbol{v}\|_{1}/\|\boldsymbol{v}\|_{1}\bigr)$ denote its operator $\ell_\infty$ and operator $\ell_1$ norms respectively.  We recall in Lemma~\ref{lem:multi} in the Appendix that $\|\boldsymbol{A}\|_{\mathrm{op}, \infty}$ and $\|\boldsymbol{A}\|_{\mathrm{op}, 1}$ are, respectively, the maximum of the $\ell_1$ norms of the rows of $\boldsymbol{A}$ and the maximum of the $\ell_1$ norms of its columns.  Given two real sequences $(a_n)$ and $(b_n)$, we write $a_n \asymp b_n$ to mean $0 < \liminf_{n \rightarrow \infty} |a_n/b_n| \leq \limsup_{n \rightarrow \infty} |a_n/b_n| < \infty$.  Given a distribution function $F$, we write $\bar{F} := 1 - F$.  All probabilities and expectations are taken under the true model with baseline hazard $\lambda_0$ and regression parameter $\boldsymbol{\beta}^o$, though we suppress this in our notation.

\section{Methodology}
\label{sec:methods}

Recall that $\mathcal{T} \subseteq [0,\infty)$ denotes our time index set.  We assume that, for $i=1,\ldots,n$, there exist independent triples $\bigl(\tilde{T}_i,U_i,\{\boldsymbol{Z}_i(t): t \in \mathcal{T}\}\bigr)$, where $\tilde{T}_i$ is a non-negative random variable indicating failure time, $U_i$ is a non-negative random variable indicating a censoring time, and $\{\boldsymbol{Z}_i(t):t \in \mathcal{T}\}$ is a $p$-variate, predictable time-varying covariate process.  We further assume that $\tilde{T}_i$ and $U_i$ are conditionally independent given $\{\boldsymbol{Z}_i(t): t \in \mathcal{T}\}$.  Writing $T_i := \min(\tilde{T}_i,U_i)$ and $\delta_i := \mathbbm{1}_{\{\tilde{T}_i \leq U_i\}}$, our observations are $\bigl\{\bigl(T_i,\delta_i,\{\boldsymbol{Z}_i(t): t \in \mathcal{T}\}\bigr) \,: \, i=1,\ldots,n\bigr\}$.  We regard these observations as independent copies of a generic triple $\bigl(T,\delta,\{\boldsymbol{Z}(t): t \in \mathcal{T}\}\bigr)$.

Let $F_T$ denote the distribution function of $T$, and let $t_+ := \inf\{t \geq 0:F_T(t)=1\}$ denote the upper limit of the support of $T$.  If $t_+ < \infty$, we assume that $\mathcal{T} = [0,t_+]$; if $t_+ = \infty$, then we assume $\mathcal{T} = [0,\infty)$.  In this sense, we assume that $\mathcal{T}$ covers the entire support of the distribution of $T$, so in particular, there are no individuals in the risk set at time $t_+$.


For $i=1,\ldots,n$, define processes $\{N_i(t): t\in \mathcal{T}\}$ and $\{Y_i(t): t\in \mathcal{T}\}$ by $N_i(t) := \mathbbm{1}_{\{T_i \leq t, \delta_i = 1\}}$ and $Y_i(t) := \mathbbm{1}_{\{T_i \geq t\}}$.  We regard these as independent copies of processes $\{N(t): t\in \mathcal{T}\}$ and $\{Y(t): t\in \mathcal{T}\}$ respectively.  Let $\bar{N}(t) := n^{-1}\sum_{i=1}^n N_i(t)$.  The natural $\sigma$-field at time $t \in \mathcal{T}$ is therefore $\mathcal{F}_t := \sigma\bigl(\{(N_i(t), Y_i(t), \{\boldsymbol{Z}_i(s):s \in [0,t]\}): i = 1, \ldots, n\}\bigr)$.  The Cox model~\eqref{Eq:Cox} entails that $N_i(t)$ has predictable compensator 
\[
\Lambda_i(t,\boldsymbol{\beta}^o) := \int_0^t Y_i(s)\exp\bigl\{\boldsymbol{\beta}^{o\top}\boldsymbol{Z}_i(t)\bigr\}\lambda_0(s) \, ds
\]
with respect to the filtration $(\mathcal{F}_t:t \in \mathcal{T})$.

Define the log-partial likelihood function, divided by $n$, at $\boldsymbol{\beta} \in \mathbb{R}^p$ by 
\[
\ell(\boldsymbol{\beta}) = \ell_n(\boldsymbol{\beta}) := \frac{1}{n} \sum_{i = 1}^n \int_{\mathcal{T}} \boldsymbol{\beta}^{\top}\boldsymbol{Z}_i(s)\,dN_i(s) - \int_{\mathcal{T}}\log \biggl[\sum_{j = 1}^n Y_j(s)\exp\bigl\{\boldsymbol{\beta}^{\top}\boldsymbol{Z}_j(s)\bigr\}\biggr]\,d\bar{N}(s).
\]
Inspired by \cite{ZhangZhang2014} and \cite{GeerEtal2014}, our main object of interest is the one-step type estimator 
\begin{equation}
\label{Eq:b}
\widehat{\boldsymbol{b}} := \widehat{\boldsymbol{\beta}} + \widehat{\boldsymbol{\Theta}}\dot{\ell}(\widehat{\boldsymbol{\beta}}),
\end{equation}
where $\widehat{\boldsymbol{\beta}} = (\hat{\beta}_1,\ldots,\hat{\beta}_p)^\top$ is an initial estimator of $\boldsymbol{\beta}^o$, where $\widehat{\boldsymbol{\Theta}} = (\hat{\Theta}_{ij})_{i,j=1}^p$ is a sparse precision matrix estimator that approximates the inverse of the negative Hessian $-\ddot{\ell}(\boldsymbol{\beta}^o)$ and where $\dot{\ell}(\widehat{\boldsymbol{\beta}})$ is the score function evaluated at the initial estimator. In the rest of this section, we will elucidate the definition and rationale for our choices of $\widehat{\boldsymbol{\beta}}$ and $\widehat{\boldsymbol{\Theta}}$.  We remark that our proposals for $\widehat{\boldsymbol{\beta}}$ and $\widehat{\boldsymbol{\Theta}}$ will depend on certain tuning parameters, and this dependence is suppressed in our notation.  However, in our theoretical results, we will give explicit conditions on these tuning parameters. 

\subsection{Initial estimator}

Following \citet{HuangEtal2013}, for $\lambda > 0$, let 
\begin{equation}\label{eq-betahat}
\widehat{\boldsymbol{\beta}} = \widehat{\boldsymbol{\beta}}(\lambda) := \argmin_{\boldsymbol{\beta}\in\mathbb{R}^p} \bigl\{-\ell(\boldsymbol{\beta}) + \lambda\|\boldsymbol{\beta}\|_1\bigr\}.
\end{equation}
For $i = 1, \ldots, n$ and $t \in \mathcal{T}$, let $\tilde{w}_i(t, \boldsymbol{\beta}) := Y_i(t)\exp\{\boldsymbol{\beta}^{\top}\boldsymbol{Z}_i(t)\}$ be the $i$th weight and let
	\[
	w_i(s, \boldsymbol{\beta}) := \frac{\tilde{w}_i(s, \boldsymbol{\beta})}{\sum_{j = 1}^n \tilde{w}_j(s, \boldsymbol{\beta})}
	\]
be the $i$th normalised weight, with the convention that $0/0 := 0$.  The weighted average of the covariate processes is defined by $\bar{\boldsymbol{Z}}(s, \boldsymbol{\beta}) := \sum_{i = 1}^n \boldsymbol{Z}_i(s)w_i(s,\boldsymbol{\beta})$.  Then it follows from the subgradient conditions for optimality (Karush--Kuhn--Tucker conditions) that there exists $\hat{\boldsymbol{\tau}} = (\hat{\tau}_1,\ldots,\hat{\tau}_p)^\top$ such that
\[
0 = -\dot{\ell}(\widehat{\boldsymbol{\beta}}) + \lambda\widehat{\boldsymbol{\tau}} = -\frac{1}{n}\sum_{i=1}^n \int_{\mathcal{T}}\bigl\{\boldsymbol{Z}_i(s) - \bar{\boldsymbol{Z}}(s, \widehat{\boldsymbol{\beta}})\bigr\} \,dN_i(s) + \lambda\widehat{\boldsymbol{\tau}},
\]
where $\|\widehat{\boldsymbol{\tau}}\|_{\infty} \leq 1$ and $\hat{\tau}_j = \sgn(\hat{\beta}_j)$ if $\hat{\beta}_j \neq 0$. 


\subsection{The estimator of the precision matrix}

For $\boldsymbol{\beta} \in \mathbb{R}^p$, we have 
\[
\ddot{\ell}(\boldsymbol{\beta}) = -\sum_{i=1}^n \int_{\mathcal{T}} \bigl\{\boldsymbol{Z}_i(s) - \bar{\boldsymbol{Z}}(s,\boldsymbol{\beta})\bigr\}^{\otimes 2}w_i(s, \boldsymbol{\beta})\,d\bar{N}(s),
\]
but the presence of the weights in this integral makes it hard to analyse directly.  As a first step towards obtaining a more tractable expression, we therefore rewrite this equation as
\[
\ddot{\ell}(\boldsymbol{\beta}) = -\frac{1}{n}\sum_{i=1}^n \int_{\mathcal{T}} \bigl\{\boldsymbol{Z}_i(s) - \bar{\boldsymbol{Z}}(s,\boldsymbol{\beta})\bigr\}^{\otimes 2}\tilde{w}_i(s, \boldsymbol{\beta})\,d\widehat{\Lambda}(s,\boldsymbol{\beta}),
\]
where we define $\widehat{\Lambda}(t,\boldsymbol{\beta}) := n\int_0^t \bigl\{\sum_{j=1}^n \tilde{w}_j(s,\boldsymbol{\beta})\bigr\}^{-1} \, d\bar{N}(s)$ to be the Breslow estimator of $\int_0^t \lambda_0(s) \, ds$ \citep{Breslow1972}.  Now recall from, e.g., \citet[][p.~66]{ABGK1993} that the process $\{N(t):t \in \mathcal{T}\}$ has the Doob--Meyer decomposition
\begin{equation}
\label{Eq:DoobMeyer}
N(t) = M(t) + \int_0^t \tilde{w}(s, \boldsymbol{\beta}^{o})\lambda_0(s) \, ds,
\end{equation}
where $\{M(t): t\in \mathcal{T}\}$ is a mean-zero martingale.  This motivates us to define a population approximation to $-\ddot{\ell}(\boldsymbol{\beta}^o)$ by
\[
\boldsymbol{\Sigma} := \mathbb{E}\int_{\mathcal{T}} \{\boldsymbol{Z}(s) - \boldsymbol{\mu}(s, \boldsymbol{\beta}^o)\}^{\otimes 2}\,dN(s) = \mathbb{E}\int_0^{t_+} \bigl\{\boldsymbol{Z}(s) - \boldsymbol{\mu}(s, \boldsymbol{\beta}^o)\bigr\}^{\otimes 2}\tilde{w}(s,\boldsymbol{\beta}^o)\lambda_0(s) \, ds,
\]
where, for $t \in \mathcal{T}$ and $\boldsymbol{\beta} \in \mathbb{R}^p$,
\[
\boldsymbol{\mu}(t, \boldsymbol{\beta}) := \frac{\mathbb{E}\{\boldsymbol{Z}(t)Y(t)\exp(\boldsymbol{\beta}^{\top}\boldsymbol{Z}(t))\}}{\mathbb{E}\{Y(t)\exp(\boldsymbol{\beta}^{\top}\boldsymbol{Z}(t))\}}.
\]
Our goal in this subsection is to define an estimator of $\boldsymbol{\Sigma}^{-1}$ whose properties we can analyse.  To this end, observe that an oracle, with knowledge of $\boldsymbol{\beta}^o$, could estimate $\boldsymbol{\Sigma}$ by
\begin{align*}
\widehat{\mathcal{V}}(\boldsymbol{\beta}^o) := \frac{1}{n}\sum_{i=1}^n\int_{\mathcal{T}}\bigl\{\boldsymbol{Z}_i(s) - \bar{\boldsymbol{Z}}(s, \boldsymbol{\beta}^o)\bigr\}^{\otimes 2} \, dN_i(s) = \frac{1}{n}\sum_{i = 1}^n \delta_i\bigl\{\boldsymbol{Z}_i(T_i) - \bar{\boldsymbol{Z}}(T_i, \boldsymbol{\beta}^o)\bigr\}^{\otimes 2}.
\end{align*}
This suggests the genuine estimator 
\begin{equation}\label{eq-hatV}
\widehat{\mathcal{V}}(\widehat{\boldsymbol{\beta}}) = \frac{1}{n}\sum_{i = 1}^n \delta_i\bigl\{\boldsymbol{Z}_i(T_i) - \bar{\boldsymbol{Z}}(T_i, \widehat{\boldsymbol{\beta}})\bigr\}^{\otimes 2}.
\end{equation}
While both $-\ddot{\ell}(\widehat{\boldsymbol{\beta}})$ and $\widehat{\mathcal{V}}(\widehat{\boldsymbol{\beta}})$ can be considered as estimators of $\boldsymbol{\Sigma}$, it turns out that the latter is the much more convenient expression to study from a theoretical perspective.

As mentioned in the introduction, both \cite{ZhangZhang2014} and \cite{GeerEtal2014} employ nodewise regression to obtain a sparse precision matrix estimator $\widehat{\boldsymbol{\Theta}}$.  In those cases, the design matrices consist of independent rows, which facilitate the adoption of Lasso-type methods; in the Cox model, however, we do not have the luxury of row independence since $\widehat{\mathcal{V}}$ defined in~\eqref{eq-hatV} involves $\bar{\boldsymbol{Z}}(T_i, \widehat{\boldsymbol{\beta}})$.

As an alternative, we adapt the CLIME estimator of \citet{CaiEtal2011}, originally proposed in the context of precision matrix estimation.  
Let $\widehat{\boldsymbol{\Theta}} = (\widehat{\boldsymbol{\Theta}}_1, \ldots, \widehat{\boldsymbol{\Theta}}_p)^\top$ be defined by
	\begin{equation}\label{eq-ThetaHat}
	\widehat{\boldsymbol{\Theta}}_j \in \argmin_{\boldsymbol{b}_j \in \mathbb{R}^p}\Bigl\{\|\boldsymbol{b}_j\|_1: \, \bigl\|\widehat{\mathcal{V}}(\widehat{\boldsymbol{\beta}})\boldsymbol{b}_j - \boldsymbol{e}_j\bigr\|_{\infty} \leq \lambda_n\Bigr\},
	\end{equation}
where $\boldsymbol{e}_j^\top := (\mathbbm{1}_{\{j = l\}})_{l=1}^p \in \mathbb{R}^p$ for $j = 1, \ldots, p$.  The original proposal of \citet{CaiEtal2011} symmetrised $\widehat{\boldsymbol{\Theta}}$ by taking both the $(i,j)$th and $(j,i)$th off-diagonal entries to be the corresponding entry of $\widehat{\boldsymbol{\Theta}}$ with smaller absolute value.  In our theoretical analysis, it turned out to be convenient not to symmetrise in this way, and in practice, we found the the difference to be negligible; see Section~\ref{sec-pracissue}.

	
For $j=1,\ldots,p$, let $\dot{\ell}_j(\boldsymbol{\beta})$ denote the $j$th component of the score vector at $\boldsymbol{\beta}$, and let $\ddot{\ell}_j(\boldsymbol{\beta}) \in \mathbb{R}^p$ have $l$th component $\frac{\partial^2 \ell(\boldsymbol{\beta})}{\partial \beta_l \partial \beta_j}$.  By a Taylor expansion, for each $j = 1,\ldots,p$, there exists $\widetilde{\boldsymbol{\beta}}_j$ lying on the line segment between $\widehat{\boldsymbol{\beta}}$ and $\boldsymbol{\beta}^o$, such that
\begin{equation}
\label{Eq:ldotTaylor}
\dot{\ell}_j(\widehat{\boldsymbol{\beta}}) = \dot{\ell}_j(\boldsymbol{\beta}^o) + \ddot{\ell}_j(\widetilde{\boldsymbol{\beta}}_j)^\top (\widehat{\boldsymbol{\beta}} - \boldsymbol{\beta}^o).
\end{equation}
Now let $\boldsymbol{M}(\widetilde{\boldsymbol{\beta}}) \in \mathbb{R}^{p \times p}$ be the matrix with $j$th row $\ddot{\ell}_j(\widetilde{\boldsymbol{\beta}}_j)^\top$.  It follows that with $\widehat{\boldsymbol{b}}$ defined as in~\eqref{Eq:b}, and for any $\mathbf{c} \in \mathbb{R}^p$ with $\|\mathbf{c}\|_1 = 1$, we can write
\begin{align}\label{Eq:1}
\mathbf{c}^\top(\widehat{\boldsymbol{b}} - \boldsymbol{\beta}^o) &= \mathbf{c}^\top\bigl\{\widehat{\boldsymbol{\beta}} + \widehat{\boldsymbol{\Theta}}\dot{\ell}(\widehat{\boldsymbol{\beta}}) - \boldsymbol{\beta}^o\bigr\} \nonumber\\
&= \mathbf{c}^\top\boldsymbol{\Sigma}^{-1}\dot{\ell}(\boldsymbol{\beta}^o) + \mathbf{c}^\top\bigl(\widehat{\boldsymbol{\Theta}} - \boldsymbol{\Sigma}^{-1}\bigr)\dot{\ell}(\boldsymbol{\beta}^o) + \mathbf{c}^\top\widehat{\boldsymbol{\Theta}}\bigl\{\dot{\ell}(\widehat{\boldsymbol{\beta}}) - \dot{\ell}(\boldsymbol{\beta}^o)\bigr\} + \mathbf{c}^\top(\widehat{\boldsymbol{\beta}} - \boldsymbol{\beta}^o) \nonumber \\
&= \mathbf{c}^\top\boldsymbol{\Sigma}^{-1}\dot{\ell}(\boldsymbol{\beta}^o) + \mathbf{c}^\top(\widehat{\boldsymbol{\Theta}} - \boldsymbol{\Sigma}^{-1})\dot{\ell}(\boldsymbol{\beta}^o) + \mathbf{c}^\top\{\widehat{\boldsymbol{\Theta}}\boldsymbol{M}(\widetilde{\boldsymbol{\beta}}) + \boldsymbol{I}\}(\widehat{\boldsymbol{\beta}} - \boldsymbol{\beta}^o).
\end{align}
In Section~\ref{sec:theory} below, we will provide conditions under which, when both sides of~\eqref{Eq:1} are rescaled by $n^{1/2}$, the first, dominant term is asymptotically normal, and the second and third terms are asymptotically negligible.  This is the main step in deriving asymptotically valid confidence intervals for $\mathbf{c}^\top\boldsymbol{\beta}^o$.

\section{Theory}\label{sec:theory}

\subsection{Assumptions and main result}


Recall that our underlying processes are $n$ independent copies of the triple $\bigl(\tilde{T},U,\boldsymbol{\mathcal{Z}}\bigr)$, where $\boldsymbol{\mathcal{Z}} := \{\boldsymbol{Z}(t): t \in \mathcal{T}\}$, and that we assume $\tilde{T}$ and $U$ are conditionally independent given $\boldsymbol{\mathcal{Z}}$.  Our observations are $n$ independent copies of $\bigl(T,\delta,\{\boldsymbol{Z}(t): t \in \mathcal{T}\}\bigr)$, and we assume that the conditional hazard function of $\tilde{T}$ at time $t$ given $\boldsymbol{\mathcal{Z}}$ satisfies~\eqref{Eq:Cox}\footnote{In the terminology of, e.g., \citet[][Section~6.3]{KalbfleischPrentice2002}, this means that all time-dependent covariates are \emph{external}.} for some $\boldsymbol{\beta}^o \in \mathbb{R}^p$.  We will make use of the following assumptions:
\begin{description}
\item[\textbf{(A1)}]
\begin{description}
\item[\textbf{(a)}] The process $\{\boldsymbol{Z}(t): t\in \mathcal{T}\}$ is predictable and there exists a deterministic $K_Z > 0$ with $\sup_{t\in \mathcal{T}}\mathbb{P}\{\|\boldsymbol{Z}(t)\|_{\infty} \leq K_Z\} = 1$.
\item[\textbf{(b)}] The process $\{\boldsymbol{Z}(t): t\in \mathcal{T}\}$ is uniformly Lipschitz in the sense that there exists a deterministic $L > 0$ such that 
\[
\mathbb{P}\biggl\{\sup_{s,t \in \mathcal{T}} \|\boldsymbol{Z}(s) - \boldsymbol{Z}(t)\|_\infty \leq L|s-t|\biggr\} = 1.
\]
\end{description} 
\item[\textbf{(A2)}]
\begin{description}
\item [\textbf{(a)}]  The random variable $T$ has a bounded density $f_T$ with respect to Lebesgue measure.
\item[\textbf{(b)}] $\int_0^{t_+} t^\alpha f_T(t) \, dt < \infty$ for some $\alpha > 0$. 
\end{description}


\item[\textbf{(A3)}]

\begin{description}
\item[\textbf{(a)}] $p = p_n = o(e^{n^a})$, for every $a > 0$.
\item[\textbf{(b)}] $d_o := |\{j:\,\beta^o_j \neq 0\}|$ satisfies $d_o = o\bigl(n^{1/2}/\log^{1/2} (np)\bigr)$.
\end{description}

\item[\textbf{(A4)}] 
\begin{description}
\item [\textbf{(a)}] Writing $\mathcal{S} := \{j:\,\beta^o_j \neq 0\}$, $\mathcal{N} := \{j:\,\beta^o_j = 0\}$ and 
\[
\kappa := \inf_{\{\mathbf{v} \in \mathbb{R}^p \setminus \{0\}:\|\mathbf{v}_{\mathcal{N}}\|_1 \leq 2\|\mathbf{v}_{\mathcal{S}}\|_1\}} \frac{d_o^{1/2}\{\mathbf{v}^\top \ddot{\ell}(\boldsymbol{\beta}^o) \mathbf{v}\}^{1/2}}{\|\mathbf{v}_{\mathcal{S}}\|_1},
\]
we have that $1/\kappa = O_p(1)$.
\item [\textbf{(b)}] $\max_{j = 1, \ldots, p}\Sigma_{jj} = O(1)$ as $n \rightarrow \infty$. 
\item[\textbf{(c)}] Writing $r_j := \sum_{i=1}^p \mathbbm{1}_{\{(\boldsymbol{\Sigma}^{-1})_{ij} \neq 0\}}$ for $j=1,\ldots,p$, there exists $\delta_0 > 0$ such that  
	\[
	\|\boldsymbol{\Sigma}^{-1}\|_{\mathrm{op},1}^2\max\biggl\{\frac{d_o\log(np)}{n^{1/2}} \, , \, n^{-(1/3-\delta_0)}\biggr\} \max_{j = 1, \ldots, p}r_j  = o\biggl(\frac{1}{\log^{1/2} (np)}\biggr).
	\]

\end{description}




\end{description}

Some discussion of these assumptions is in order.  Condition \textbf{(A1)} concerns the boundedness and Lipschitz continuity of the covariate process.  It is likely that the first of these conditions could be replaced with a tail condition, at the expense of further complicating the theoretical analysis.  Indeed, in our simulations in Section~\ref{sec:numerical}, we explore settings in which $\|\boldsymbol{Z}(t)\|_\infty$ is unbounded.  Condition \textbf{(A2)} consists of two mild and interpretable conditions on the distribution of the observed failure times.  Condition \textbf{(A3)(a)} controls the rate of growth of the dimensionality as the sample size increases, and in particular allows super-polynomial growth; however, the sparsity assumption \textbf{(A3)(b)} ensures that the number of important variables (those with non-zero regression coefficient) is more tightly controlled.  Condition \textbf{(A4)(a)} is a high-level condition on the so-called compatability factor of $\ddot{\ell}(\boldsymbol{\beta}^o)$; in the presence of our other assumptions, we will see in the discussion following Lemma~\ref{Lem:lasso} that this essentially amounts to a condition on the smallest eigenvalue of $\boldsymbol{\Sigma}$.  The other parts of \textbf{(A4)} also imposes further conditions on $\boldsymbol{\Sigma}$, and, in the case of\textbf{(A4)(c)}, the way its properties interact with the sparsity level of $\boldsymbol{\beta}^o$.

The confidence intervals for the regression coefficients are constructed based on the results derived in the following theorem.
\begin{theorem}\label{thm-main}
Assume~\textbf{(A1)}-\textbf{(A4)} and let $\boldsymbol{c}\in \mathbb{R}^p$ be such that $\|\boldsymbol{c}\|_1 = 1$  and $\boldsymbol{c}^{\top}\boldsymbol{\Sigma}^{-1}\boldsymbol{c} \rightarrow \nu^2 \in (0,\infty)$.  For $\widehat{\boldsymbol{\beta}}$ in~\eqref{eq-betahat}, let $\lambda \asymp n^{-1/2}\log^{1/2} (np)$, and for $\widehat{\boldsymbol{\Theta}}$ in~\eqref{eq-ThetaHat}, let 
\[
\lambda_n \asymp \biggl\{\max\biggl(\|\boldsymbol{\Sigma}^{-1}\|_{\mathrm{op},1}\frac{d_o\log(np)}{n^{1/2}} \, , \, \|\boldsymbol{\Sigma}^{-1}\|_{\mathrm{op},1}n^{-(1/3-\delta_0)}\biggr)\biggr\}.
\]
Then for $\widehat{\boldsymbol{b}}$ defined in~\eqref{Eq:1}, we have
	\[
	n^{1/2}\boldsymbol{c}^{\top}(\widehat{\boldsymbol{b}} - \boldsymbol{\beta}^o) \stackrel{d}{\to} \mathcal{N}(0, \nu^2)
	\]
	as $n\to \infty$.  Moreover,
	\[
	n^{1/2}\boldsymbol{c}^{\top}(\widehat{\boldsymbol{b}} - \boldsymbol{\beta}^o)/(\boldsymbol{c}^{\top}\widehat{\boldsymbol{\Theta}}\boldsymbol{c})^{1/2} \stackrel{d}{\to} \mathcal{N}(0, 1).
	\]
\end{theorem}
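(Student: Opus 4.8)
The plan is to analyse the three-term decomposition~\eqref{Eq:1} after rescaling by $n^{1/2}$, showing that the first term obeys a central limit theorem with asymptotic variance $\nu^2$ and that the remaining two terms are $o_p(1)$. First I would treat the dominant term $n^{1/2}\mathbf{c}^\top\boldsymbol{\Sigma}^{-1}\dot{\ell}(\boldsymbol{\beta}^o)$. Using the Doob--Meyer decomposition~\eqref{Eq:DoobMeyer}, $\dot{\ell}(\boldsymbol{\beta}^o)$ can be written as $n^{-1}\sum_{i=1}^n \int_{\mathcal{T}}\{\boldsymbol{Z}_i(s) - \bar{\boldsymbol{Z}}(s,\boldsymbol{\beta}^o)\}\,dM_i(s)$, which is a sum of i.i.d.\ mean-zero terms up to an approximation error coming from replacing $\bar{\boldsymbol{Z}}(s,\boldsymbol{\beta}^o)$ by its population version $\boldsymbol{\mu}(s,\boldsymbol{\beta}^o)$. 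I would quantify this replacement error using the boundedness in \textbf{(A1)(a)} together with a uniform law of large numbers for the weight processes, and show it is $o_p(n^{-1/2})$ after multiplication by $\mathbf{c}^\top\boldsymbol{\Sigma}^{-1}$, whose $\ell_1$ norm is controlled by \textbf{(A4)}. The leading i.i.d.\ sum then has variance $n^{-1}\mathbf{c}^\top\boldsymbol{\Sigma}^{-1}\boldsymbol{\Sigma}\boldsymbol{\Sigma}^{-1}\mathbf{c} = n^{-1}\mathbf{c}^\top\boldsymbol{\Sigma}^{-1}\mathbf{c}$, and the hypothesis $\mathbf{c}^\top\boldsymbol{\Sigma}^{-1}\mathbf{c}\to\nu^2 \in (0,\infty)$ gives the right scaling; a Lindeberg condition follows because the summands are bounded by a quantity of order $\|\boldsymbol{\Sigma}^{-1}\|_{\mathrm{op},1}$, which (in combination with \textbf{(A4)(c)}) is $o(n^{1/2})$. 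This yields $n^{1/2}\mathbf{c}^\top\boldsymbol{\Sigma}^{-1}\dot{\ell}(\boldsymbol{\beta}^o) \stackrel{d}{\to}\mathcal{N}(0,\nu^2)$.

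Next I would show the second term $n^{1/2}\mathbf{c}^\top(\widehat{\boldsymbol{\Theta}} - \boldsymbol{\Sigma}^{-1})\dot{\ell}(\boldsymbol{\beta}^o)$ is negligible. By H\"older, this is bounded by $n^{1/2}\|\widehat{\boldsymbol{\Theta}}^\top\mathbf{c} - \boldsymbol{\Sigma}^{-1}\mathbf{c}\|_1\,\|\dot{\ell}(\boldsymbol{\beta}^o)\|_\infty$. The score at the truth concentrates: $\|\dot{\ell}(\boldsymbol{\beta}^o)\|_\infty = O_p(n^{-1/2}\log^{1/2}(np))$ by a maximal inequality for bounded martingale increments (again exploiting \textbf{(A1)}). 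For the estimation error of the precision matrix, I would invoke the CLIME-type analysis: since $\boldsymbol{\Sigma}^{-1}$ is feasible for~\eqref{eq-ThetaHat} with high probability once $\lambda_n$ dominates $\|\widehat{\mathcal{V}}(\widehat{\boldsymbol{\beta}}) - \boldsymbol{\Sigma}\|_\infty\,\|\boldsymbol{\Sigma}^{-1}\|_{\mathrm{op},1}$, the standard CLIME argument bounds $\|\widehat{\boldsymbol{\Theta}} - \boldsymbol{\Sigma}^{-1}\|_{\mathrm{op},\infty}$ by a multiple of $\lambda_n\|\boldsymbol{\Sigma}^{-1}\|_{\mathrm{op},1}$, hence $\|\widehat{\boldsymbol{\Theta}}^\top\mathbf{c} - \boldsymbol{\Sigma}^{-1}\mathbf{c}\|_1 \leq \|\widehat{\boldsymbol{\Theta}} - \boldsymbol{\Sigma}^{-1}\|_{\mathrm{op},1}\|\mathbf{c}\|_1$ with the column-sparsity $\max_j r_j$ entering the bound. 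The chosen rate for $\lambda_n$ and condition \textbf{(A4)(c)} are precisely calibrated so that the product $n^{1/2}\cdot\|\boldsymbol{\Sigma}^{-1}\|_{\mathrm{op},1}\lambda_n\max_j r_j\cdot n^{-1/2}\log^{1/2}(np) = o(1)$. The key input I still need here is the rate for $\|\widehat{\mathcal{V}}(\widehat{\boldsymbol{\beta}}) - \boldsymbol{\Sigma}\|_\infty$, which requires controlling both the sampling fluctuation of $\widehat{\mathcal{V}}(\boldsymbol{\beta}^o)$ around $\boldsymbol{\Sigma}$ (a martingale/empirical-process bound, delivering the $n^{-1/2}\log^{1/2}(np)$ term) and the plug-in error from replacing $\boldsymbol{\beta}^o$ by $\widehat{\boldsymbol{\beta}}$ (handled by a Lasso oracle inequality giving $\|\widehat{\boldsymbol{\beta}} - \boldsymbol{\beta}^o\|_1 = O_p(d_o\lambda/\kappa^2)$, which feeds the $d_o\log(np)/n^{1/2}$ term), together with the truncation argument needed because $t_+$ may be infinite.

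For the third term $n^{1/2}\mathbf{c}^\top\{\widehat{\boldsymbol{\Theta}}\boldsymbol{M}(\widetilde{\boldsymbol{\beta}}) + \boldsymbol{I}\}(\widehat{\boldsymbol{\beta}} - \boldsymbol{\beta}^o)$, I would write $\widehat{\boldsymbol{\Theta}}\boldsymbol{M}(\widetilde{\boldsymbol{\beta}}) + \boldsymbol{I} = \widehat{\boldsymbol{\Theta}}\{\boldsymbol{M}(\widetilde{\boldsymbol{\beta}}) + \widehat{\mathcal{V}}(\widehat{\boldsymbol{\beta}})\} + \{\boldsymbol{I} - \widehat{\boldsymbol{\Theta}}\widehat{\mathcal{V}}(\widehat{\boldsymbol{\beta}})\}$ and bound the two pieces separately by H\"older: the second uses the CLIME constraint $\|\widehat{\mathcal{V}}(\widehat{\boldsymbol{\beta}})\widehat{\boldsymbol{\Theta}}^\top_j - \boldsymbol{e}_j\|_\infty\leq\lambda_n$ directly, and the first uses a bound on $\|\boldsymbol{M}(\widetilde{\boldsymbol{\beta}}) + \widehat{\mathcal{V}}(\widehat{\boldsymbol{\beta}})\|_\infty$, namely the difference between the Hessian-type quantity at the intermediate points $\widetilde{\boldsymbol{\beta}}_j$ and $\widehat{\mathcal{V}}$ evaluated at $\widehat{\boldsymbol{\beta}}$, which is small because both are close to $\boldsymbol{\Sigma}$ (Lipschitz control of the second derivative in $\boldsymbol{\beta}$ plus $\|\widehat{\boldsymbol{\beta}} - \boldsymbol{\beta}^o\|_1$ small). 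Multiplying through by $\|\widehat{\boldsymbol{\beta}} - \boldsymbol{\beta}^o\|_1 = O_p(d_o n^{-1/2}\log^{1/2}(np))$ and using \textbf{(A3)(b)}, $\|\widehat{\boldsymbol{\Theta}}\|_{\mathrm{op},1}\lesssim\|\boldsymbol{\Sigma}^{-1}\|_{\mathrm{op},1}$ and \textbf{(A4)(c)} again, this term is $o_p(1)$. Finally, for the studentised statement, I would show $\boldsymbol{c}^\top\widehat{\boldsymbol{\Theta}}\boldsymbol{c}\stackrel{p}{\to}\nu^2$ — which follows from $\|\widehat{\boldsymbol{\Theta}} - \boldsymbol{\Sigma}^{-1}\|_{\mathrm{op},1}\stackrel{p}{\to}0$, $\|\mathbf{c}\|_1 = 1$, and $\mathbf{c}^\top\boldsymbol{\Sigma}^{-1}\mathbf{c}\to\nu^2 > 0$ — and then apply Slutsky's theorem. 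I expect the main obstacle to be establishing the $\|\cdot\|_\infty$ rate for $\widehat{\mathcal{V}}(\widehat{\boldsymbol{\beta}}) - \boldsymbol{\Sigma}$ with an infinite time horizon: the weighted averages $\bar{\boldsymbol{Z}}(s,\boldsymbol{\beta})$ need uniform control over $s\in\mathcal{T}$ and over $\boldsymbol{\beta}$ near $\boldsymbol{\beta}^o$, and the lack of a uniformly positive at-risk probability near $t_+$ forces the novel truncation argument, with \textbf{(A2)(b)} supplying the moment bound that makes the truncation level negligible.
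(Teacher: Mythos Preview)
Your overall architecture matches the paper's: the same three–term decomposition~\eqref{Eq:1}, a Lindeberg CLT for the leading term after replacing $\bar{\boldsymbol{Z}}$ by $\boldsymbol{\mu}$, CLIME feasibility of $\boldsymbol{\Sigma}^{-1}$ for the second term, and Slutsky for the studentised claim. Those pieces are essentially what the paper does in Propositions~\ref{lem-leading} and~\ref{prop-tb1}.

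There is, however, a genuine gap in your treatment of the third term. Your decomposition $\widehat{\boldsymbol{\Theta}}\boldsymbol{M}(\widetilde{\boldsymbol{\beta}}) + \boldsymbol{I} = \widehat{\boldsymbol{\Theta}}\{\boldsymbol{M}(\widetilde{\boldsymbol{\beta}}) + \widehat{\mathcal{V}}(\widehat{\boldsymbol{\beta}})\} + \{\boldsymbol{I} - \widehat{\boldsymbol{\Theta}}\widehat{\mathcal{V}}(\widehat{\boldsymbol{\beta}})\}$ is a legitimate alternative to the paper's route through $\boldsymbol{\Sigma}$, but either way you must control $\|\boldsymbol{M}(\widetilde{\boldsymbol{\beta}}) + \widehat{\mathcal{V}}(\widehat{\boldsymbol{\beta}})\|_\infty$. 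Your parenthetical ``Lipschitz control of the second derivative in $\boldsymbol{\beta}$ plus $\|\widehat{\boldsymbol{\beta}} - \boldsymbol{\beta}^o\|_1$ small'' only reduces $\boldsymbol{M}(\widetilde{\boldsymbol{\beta}})$ to $\ddot{\ell}(\boldsymbol{\beta}^o)$; it does \emph{not} establish that $-\ddot{\ell}(\boldsymbol{\beta}^o)$ is close to $\widehat{\mathcal{V}}(\boldsymbol{\beta}^o)$ or to $\boldsymbol{\Sigma}$. These two sample quantities differ structurally: $\widehat{\mathcal{V}}(\boldsymbol{\beta}^o)$ is an i.i.d.\ average (each summand uses only the covariate of the individual who fails), whereas $-\ddot{\ell}(\boldsymbol{\beta}^o)=\sum_i\int\{\boldsymbol{Z}_i-\bar{\boldsymbol{Z}}\}^{\otimes 2}w_i\,d\bar{N}$ averages over all at-risk individuals through the normalised weights $w_i$, destroying independence. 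Concentrating $-\ddot{\ell}(\boldsymbol{\beta}^o)$ around $\boldsymbol{\Sigma}$ is in fact the paper's hardest step (Proposition~\ref{prop-tb2}): after Doob--Meyer decomposition, the martingale part is handled by a stopping-time discretisation, bounding the conditional moments of the increments, recognising a sub-gamma tail, and then applying de~la~Pe\~na's decoupling inequality via a tangent sequence. None of this is hinted at in your proposal.

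Relatedly, you misidentify the main obstacle. The $\|\widehat{\mathcal{V}}(\widehat{\boldsymbol{\beta}}) - \boldsymbol{\Sigma}\|_\infty$ rate you flag is the content of Lemma~\ref{lem-aclime-lem1} and is comparatively routine (Hoeffding for the i.i.d.\ part, the Lipschitz bound~\eqref{Eq:ZbarDiff} for the plug-in error, and the truncation argument via Lemma~\ref{Lemma:Zbarmu}). The genuinely novel and delicate concentration is $\|\ddot{\ell}(\boldsymbol{\beta}^o)+\boldsymbol{\Sigma}\|_\infty = o_p(n^{-(1/2-\delta)})$, and your plan does not address how to obtain it.
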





It follows immediately from Theorem~\ref{thm-main} that for any $q \in (0,1)$, an asymptotic $(1-q)$-level confidence interval for $\boldsymbol{c}^{\top}\boldsymbol{\beta}^o$ is given by
\[
	\bigl[\boldsymbol{c}^{\top}\widehat{\boldsymbol{b}} - z_{q/2}n^{-1/2}(\boldsymbol{c}^{\top}\widehat{\boldsymbol{\Theta}}\boldsymbol{c})^{1/2}, \boldsymbol{c}^{\top}\widehat{\boldsymbol{b}} + z_{q/2}n^{-1/2}(\boldsymbol{c}^{\top}\widehat{\boldsymbol{\Theta}}\boldsymbol{c})^{1/2}\bigr],
\]
where $z_q$ is the $(1-q)$th quantile of the standard normal distribution.  In particular, for each $j=1,\ldots,p$, an asymptotic $(1-q)$-level confidence interval for $\beta_j^o$ is provided by
	\begin{equation}\label{eq-hatci}
	\bigl[\hat{b}_j - z_{q/2}n^{-1/2}(\widehat{\Theta}_{jj})^{1/2}, \hat{b}_j + z_{q/2}n^{-1/2}(\widehat{\Theta}_{jj})^{1/2}].
	\end{equation}

\subsection{Proof of Theorem~\ref{thm-main}}

The proof of Theorem~\ref{thm-main} contains three main steps: a) to provide properties of the initial estimator $\widehat{\boldsymbol{\beta}}$; b) to show the asymptotic normality of the first term in \eqref{Eq:1}; c) to show that the remainder terms in~\eqref{Eq:1} are negligible.  These steps are tackled in the following three subsections.  The final subsection completes the proof.

\subsubsection{The initial estimator}

The following lemma gives the required properties for the score function at $\boldsymbol{\beta}^o$ and the initial estimator.  The first result is proved in Lemma~3.3 of \citet{HuangEtal2013}, while the second combines Theorem~3.2 and Theorem~4.1 of the same paper.
\begin{lemma}\label{Lem:lasso}
\begin{enumerate}[(i)]
\item Assume~\textbf{(A1)(a)}.  Then for each $x > 0$,
	\begin{align*}
		\mathbb{P}\{\|\dot{\ell}(\boldsymbol{\beta}^o)\|_{\infty} > x\} \leq 2p e^{-nx^2/(8K_Z^2)}.
	\end{align*}

\item Assume~\textbf{(A1)(a)}, \textbf{(A3)(b)} and \textbf{(A4)(a)}, and take $\lambda \asymp n^{-1/2}\log^{1/2} (np)$ in~\eqref{eq-betahat}.  
Then
	\[
\|\widehat{\boldsymbol{\beta}} - \boldsymbol{\beta}^o\|_1 = O_p\biggl(\frac{d_o\log^{1/2}(np)}{n^{1/2}}\biggr).
	\]
\end{enumerate}
\end{lemma}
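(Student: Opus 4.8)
Both statements are exactly results of \citet{HuangEtal2013} --- part~(i) is their Lemma~3.3, and part~(ii) follows by combining their Theorems~3.2 and~4.1 --- so the plan is to recall how those arguments run.

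For part~(i), I would first rewrite the score at the truth as a martingale integral. Using the Doob--Meyer decomposition $dN_i(s) = dM_i(s) + \tilde{w}_i(s,\boldsymbol{\beta}^o)\lambda_0(s)\,ds$ and the fact that $\bar{\boldsymbol{Z}}(s,\boldsymbol{\beta}^o)$ is the $\tilde{w}_i(s,\boldsymbol{\beta}^o)$-weighted average of the $\boldsymbol{Z}_i(s)$, the compensator contribution $n^{-1}\int_{\mathcal{T}}\lambda_0(s)\sum_{i=1}^n\{\boldsymbol{Z}_i(s) - \bar{\boldsymbol{Z}}(s,\boldsymbol{\beta}^o)\}\tilde{w}_i(s,\boldsymbol{\beta}^o)\,ds$ vanishes identically, leaving
\[
\dot{\ell}(\boldsymbol{\beta}^o) = \frac{1}{n}\sum_{i=1}^n\int_{\mathcal{T}}\bigl\{\boldsymbol{Z}_i(s) - \bar{\boldsymbol{Z}}(s,\boldsymbol{\beta}^o)\bigr\}\,dM_i(s).
\]
Fixing a coordinate $j$, the process $t \mapsto U_j(t) := \sum_{i=1}^n\int_0^t\{Z_{ij}(s) - \bar{Z}_j(s,\boldsymbol{\beta}^o)\}\,dM_i(s)$ is then a mean-zero martingale with $n\dot{\ell}_j(\boldsymbol{\beta}^o) = U_j(t_+)$. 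Under \textbf{(A1)(a)} its jumps are bounded in modulus by $2K_Z$ (each $\bar{Z}_j(s,\boldsymbol{\beta}^o)$ being a convex combination of the $Z_{ij}(s)\in[-K_Z,K_Z]$), and since each $N_i$ contributes at most one jump there are at most $n$ of them; an exponential inequality for martingales with bounded jumps, applied to $\pm U_j$, then gives $\mathbb{P}(|U_j(t_+)| > nx) \le 2e^{-nx^2/(8K_Z^2)}$, and a union bound over $j = 1,\ldots,p$ finishes the proof.

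For part~(ii), I would run the usual Lasso oracle argument adapted to the concave partial likelihood. Choosing the constant in $\lambda \asymp n^{-1/2}\log^{1/2}(np)$ large enough, part~(i) ensures that $\mathcal{A} := \{\|\dot{\ell}(\boldsymbol{\beta}^o)\|_\infty \le \lambda/3\}$ has probability tending to $1$, and I would work on $\mathcal{A}$. The optimality inequality $-\ell(\widehat{\boldsymbol{\beta}}) + \lambda\|\widehat{\boldsymbol{\beta}}\|_1 \le -\ell(\boldsymbol{\beta}^o) + \lambda\|\boldsymbol{\beta}^o\|_1$, a second-order Taylor expansion of $\ell$ (with $\ddot{\ell}\preceq 0$), and $\|\dot{\ell}(\boldsymbol{\beta}^o)\|_\infty \le \lambda/3$ together yield the cone condition $\|\widehat{\boldsymbol{\delta}}_{\mathcal{N}}\|_1 \le 2\|\widehat{\boldsymbol{\delta}}_{\mathcal{S}}\|_1$ for $\widehat{\boldsymbol{\delta}} := \widehat{\boldsymbol{\beta}} - \boldsymbol{\beta}^o$, together with a bound $\tfrac12\widehat{\boldsymbol{\delta}}^\top\{-\ddot{\ell}(\bar{\boldsymbol{\beta}})\}\widehat{\boldsymbol{\delta}} \lesssim \lambda\|\widehat{\boldsymbol{\delta}}_{\mathcal{S}}\|_1$ for some $\bar{\boldsymbol{\beta}}$ on the segment between $\widehat{\boldsymbol{\beta}}$ and $\boldsymbol{\beta}^o$. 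Using $\|\boldsymbol{Z}_i(s)\|_\infty \le K_Z$ one shows that $\boldsymbol{v}^\top\{-\ddot{\ell}(\boldsymbol{\beta})\}\boldsymbol{v}$ changes only by a multiplicative factor $e^{O(K_Z\|\boldsymbol{\beta} - \boldsymbol{\beta}^o\|_1)}$ over a neighbourhood of $\boldsymbol{\beta}^o$, so a localisation (bootstrap) argument --- legitimate because $d_o\lambda \to 0$ by \textbf{(A3)(b)}, which forces $\|\widehat{\boldsymbol{\delta}}\|_1 = o(1)$ --- lets me replace $-\ddot{\ell}(\bar{\boldsymbol{\beta}})$ by $-\ddot{\ell}(\boldsymbol{\beta}^o)$ up to a constant factor. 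Then \textbf{(A4)(a)} gives $\widehat{\boldsymbol{\delta}}^\top\{-\ddot{\ell}(\boldsymbol{\beta}^o)\}\widehat{\boldsymbol{\delta}} \ge \kappa^2 d_o^{-1}\|\widehat{\boldsymbol{\delta}}_{\mathcal{S}}\|_1^2$, and cancelling one factor of $\|\widehat{\boldsymbol{\delta}}_{\mathcal{S}}\|_1$ leaves $\|\widehat{\boldsymbol{\delta}}_{\mathcal{S}}\|_1 \lesssim \lambda d_o/\kappa^2$, so that $\|\widehat{\boldsymbol{\delta}}\|_1 \le 3\|\widehat{\boldsymbol{\delta}}_{\mathcal{S}}\|_1 = O_p(\lambda d_o) = O_p\bigl(d_o\log^{1/2}(np)/n^{1/2}\bigr)$ since $1/\kappa = O_p(1)$.

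I expect the main obstacle in part~(i) to be obtaining an exponential tail with a \emph{deterministic} variance proxy: the predictable quadratic variation of $U_j$ involves the unknown $\lambda_0$ and is random, and the fix is to exploit that each counting process contributes at most a single jump of size at most $2K_Z$. In part~(ii), the main obstacle is that $\ell$ is not quadratic, so the curvature/compatibility bound cannot simply be invoked at $\boldsymbol{\beta}^o$; the localisation step that transfers curvature from $\bar{\boldsymbol{\beta}}$ to $\boldsymbol{\beta}^o$, and the sparsity condition \textbf{(A3)(b)} that makes it work, are the crux.
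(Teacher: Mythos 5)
Your proposal matches the paper's treatment: the paper proves this lemma simply by citing \citet{HuangEtal2013}, attributing part (i) to their Lemma~3.3 and part (ii) to the combination of their Theorems~3.2 and~4.1, exactly as you do. Your accompanying sketches of the martingale exponential-inequality argument for (i) and the compatibility-factor oracle argument with localisation for (ii) are faithful outlines of those cited proofs, so there is nothing substantively different to compare.
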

\textbf{Remark:} More generally, if we take a sequence $(a_n)$ diverging to infinity arbitarily slowly, and set $\lambda  \asymp n^{-1/2}\log^{1/2} (a_np)$ in~\eqref{eq-betahat}, then under the conditions of Lemma~\ref{Lem:lasso}(ii), we have $\|\widehat{\boldsymbol{\beta}} - \boldsymbol{\beta}^o\|_1 = O_p\bigl(\frac{d_o\log^{1/2}(a_np)}{n^{1/2}}\bigr)$.  In fact, if we further assume that $p = p_n \rightarrow \infty$ as $n \rightarrow \infty$, then we may take $\lambda = An^{-1/2}\log^{1/2} p$ in~\eqref{eq-betahat}, and for sufficiently large $A > 0$, conclude that $\|\widehat{\boldsymbol{\beta}} - \boldsymbol{\beta}^o\|_1 = O_p\bigl(\frac{d_o\log^{1/2} p}{n^{1/2}}\bigr)$. 
  
\bigskip

We now discuss \textbf{(A4)(a)} in greater depth.  For arbitrary finite $t^* \in \mathcal{T}$ and $M > 0$, let $C_1 := 1 + \Lambda_0(t^*)$, and let $C_2 := 2\Lambda_0(t^*)/r_*$, where $r_* := \mathbb{E}\bigl[Y(t^*)\min\{M,e^{\boldsymbol{\beta}^{o\top}\boldsymbol{Z}(t^*)}\}\bigr]$.  Further, let 
\[
\boldsymbol{\Sigma}(t^*;M) := \mathbb{E}\int_0^{t^*} \bigl\{\boldsymbol{Z}(s) - \boldsymbol{\mu}(s, \boldsymbol{\beta}^o;M)\bigr\}^{\otimes 2}Y(s) \min\{M,e^{\boldsymbol{\beta}^{o\top}\boldsymbol{Z}(t^*)}\}\lambda_0(s) \, ds,
\]
where
\[
\boldsymbol{\mu}(t, \boldsymbol{\beta}^o;M) := \frac{\mathbb{E}\bigl[\boldsymbol{Z}(t)Y(t)\min\{M,e^{\boldsymbol{\beta}^{o\top}\boldsymbol{Z}(t)}\}\bigr]}{\mathbb{E}\bigl[Y(t)\min\{M,e^{\boldsymbol{\beta}^{o\top}\boldsymbol{Z}(t)}\}\bigr]}.
\]
Write $\rho^*$ for the smallest eigenvalue of $\boldsymbol{\Sigma}(t^*;M)$, and let 
\[
t_{n,p,\epsilon} := \max\biggl\{\frac{4}{3n}\log\biggl(\frac{2.221p(p+1)}{\epsilon}\biggr) \, , \, \frac{2}{n^{1/2}}\log^{1/2}\biggl(\frac{2.221p(p+1)}{\epsilon}\biggr)\biggr\}.
\]
Then the proof of \citet[][Theorem~4.1]{HuangEtal2013} gives that for each $\epsilon \in (0,1/3)$,
\[
\mathbb{P}\biggl[\kappa < \rho^* - 36d_o K_Z^2\biggl\{\frac{2^{1/2}C_1}{n^{1/2}}\log^{1/2}\Bigl(\frac{p(p+1)}{\epsilon}\Bigr) + C_2 t_{n,p,\epsilon}^2\biggr\}\biggr] \leq 3\epsilon + e^{-nr_*^2/(8M^2)}.
\]
For fixed $t^*$ and $M$, it is natural to assume that both $\limsup_{n \rightarrow \infty} \max(C_1,C_2) < \infty$, and $\liminf_{n \rightarrow \infty} \min(\rho^*,r_*) > 0$.  In that case, under \textbf{(A3)(b)}, we have $\mathbb{P}(\kappa < \liminf_{n \rightarrow \infty} \rho^*/2) \leq 4\epsilon$ for sufficiently large $n$, so \textbf{(A4)(a)} holds. 

\subsubsection{The dominant term}

After rescaling by $n^{1/2}$, the leading term in~\eqref{Eq:1} is 
\[
n^{1/2}\boldsymbol{c}^{\top}\boldsymbol{\Sigma}^{-1}\dot{\ell}(\boldsymbol{\beta}^o) = \frac{1}{n^{1/2}}\sum_{i=1}^n \int_{\mathcal{T}}\boldsymbol{c}^{\top}\boldsymbol{\Sigma}^{-1} \bigl\{\boldsymbol{Z}_i(s) - \bar{\boldsymbol{Z}}(s, \boldsymbol{\beta}^o)\bigr\} \,dN_i(s).
\]
We will prove that its limiting distribution is Gaussian.

\begin{proposition}\label{lem-leading}
Assume \textbf{(A1)},~\textbf{(A2)}, \textbf{(A3)(a)} and \textbf{(A4)(c)}, and let $\boldsymbol{c}\in \mathbb{R}^p$ be such that $\|\boldsymbol{c}\|_1 = 1$ and $\boldsymbol{c}^{\top}\boldsymbol{\Sigma}^{-1}\boldsymbol{c} \rightarrow \nu^2 \in (0,\infty)$.  Then
\[
n^{1/2}\boldsymbol{c}^{\top}\boldsymbol{\Sigma}^{-1}\dot{\ell}(\boldsymbol{\beta}^o) \stackrel{d}{\to} \mathcal{N}(0, \nu^2),
\]
as $n\to \infty$.
\end{proposition}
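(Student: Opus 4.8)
The plan is to sidestep Rebolledo's functional martingale central limit theorem altogether and instead reduce $n^{1/2}\boldsymbol{c}^{\top}\boldsymbol{\Sigma}^{-1}\dot{\ell}(\boldsymbol{\beta}^o)$ to a normalised sum of i.i.d.\ terms, to which a Lyapunov central limit theorem for triangular arrays applies. Writing $M_i(t) := N_i(t) - \int_0^t \tilde{w}_i(s,\boldsymbol{\beta}^o)\lambda_0(s)\,ds$ for the counting-process martingale of individual $i$ (cf.~\eqref{Eq:DoobMeyer}), and using that $\sum_{i=1}^n\{\boldsymbol{Z}_i(s)-\bar{\boldsymbol{Z}}(s,\boldsymbol{\beta}^o)\}\tilde{w}_i(s,\boldsymbol{\beta}^o)\equiv 0$ by the definition of $\bar{\boldsymbol{Z}}$, one obtains the exact identity
\[
n^{1/2}\boldsymbol{c}^{\top}\boldsymbol{\Sigma}^{-1}\dot{\ell}(\boldsymbol{\beta}^o) = \frac{1}{n^{1/2}}\sum_{i=1}^n\int_{\mathcal{T}}\boldsymbol{c}^{\top}\boldsymbol{\Sigma}^{-1}\{\boldsymbol{Z}_i(s)-\bar{\boldsymbol{Z}}(s,\boldsymbol{\beta}^o)\}\,dM_i(s).
\]
I would then add and subtract the population weighted average $\boldsymbol{\mu}(s,\boldsymbol{\beta}^o)$ inside the braces and split the right-hand side as $A_n + R_n$, where $A_n := n^{-1/2}\sum_{i=1}^n\tilde{\xi}_i$ with $\tilde{\xi}_i := \int_{\mathcal{T}}\boldsymbol{c}^{\top}\boldsymbol{\Sigma}^{-1}\{\boldsymbol{Z}_i(s)-\boldsymbol{\mu}(s,\boldsymbol{\beta}^o)\}\,dM_i(s)$ and $R_n := -n^{-1/2}\sum_{i=1}^n\int_{\mathcal{T}}\boldsymbol{c}^{\top}\boldsymbol{\Sigma}^{-1}\{\bar{\boldsymbol{Z}}(s,\boldsymbol{\beta}^o)-\boldsymbol{\mu}(s,\boldsymbol{\beta}^o)\}\,dM_i(s)$. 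Because $\boldsymbol{\mu}(\cdot,\boldsymbol{\beta}^o)$, $\boldsymbol{c}$ and $\boldsymbol{\Sigma}^{-1}$ are deterministic, the $\tilde{\xi}_i$ are i.i.d.\ (a triangular array, since $p$ and $\boldsymbol{\Sigma}^{-1}$ vary with $n$), and each is a mean-zero martingale increment by predictability of the integrand.

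For the remainder, $t\mapsto -n^{-1/2}\sum_i\int_0^t\boldsymbol{c}^{\top}\boldsymbol{\Sigma}^{-1}\{\bar{\boldsymbol{Z}}(s,\boldsymbol{\beta}^o)-\boldsymbol{\mu}(s,\boldsymbol{\beta}^o)\}\,dM_i(s)$ is a square-integrable mean-zero martingale (its integrand is predictable and bounded by $2K_Z\|\boldsymbol{\Sigma}^{-1}\|_{\mathrm{op},1}$), with predictable variation at $t_+$ equal to $\int_{\mathcal{T}}\bigl(\boldsymbol{c}^{\top}\boldsymbol{\Sigma}^{-1}\{\bar{\boldsymbol{Z}}(s,\boldsymbol{\beta}^o)-\boldsymbol{\mu}(s,\boldsymbol{\beta}^o)\}\bigr)^2\bigl(n^{-1}\sum_{j=1}^n\tilde{w}_j(s,\boldsymbol{\beta}^o)\bigr)\lambda_0(s)\,ds$; by Lenglart's inequality (or directly via $\mathbb{E}[R_n^2]$) it suffices to show this is $o_p(1)$. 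Bounding the squared functional by $\|\boldsymbol{\Sigma}^{-1}\|_{\mathrm{op},1}^2\|\bar{\boldsymbol{Z}}(s,\boldsymbol{\beta}^o)-\boldsymbol{\mu}(s,\boldsymbol{\beta}^o)\|_{\infty}^2$ and noting $\mathbb{E}\int_{\mathcal{T}}n^{-1}\sum_j\tilde{w}_j(s,\boldsymbol{\beta}^o)\lambda_0(s)\,ds = \mathbb{P}(\delta=1)\le 1$, the task reduces to controlling $\bar{\boldsymbol{Z}}(\cdot,\boldsymbol{\beta}^o)-\boldsymbol{\mu}(\cdot,\boldsymbol{\beta}^o)$ over $\mathcal{T}$. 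Since near $t_+$ (possibly $+\infty$) the at-risk set is thin and this difference need not be small, I would split $\mathcal{T}$ at a growing level $\tau_n$: on $(\tau_n,t_+]$ use only $\|\bar{\boldsymbol{Z}}-\boldsymbol{\mu}\|_{\infty}\le 2K_Z$ together with $\mathbb{E}\int_{\tau_n}^{t_+}n^{-1}\sum_j\tilde{w}_j(s,\boldsymbol{\beta}^o)\lambda_0(s)\,ds = \mathbb{P}(\tau_n<T\le t_+,\delta=1)\le \tau_n^{-\alpha}\int_0^{t_+}t^{\alpha}f_T(t)\,dt$, controlled by \textbf{(A2)(b)}; on $[0,\tau_n]$ decompose $\bar{\boldsymbol{Z}}-\boldsymbol{\mu}$ into the fluctuations of the numerator and denominator of $\bar{\boldsymbol{Z}}$ about their expectations, divided by the denominator, and bound these uniformly via pointwise Hoeffding/Bernstein inequalities, \textbf{(A1)(a)}, a union bound over a fine grid, \textbf{(A1)(b)} for the oscillation between grid points, and \textbf{(A2)(a)} (the bound deteriorating as $\tau_n$ grows, because the denominator can become small there). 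Balancing the two contributions over $\tau_n$ produces a polynomial rate, of order $(\log(np)/n)^{1/3}$ up to the tail exponent in \textbf{(A2)(b)}, which \textbf{(A3)(a)} makes $o(n^{-(1/3-\delta_0)})$; hence $\langle R_n\rangle_{t_+} = O_p\bigl(\|\boldsymbol{\Sigma}^{-1}\|_{\mathrm{op},1}^2 n^{-(1/3-\delta_0)}\bigr) = o_p(1)$ by \textbf{(A4)(c)}, so $R_n = o_p(1)$. This truncation step — uniform control of $\bar{\boldsymbol{Z}}(\cdot,\boldsymbol{\beta}^o) - \boldsymbol{\mu}(\cdot,\boldsymbol{\beta}^o)$ over the possibly unbounded time set — is the genuine obstacle, and it is exactly what \textbf{(A2)(b)} and the $n^{-(1/3-\delta_0)}$ term in \textbf{(A4)(c)} are calibrated for.

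For the dominant term $A_n$ I would verify the Lyapunov condition for $\{\tilde{\xi}_i/n^{1/2}\}$. One has $\mathbb{E}(\tilde{\xi}_1)=0$, and computing the second moment through the predictable variation of $\int\boldsymbol{c}^{\top}\boldsymbol{\Sigma}^{-1}\{\boldsymbol{Z}(s)-\boldsymbol{\mu}(s,\boldsymbol{\beta}^o)\}\,dM(s)$ and recognising the resulting expectation as the second displayed expression for $\boldsymbol{\Sigma}$ gives $\mathrm{Var}(\tilde{\xi}_1) = \boldsymbol{c}^{\top}\boldsymbol{\Sigma}^{-1}\boldsymbol{\Sigma}\boldsymbol{\Sigma}^{-1}\boldsymbol{c} = \boldsymbol{c}^{\top}\boldsymbol{\Sigma}^{-1}\boldsymbol{c}\to\nu^2\in(0,\infty)$, hence $\mathrm{Var}(A_n)\to\nu^2$. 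Writing $\tilde{\xi}_1 = \boldsymbol{c}^{\top}\boldsymbol{\Sigma}^{-1}\bigl[\delta_1\{\boldsymbol{Z}_1(T_1)-\boldsymbol{\mu}(T_1,\boldsymbol{\beta}^o)\} - \int_0^{T_1}\{\boldsymbol{Z}_1(s)-\boldsymbol{\mu}(s,\boldsymbol{\beta}^o)\}e^{\boldsymbol{\beta}^{o\top}\boldsymbol{Z}_1(s)}\lambda_0(s)\,ds\bigr]$, assumption \textbf{(A1)(a)} gives $|\tilde{\xi}_1|\le 2K_Z\|\boldsymbol{\Sigma}^{-1}\|_{\mathrm{op},1}(1+\Lambda_1(T_1,\boldsymbol{\beta}^o))$; since the integrated hazard at the failure time is standard exponential and censoring only decreases it, $\Lambda_1(T_1,\boldsymbol{\beta}^o)$ is stochastically dominated by an $\mathrm{Exp}(1)$ variable, so $\mathbb{E}|\tilde{\xi}_1|^4\le C\|\boldsymbol{\Sigma}^{-1}\|_{\mathrm{op},1}^4$ for an absolute constant $C$. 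The Lyapunov ratio is then of order $\mathbb{E}|\tilde{\xi}_1|^4/n\lesssim\|\boldsymbol{\Sigma}^{-1}\|_{\mathrm{op},1}^4/n = o(1)$, because \textbf{(A4)(c)} forces $\|\boldsymbol{\Sigma}^{-1}\|_{\mathrm{op},1}^2 = o(n^{1/3-\delta_0})$. Thus $A_n\stackrel{d}{\to}\mathcal{N}(0,\nu^2)$, and Slutsky's theorem together with $R_n = o_p(1)$ completes the proof. Everything here apart from the truncation argument — the martingale identity, the variance computation, and the exponential-tail moment bound — is routine.
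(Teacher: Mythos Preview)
Your proposal is correct and follows essentially the same route as the paper: the same martingale identity, the same split into an i.i.d.\ sum $A_n$ plus a remainder $R_n$, a triangular-array CLT for $A_n$ (the paper invokes Lindeberg--Feller rather than your Lyapunov fourth-moment check, but the verification is equivalent), and a second-moment bound for $R_n$ via truncation of $\mathcal{T}$ combined with uniform control of $\bar{\boldsymbol{Z}}-\boldsymbol{\mu}$ (the content of the paper's Lemma~\ref{Lemma:Zbarmu}, which your grid-plus-Lipschitz sketch reproduces). The only differences are cosmetic: the paper truncates at $t_*=F_T^{-1}(1-n^{-1/2})$ and bounds the tail contribution via the exponential distribution of the integrated conditional hazard rather than Markov's inequality with \textbf{(A2)(b)}, thereby obtaining the sharper rate $o(n^{-(1/2-\delta)})$ for the integral instead of your $n^{-(1/3-\delta_0)}$, but either suffices under \textbf{(A4)(c)}.
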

\begin{proof}
Writing $M_i$ for the mean-zero martingale in the Doob--Meyer decomposition of $N_i$ (cf.~\eqref{Eq:DoobMeyer}), we have
\begin{align*}
n^{1/2}\boldsymbol{c}^{\top}\boldsymbol{\Sigma}^{-1}\dot{\ell}(\boldsymbol{\beta}^o) &= \frac{1}{n^{1/2}}\sum_{i=1}^n \int_{\mathcal{T}} \boldsymbol{c}^{\top}\boldsymbol{\Sigma}^{-1}\bigl\{\boldsymbol{Z}_i(s) - \bar{\boldsymbol{Z}}(s, \boldsymbol{\beta}^o)\bigr\} \, dN_i(s) \\
&= \frac{1}{n^{1/2}}\sum_{i=1}^n \int_{\mathcal{T}} \boldsymbol{c}^{\top}\boldsymbol{\Sigma}^{-1}\bigl\{\boldsymbol{Z}_i(s) - \bar{\boldsymbol{Z}}(s, \boldsymbol{\beta}^o)\bigr\} \, dM_i(s) \\
&= \frac{1}{n^{1/2}}\sum_{i=1}^n \int_{\mathcal{T}} \boldsymbol{c}^{\top}\boldsymbol{\Sigma}^{-1}\bigl\{\boldsymbol{Z}_i(s) - \boldsymbol{\mu}(s,\boldsymbol{\beta}^o)\bigr\} \, dM_i(s) \\
&\hspace{3cm}- \frac{1}{n^{1/2}}\sum_{i=1}^n \int_{\mathcal{T}} \boldsymbol{c}^{\top}\boldsymbol{\Sigma}^{-1}\bigl\{\bar{\boldsymbol{Z}}(s, \boldsymbol{\beta}^o) - \boldsymbol{\mu}(s,\boldsymbol{\beta}^o)\bigr\} \, dM_i(s) \\
&=: \frac{1}{n^{1/2}}\sum_{i=1}^n U_{ni} - \frac{1}{n^{1/2}}\sum_{i=1}^n V_{ni},
\end{align*}
say.  Now, for each $n \in \mathbb{N}$, we have that $U_{n1},\ldots,U_{nn}$ are independent and identically distributed, with $\mathbb{E}(U_{n1}) = 0$ and $\mathrm{Var}(U_{n1}) = \boldsymbol{c}^{\top}\boldsymbol{\Sigma}^{-1}\boldsymbol{c}$.  Moreover, for every $\epsilon > 0$,
\begin{align*}
\frac{1}{n\boldsymbol{c}^{\top}\boldsymbol{\Sigma}^{-1}\boldsymbol{c}}\sum_{i=1}^n \mathbb{E}&\bigl(U_{ni}^2\mathbbm{1}_{\{|U_{ni}| > \epsilon n^{1/2}(\boldsymbol{c}^{\top}\boldsymbol{\Sigma}^{-1}\boldsymbol{c})^{1/2}\}}\bigr) \\
&= \frac{1}{\boldsymbol{c}^{\top}\boldsymbol{\Sigma}^{-1}\boldsymbol{c}}\mathbb{E}\bigl(U_{n1}^2\mathbbm{1}_{\{|U_{n1}| > \epsilon n^{1/2}(\boldsymbol{c}^{\top}\boldsymbol{\Sigma}^{-1}\boldsymbol{c})^{1/2}\}}\bigr) \rightarrow 0
\end{align*}
as $n \rightarrow \infty$.  It follows by the Lindeberg--Feller central limit theorem \citep[e.g.][Theorem~7.2.1]{Gut2005} that 
\[
\frac{1}{n^{1/2}}\sum_{i=1}^n U_{ni} \stackrel{d}{\rightarrow} \mathcal{N}(0,\nu^2).
\]
Next, we observe that $V_{n1},\ldots,V_{nn}$ are exchangeable, with $\mathbb{E}(V_{n1}) = 0$.  Moreover, by, e.g., \citet[][pp.~74--75]{ABGK1993},
\begin{align}
\label{Eq:VarV}
\mathrm{Var}\biggl(\frac{1}{n^{1/2}}\sum_{i=1}^n V_{ni}\biggr) &= \mathrm{Var}(V_{n1}) \nonumber \\
&= \boldsymbol{c}^{\top}\boldsymbol{\Sigma}^{-1}\mathbb{E}\biggl(\int_0^{t_+} \bigl\{\bar{\boldsymbol{Z}}(s,\boldsymbol{\beta}^o) - \boldsymbol{\mu}(s,\boldsymbol{\beta}^o)\bigr\}^{\otimes 2} \tilde{w}_1(s,\boldsymbol{\beta}^o)\lambda_0(s) \, ds\biggr)\boldsymbol{\Sigma}^{-1}\boldsymbol{c}.
\end{align}
Now write $t_* := F_T^{-1}(1-n^{-1/2})$ and $\boldsymbol{\mathcal{Z}}_1 := \{\boldsymbol{Z}_1(t):t \in \mathcal{T}\}$ and let $S_1 := -\log \bar{F}_{\tilde{T}_1|\boldsymbol{\mathcal{Z}}_1}(T_1) \leq - \log \bar{F}_{T_1|\boldsymbol{\mathcal{Z}}_1}(T_1) =: Q_1$, say, where $Q_1|\boldsymbol{\mathcal{Z}}_1 \sim \mathrm{Exp}(1)$.  Then
\begin{align}
\label{Eq:InftyBound}
\mathbb{E}\biggl(&\int_0^{t_+} \bigl\|\bar{\boldsymbol{Z}}(s,\boldsymbol{\beta}^o) - \boldsymbol{\mu}(s,\boldsymbol{\beta}^o)\bigr\|_\infty^2 \tilde{w}_1(s,\boldsymbol{\beta}^o)\lambda_0(s) \, ds\biggr) \nonumber \\
&\leq \mathbb{E}\biggl\{\sup_{s \in [0,t_*)} \bigl\|\bar{\boldsymbol{Z}}(s,\boldsymbol{\beta}^o) - \boldsymbol{\mu}(s,\boldsymbol{\beta}^o)\bigr\|_\infty^2 \mathbb{E}(S_1|\boldsymbol{\mathcal{Z}}_1)\biggr\} + 4K_Z^2 \mathbb{E}\biggl(\int_{t_*}^{T_1} e^{\boldsymbol{\beta}^{o\top}\boldsymbol{Z}_1(s)} \lambda_0(s) \, ds \, \mathbbm{1}_{\{T_1 \geq t_*\}}\biggr) \nonumber \\
&\leq 2^{1/2} \biggl[\mathbb{E}\biggl\{\sup_{s \in [0,t_*)} \bigl\|\bar{\boldsymbol{Z}}(s,\boldsymbol{\beta}^o) - \boldsymbol{\mu}(s,\boldsymbol{\beta}^o)\bigr\|_\infty^4\biggr\}\biggr]^{1/2} + 4K_Z^2\mathbb{E}\bigl\{\mathbb{E}\bigl(S_1\mathbbm{1}_{\{T_1 \geq t_*\}}\bigm|\boldsymbol{\mathcal{Z}}_1\bigr)\bigr\}.
\end{align}
First note that
\begin{align}
\label{Eq:S1Bound}
  \mathbb{E}\bigl\{\mathbb{E}\bigl(S_1\mathbbm{1}_{\{T_1 \geq t_*\}}\bigm|\boldsymbol{\mathcal{Z}}_1\bigr)\bigr\} &\leq \mathbb{E}\bigl\{\mathbb{E}\bigl(Q_1\mathbbm{1}_{\{Q_1 \geq -\log \bar{F}_{T_1|\boldsymbol{\mathcal{Z}}_1}(t_*)\}}\bigm|\boldsymbol{\mathcal{Z}}_1\bigr)\bigr\} \nonumber \\
&= \mathbb{E}\bigl[\bigl\{1 - \log \bar{F}_{T_1|\boldsymbol{\mathcal{Z}}_1}(t_*)\bigr\}\bar{F}_{T_1|\boldsymbol{\mathcal{Z}}_1}(t_*)\bigr] \nonumber \\
                                                                                                                  &\leq \mathbb{E}\bar{F}_{T_1|\boldsymbol{\mathcal{Z}}_1}(t_*) - \mathbb{E}\bar{F}_{T_1|\boldsymbol{\mathcal{Z}}_1}(t_*)\log \mathbb{E}\bar{F}_{T_1|\boldsymbol{\mathcal{Z}}_1}(t_*) \nonumber \\
  &= \frac{1}{n^{1/2}} + \frac{\log n}{2n^{1/2}},                                                                                                                
\end{align}
where the second inequality follows by Jensen's inequality.

Now let $C = C(\|\boldsymbol{\beta}^o\|_1) := 1152^{1/2}K_Z\exp(2\|\boldsymbol{\beta}^o\|_1K_Z)$ and choose $\epsilon_{n,*} > 0$ such that
\[
8(p+1)n^{\max\{3,1 + 1/(2\alpha)\}}\exp\biggl(-\frac{n^{1/2}\epsilon_{n,*}^2}{2C^2}\biggr) = 1,
\]
where $\alpha > 0$ is taken from~\textbf{(A2)(b)}.  Thus, by~\textbf{(A3)(a)}, we have $\epsilon_{n,*} = o(n^{-1/4 + \delta})$ for every $\delta > 0$ and $n^{1/4}\epsilon_{n,*} \rightarrow \infty$.  Observe that 
\[
\frac{1}{n^{1/2}} = \int_{t_*}^\infty dF_T(t) \leq \frac{1}{t_*^\alpha} \int_0^\infty t^\alpha \, dF_T(t),
\]
so by~\textbf{(A2)(b)}, $t_* = O(n^{1/(2\alpha)})$.  Noting the definition of $h_0 = h_0(n,\epsilon)$ in Lemma~\ref{Lemma:Zbarmu} in the Appendix, we choose $n_0 \in \mathbb{N}$ large enough that the following conditions hold for $n \geq n_0$:
\begin{enumerate}
\item $\epsilon_{n,*} \leq 2K_Z$
\item $n\bar{F}_T(t_*) - n^{1/2}(\log n)\bar{F}_T^{1/2}(t_*) = n^{1/2} - n^{1/4}\log n \geq n^{1/2}/2$
\item $\epsilon_{n,*} - \frac{6K_Ze^{\|\boldsymbol{\beta}^o\|_1K_Z}}{n^{1/2}} \geq \epsilon_{n,*}/2$
\item $1 + t_*/h_0(n,\epsilon_{n,*}) \leq n^{1 + 1/(2\alpha)}$.
\end{enumerate}
It follows that $M_0^* := M_0(n,\epsilon_{n,*})$, defined in Lemma~\ref{Lemma:Zbarmu}, satisfies $M_0^* \leq 2n^{\max\{3,1 + 1/(2\alpha)\}}$ for $n \geq n_0$.  Write 
\[
g(n) := \frac{1}{2n} + e^{-(\log^2 n)/2} + 4n^{\max\{3,1 + 1/(2\alpha)\}} \exp\biggl\{-\frac{3n^{1/2}}{28\exp(4\|\boldsymbol{\beta}^o\|_1K_Z)}\biggr\}.
\]
Then, by Lemma~\ref{Lemma:Zbarmu}, for $n \geq n_0$,
\begin{align}
\label{Eq:InftyBound2}
\mathbb{E}\biggl\{\sup_{s \in [0,t_*)} &\bigl\|\bar{\boldsymbol{Z}}(s,\boldsymbol{\beta}^o) - \boldsymbol{\mu}(s,\boldsymbol{\beta}^o)\bigr\|_\infty^4\biggr\} = \int_0^{16K_Z^4} \mathbb{P}\biggl(\sup_{s \in [0,t_*)} \bigl\|\bar{\boldsymbol{Z}}(s,\boldsymbol{\beta}^o) - \boldsymbol{\mu}(s,\boldsymbol{\beta}^o)\bigr\|_\infty^4 > \delta\biggr) \, d\delta \nonumber \\
&\leq \epsilon_{n,*}^4 + 4\int_{\epsilon_{n,*}}^{2K_Z} \epsilon^3 \mathbb{P}\biggl(\sup_{s \in [0,t_*)} \bigl\|\bar{\boldsymbol{Z}}(s,\boldsymbol{\beta}^o) - \boldsymbol{\mu}(s,\boldsymbol{\beta}^o)\bigr\|_\infty > \epsilon\biggr) \, d\epsilon \nonumber \\
&\leq \epsilon_{n,*}^4 + 8(p+1)M_0^*\int_{\epsilon_{n,*}}^{\infty} \epsilon^3 \exp\biggl(-\frac{n^{1/2}\epsilon^2}{2C^2}\biggr) \, d\epsilon + 16K_Z^4 g(n) \nonumber \\
&= \epsilon_{n,*}^4 + \frac{8(p+1)M_0^*C^4}{n} \int_{n^{1/4}\epsilon_{n,*}/C}^{\infty} t^3 e^{-t^2/2} \, dt + 16K_Z^4 g(n) \nonumber \\
&\leq \epsilon_{n,*}^4 + \frac{2C^4\bigl\{\log\bigl(16(p+1)n^{\max\{3,1 + 1/(2\alpha)\}}\bigr)+1\bigr\}}{n} + 16K_Z^4 g(n) = o(n^{-(1-\delta)}),
\end{align}
for every $\delta > 0$.  From~\eqref{Eq:VarV},~\eqref{Eq:InftyBound},~\eqref{Eq:S1Bound} and~\eqref{Eq:InftyBound2} and \textbf{(A4)(c)}, we deduce that
\[
\mathrm{Var}\biggl(\frac{1}{n^{1/2}}\sum_{i=1}^n V_{ni}\biggr) \leq \|\boldsymbol{\Sigma}^{-1}\|_{\mathrm{op},1}^2\mathbb{E}\biggl(\int_0^{t_+} \bigl\|\bar{\boldsymbol{Z}}(s,\boldsymbol{\beta}^o) - \boldsymbol{\mu}(s,\boldsymbol{\beta}^o)\bigr\|_\infty^2 \tilde{w}_1(s,\boldsymbol{\beta}^o)\lambda_0(s) \, ds\biggr) \rightarrow 0,
\]
as required.
\end{proof}

\subsubsection{The remainder terms}

The two remainder terms in~\eqref{Eq:1} are controlled in Propositions~\ref{prop-tb1} and~\ref{prop-tb2} below respectively.

\begin{proposition}\label{prop-tb1}
Assume~\textbf{(A1)},~\textbf{(A2)(a)},~\textbf{(A3)(b)},~\textbf{(A4)(a)} and~\textbf{(A4)(c)}.  For $\widehat{\boldsymbol{\beta}}$ in~\eqref{eq-betahat}, let $\lambda \asymp n^{-1/2}\log^{1/2} (np)$, and for $\widehat{\boldsymbol{\Theta}}$ in~\eqref{eq-ThetaHat}, let 
\[
\lambda_n \asymp \biggl\{\max\biggl(\|\boldsymbol{\Sigma}^{-1}\|_{\mathrm{op},1}\frac{d_o\log (np)}{n^{1/2}} \, , \, \|\boldsymbol{\Sigma}^{-1}\|_{\mathrm{op},1}n^{-(1/3-\delta_0)}\biggr)\biggr\}.
\]
Then for $\boldsymbol{c}\in \mathbb{R}^p$ with $\|\boldsymbol{c}\|_1 = 1$, we have
	\[
	\boldsymbol{c}^{\top}\bigl(\widehat{\boldsymbol{\Theta}} - \boldsymbol{\Sigma}^{-1}\bigr)\dot{\ell}(\boldsymbol{\beta}^o) = o_p(n^{-1/2}).
	\]
\end{proposition}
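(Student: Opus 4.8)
The plan is to apply Hölder's inequality, writing
\[
\bigl|\boldsymbol{c}^{\top}(\widehat{\boldsymbol{\Theta}} - \boldsymbol{\Sigma}^{-1})\dot{\ell}(\boldsymbol{\beta}^o)\bigr| \leq \bigl\|\boldsymbol{c}^{\top}(\widehat{\boldsymbol{\Theta}} - \boldsymbol{\Sigma}^{-1})\bigr\|_1 \, \bigl\|\dot{\ell}(\boldsymbol{\beta}^o)\bigr\|_\infty,
\]
and controlling the two factors separately. For the second factor, Lemma~\ref{Lem:lasso}(i) with $x \asymp n^{-1/2}\log^{1/2}(np)$ (and a large enough constant) gives $\|\dot{\ell}(\boldsymbol{\beta}^o)\|_\infty = O_p\bigl(n^{-1/2}\log^{1/2}(np)\bigr)$, since $2p\,e^{-nx^2/(8K_Z^2)} \leq 2(np)^{1-c} \to 0$ for $c>1$. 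So it suffices to show $\|\boldsymbol{c}^{\top}(\widehat{\boldsymbol{\Theta}} - \boldsymbol{\Sigma}^{-1})\|_1 = o_p\bigl(\log^{-1/2}(np)\bigr)$. Because $\|\boldsymbol{c}\|_1 = 1$, one has $\|\boldsymbol{c}^{\top}A\|_1 \leq \|\boldsymbol{c}\|_1 \max_j \|\text{(row $j$ of $A$)}\|_1$, so writing $\boldsymbol{\omega}_j$ for the $j$th column of the symmetric matrix $\boldsymbol{\Sigma}^{-1}$, it is enough to bound $\max_{j}\|\widehat{\boldsymbol{\Theta}}_j - \boldsymbol{\omega}_j\|_1$.

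For this I would run the standard CLIME-type argument of \citet{CaiEtal2011}, but with $\widehat{\mathcal{V}}(\widehat{\boldsymbol{\beta}})$ from~\eqref{eq-hatV} playing the role of the sample covariance. Set $\Delta_n := \|\widehat{\mathcal{V}}(\widehat{\boldsymbol{\beta}}) - \boldsymbol{\Sigma}\|_\infty$ and work on the event $\mathcal{E}_n := \{\Delta_n\|\boldsymbol{\Sigma}^{-1}\|_{\mathrm{op},1} \leq \lambda_n\}$. On $\mathcal{E}_n$: (i) since $\boldsymbol{\Sigma}\boldsymbol{\omega}_j = \boldsymbol{e}_j$, we have $\|\widehat{\mathcal{V}}(\widehat{\boldsymbol{\beta}})\boldsymbol{\omega}_j - \boldsymbol{e}_j\|_\infty = \|(\widehat{\mathcal{V}}(\widehat{\boldsymbol{\beta}}) - \boldsymbol{\Sigma})\boldsymbol{\omega}_j\|_\infty \leq \Delta_n\|\boldsymbol{\omega}_j\|_1 \leq \lambda_n$, so $\boldsymbol{\omega}_j$ is feasible in~\eqref{eq-ThetaHat} and hence $\|\widehat{\boldsymbol{\Theta}}_j\|_1 \leq \|\boldsymbol{\omega}_j\|_1 \leq \|\boldsymbol{\Sigma}^{-1}\|_{\mathrm{op},1}$; (ii) the triangle inequality gives $\|\widehat{\mathcal{V}}(\widehat{\boldsymbol{\beta}})(\widehat{\boldsymbol{\Theta}}_j - \boldsymbol{\omega}_j)\|_\infty \leq 2\lambda_n$, whence $\|\boldsymbol{\Sigma}(\widehat{\boldsymbol{\Theta}}_j - \boldsymbol{\omega}_j)\|_\infty \leq 2\lambda_n + \Delta_n\|\widehat{\boldsymbol{\Theta}}_j - \boldsymbol{\omega}_j\|_1 \leq 2\lambda_n + 2\Delta_n\|\boldsymbol{\Sigma}^{-1}\|_{\mathrm{op},1}$; (iii) multiplying by $\boldsymbol{\Sigma}^{-1}$ and using $\|\boldsymbol{\Sigma}^{-1}\|_{\mathrm{op},\infty} = \|\boldsymbol{\Sigma}^{-1}\|_{\mathrm{op},1}$ yields $\|\widehat{\boldsymbol{\Theta}}_j - \boldsymbol{\omega}_j\|_\infty \leq \|\boldsymbol{\Sigma}^{-1}\|_{\mathrm{op},1}\bigl(2\lambda_n + 2\Delta_n\|\boldsymbol{\Sigma}^{-1}\|_{\mathrm{op},1}\bigr)$; (iv) the $\ell_1$-optimality $\|\widehat{\boldsymbol{\Theta}}_j\|_1 \leq \|\boldsymbol{\omega}_j\|_1$ forces the error onto $\mathrm{supp}(\boldsymbol{\omega}_j)$, which has cardinality $r_j$, so $\|\widehat{\boldsymbol{\Theta}}_j - \boldsymbol{\omega}_j\|_1 \leq 2 r_j \|\widehat{\boldsymbol{\Theta}}_j - \boldsymbol{\omega}_j\|_\infty$. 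Combining, on $\mathcal{E}_n$,
\[
\max_{j}\|\widehat{\boldsymbol{\Theta}}_j - \boldsymbol{\omega}_j\|_1 \leq 4\Bigl(\max_{j}r_j\Bigr)\|\boldsymbol{\Sigma}^{-1}\|_{\mathrm{op},1}\bigl(\lambda_n + \Delta_n\|\boldsymbol{\Sigma}^{-1}\|_{\mathrm{op},1}\bigr).
\]
With the prescribed $\lambda_n \asymp \|\boldsymbol{\Sigma}^{-1}\|_{\mathrm{op},1}\max\{d_o\log(np)/n^{1/2}, n^{-(1/3-\delta_0)}\}$ and the deviation bound $\Delta_n = O_p\bigl(\max\{d_o\log(np)/n^{1/2}, n^{-(1/3-\delta_0)}\}\bigr)$ (so that $\mathbb{P}(\mathcal{E}_n) \to 1$), the right-hand side is $O_p\bigl((\max_j r_j)\|\boldsymbol{\Sigma}^{-1}\|_{\mathrm{op},1}^2\max\{d_o\log(np)/n^{1/2}, n^{-(1/3-\delta_0)}\}\bigr)$, which is exactly $o_p\bigl(\log^{-1/2}(np)\bigr)$ by~\textbf{(A4)(c)}. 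Multiplying by $\|\dot{\ell}(\boldsymbol{\beta}^o)\|_\infty = O_p(n^{-1/2}\log^{1/2}(np))$ then gives $o_p(n^{-1/2})$.

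The main obstacle is the entrywise deviation bound $\|\widehat{\mathcal{V}}(\widehat{\boldsymbol{\beta}}) - \boldsymbol{\Sigma}\|_\infty = O_p\bigl(\max\{d_o\log(np)/n^{1/2}, n^{-(1/3-\delta_0)}\}\bigr)$, which is exactly where the Cox structure causes trouble: $\widehat{\mathcal{V}}(\widehat{\boldsymbol{\beta}})$ contains $\bar{\boldsymbol{Z}}(T_i,\widehat{\boldsymbol{\beta}})$, which couples all $n$ observations and plugs in the random $\widehat{\boldsymbol{\beta}}$, so there is no row independence. I would split this into: (a) replacing $\widehat{\boldsymbol{\beta}}$ by $\boldsymbol{\beta}^o$, using the rate $\|\widehat{\boldsymbol{\beta}}-\boldsymbol{\beta}^o\|_1 = O_p(d_o\log^{1/2}(np)/n^{1/2})$ from Lemma~\ref{Lem:lasso}(ii) together with Lipschitz control of $\boldsymbol{\beta} \mapsto \bar{\boldsymbol{Z}}(s,\boldsymbol{\beta})$ and boundedness from~\textbf{(A1)(a)} (this produces the $d_o\log(np)/n^{1/2}$ term); (b) replacing $\bar{\boldsymbol{Z}}(\cdot,\boldsymbol{\beta}^o)$ by the population $\boldsymbol{\mu}(\cdot,\boldsymbol{\beta}^o)$ uniformly in $s$, which — because $t_+$ may be infinite — needs the truncation at $t_* = F_T^{-1}(1-n^{-1/2})$ together with the auxiliary uniform concentration bound for $\sup_{s<t_*}\|\bar{\boldsymbol{Z}}(s,\boldsymbol{\beta}^o)-\boldsymbol{\mu}(s,\boldsymbol{\beta}^o)\|_\infty$ (of the type used in Proposition~\ref{lem-leading}), contributing the $n^{-(1/3-\delta_0)}$ term via~\textbf{(A2)}; and (c) an entrywise Bernstein bound plus a union bound over the $O(p^2)$ entries for the remaining i.i.d.\ average $n^{-1}\sum_i \delta_i\{\boldsymbol{Z}_i(T_i)-\boldsymbol{\mu}(T_i,\boldsymbol{\beta}^o)\}^{\otimes 2}$ around $\boldsymbol{\Sigma}$, which is affordable under~\textbf{(A3)(a)} since $\log p = o(n^a)$ for every $a>0$.
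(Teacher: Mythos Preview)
Your argument is essentially the paper's own proof: the paper bounds $|\boldsymbol{c}^{\top}(\widehat{\boldsymbol{\Theta}} - \boldsymbol{\Sigma}^{-1})\dot{\ell}(\boldsymbol{\beta}^o)| \leq \|\widehat{\boldsymbol{\Theta}} - \boldsymbol{\Sigma}^{-1}\|_{\mathrm{op},\infty}\|\dot{\ell}(\boldsymbol{\beta}^o)\|_\infty$, controls the score via Lemma~\ref{Lem:lasso}(i), and uses exactly the CLIME feasibility-plus-$\ell_1$-minimality argument you give (packaged as Lemma~\ref{lem-aclime-lem-7}) to turn an entrywise bound $\|\widehat{\boldsymbol{\Theta}}-\boldsymbol{\Sigma}^{-1}\|_\infty \leq 2\lambda_n\|\boldsymbol{\Sigma}^{-1}\|_{\mathrm{op},1}$ into an operator-$\infty$ bound with the factor $\max_j r_j$. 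Your three-way split (a)--(c) for $\|\widehat{\mathcal{V}}(\widehat{\boldsymbol{\beta}})-\boldsymbol{\Sigma}\|_\infty$ is precisely the content of the paper's Lemma~\ref{lem-aclime-lem1}.

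One small correction: for step (b) you propose truncating at $t_* = F_T^{-1}(1-n^{-1/2})$, borrowed from Proposition~\ref{lem-leading}. That level is wrong here. With $\bar{F}_T(t_*)=n^{-\gamma}$ the tail contribution is $O_p(n^{-\gamma})$ while the uniform concentration of $\sup_{s<t_*}\|\bar{\boldsymbol{Z}}-\boldsymbol{\mu}\|_\infty$ (Lemma~\ref{Lemma:Zbarmu}) is governed by $n\bar{F}_T(t_*)=n^{1-\gamma}$ in the exponent and so delivers $O_p(n^{-(1-\gamma)/2+o(1)})$; these balance at $\gamma=1/3$, which is why the paper (in Lemma~\ref{lem-aclime-lem1}) takes $t_*=F_T^{-1}(1-n^{-(1/3-\eta)})$ and obtains the $n^{-(1/3-\delta_0)}$ rate you quote. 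Your choice $\gamma=1/2$ would give only $n^{-1/4+o(1)}$, which \textbf{(A4)(c)} does not cover. With that adjustment your proof matches the paper's.
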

\begin{proof}
Define the event $\mathcal{A} := \{\|\boldsymbol{\Sigma}^{-1}\widehat{\mathcal{V}}(\widehat{\boldsymbol{\beta}}) - \boldsymbol{I}\|_{\infty} \leq \lambda_n\} = \{\|\widehat{\mathcal{V}}(\widehat{\boldsymbol{\beta}})\boldsymbol{\Sigma}^{-1} - \boldsymbol{I}\|_{\infty} \leq \lambda_n\}$.  Then by construction of $\widehat{\boldsymbol{\Theta}} = (\widehat{\boldsymbol{\Theta}}_1,\ldots,\widehat{\boldsymbol{\Theta}}_p)^\top$ as an estimator of $\boldsymbol{\Sigma}^{-1} = \bigl((\boldsymbol{\Sigma}^{-1})_1,\ldots,(\boldsymbol{\Sigma}^{-1})_p\bigr)^\top$, on the event $\mathcal{A}$, we have
	\[
	\|\widehat{\boldsymbol{\Theta}}_j\|_1 \leq \|(\boldsymbol{\Sigma}^{-1})_j\|_1, \quad j=1,\ldots,p,
	\]
so in particular, $\|\widehat{\boldsymbol{\Theta}}\|_{\mathrm{op},\infty} \leq \|\boldsymbol{\Sigma}^{-1}\|_{\mathrm{op},\infty}$,	and
	\[
	\bigl\|\widehat{\boldsymbol{\Theta}}\widehat{\mathcal{V}}(\widehat{\boldsymbol{\beta}}) - \boldsymbol{I}\bigr\|_{\infty} \leq \lambda_n.
	\]
	Hence, using the fact that $\boldsymbol{\Sigma}$ and $\widehat{\mathcal{V}}(\widehat{\boldsymbol{\beta}})$ are symmetric, on the event $\mathcal{A}$,
	\begin{align}
 \|\widehat{\boldsymbol{\Theta}} - \boldsymbol{\Sigma}^{-1}\|_{\infty} &= \bigl\|\widehat{\boldsymbol{\Theta}}\bigl(\boldsymbol{I} - \widehat{\mathcal{V}}(\widehat{\boldsymbol{\beta}})\boldsymbol{\Sigma}^{-1}\bigr) + \bigl(\widehat{\boldsymbol{\Theta}}\widehat{\mathcal{V}}(\widehat{\boldsymbol{\beta}}) - \boldsymbol{I}\bigr)\boldsymbol{\Sigma}^{-1}\bigr\|_{\infty} \nonumber\\
	&\leq  \|\widehat{\boldsymbol{\Theta}}\|_{\mathrm{op},\infty}\|\widehat{\mathcal{V}}(\widehat{\boldsymbol{\beta}})\boldsymbol{\Sigma}^{-1} - \boldsymbol{I}\|_{\infty} +  \|\boldsymbol{\Sigma}^{-1}\|_{\mathrm{op},1}\|\widehat{\boldsymbol{\Theta}}\widehat{\mathcal{V}}(\widehat{\boldsymbol{\beta}}) - \boldsymbol{I}\|_\infty \leq 2\lambda_n\|\boldsymbol{\Sigma}^{-1}\|_{\mathrm{op},1}. \label{eq-prop2-im}
	\end{align}
Hence, from Lemma~\ref{lem-aclime-lem-7}, on the event $\mathcal{A}$,
\begin{align*}
	|\boldsymbol{c}^{\top}(\widehat{\boldsymbol{\Theta}} - \boldsymbol{\Sigma}^{-1})\dot{\ell}(\boldsymbol{\beta}^o)| &\leq \|\widehat{\boldsymbol{\Theta}} - \boldsymbol{\Sigma}^{-1}\|_{\mathrm{op},\infty}\|\dot{\ell}(\boldsymbol{\beta}^o)\|_{\infty} \\
	 &\leq 24\lambda_n\|\boldsymbol{\Sigma}^{-1}\|_{\mathrm{op},1}\|\dot{\ell}(\boldsymbol{\beta}^o)\|_{\infty}\max_{j=1,\ldots,p} r_j
\end{align*}
The conclusion therefore follows from Lemmas~\ref{Lem:lasso}(i) and~\ref{lem-aclime-lem1}, together with~\textbf{(A4)(c)}. 
\end{proof}
Recall the definition of the matrix $\boldsymbol{M}(\widetilde{\boldsymbol{\beta}})$, which is defined just after~\eqref{Eq:ldotTaylor}, and which appears in~\eqref{Eq:1}.
\begin{proposition}\label{prop-tb2}
  Assume~\textbf{(A1)},~\textbf{(A2)(a)},~\textbf{(A3)(b)} and \textbf{(A4)}.  For $\widehat{\boldsymbol{\beta}}$ in \eqref{eq-betahat}, let $\lambda \asymp n^{-1/2}\log^{1/2} (np)$, and for $\widehat{\boldsymbol{\Theta}}$ in~\eqref{eq-ThetaHat}, let 
\[
\lambda_n \asymp \biggl\{\max\biggl(\|\boldsymbol{\Sigma}^{-1}\|_{\mathrm{op},1}\frac{d_o\log (np)}{n^{1/2}} \, , \, \|\boldsymbol{\Sigma}^{-1}\|_{\mathrm{op},1}n^{-(1/3-\delta_0)}\biggr)\biggr\}.
\]
Then for $\boldsymbol{c}\in \mathbb{R}^p$ with $\|\boldsymbol{c}\|_1 = 1$, we have
	\[
	\boldsymbol{c}^{\top}(\widehat{\boldsymbol{\Theta}}\boldsymbol{M}(\widetilde{\boldsymbol{\beta}}) + \boldsymbol{I})(\widehat{\boldsymbol{\beta}} - \boldsymbol{\beta}^o) = o_p(n^{-1/2}).
	\]
\end{proposition}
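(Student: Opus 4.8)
The plan is to bound $|\boldsymbol{c}^\top(\widehat{\boldsymbol{\Theta}}\boldsymbol{M}(\widetilde{\boldsymbol{\beta}}) + \boldsymbol{I})(\widehat{\boldsymbol{\beta}} - \boldsymbol{\beta}^o)|$ by a product of three factors: a term of order $\|\widehat{\boldsymbol{\beta}} - \boldsymbol{\beta}^o\|_1$, which Lemma~\ref{Lem:lasso}(ii) controls at rate $d_o\log^{1/2}(np)/n^{1/2}$; the operator $\ell_\infty$-to-$\ell_\infty$ norm of $\widehat{\boldsymbol{\Theta}}\boldsymbol{M}(\widetilde{\boldsymbol{\beta}}) + \boldsymbol{I}$; and, when extracting the entrywise bound, a factor $\max_j r_j$ coming from the sparsity of $\boldsymbol{\Sigma}^{-1}$ (exactly as in the proof of Proposition~\ref{prop-tb1}, via Lemma~\ref{lem-aclime-lem-7}). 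Specifically, since $\|\boldsymbol{c}\|_1 = 1$, we have $|\boldsymbol{c}^\top(\widehat{\boldsymbol{\Theta}}\boldsymbol{M}(\widetilde{\boldsymbol{\beta}}) + \boldsymbol{I})(\widehat{\boldsymbol{\beta}} - \boldsymbol{\beta}^o)| \le \|\widehat{\boldsymbol{\Theta}}\boldsymbol{M}(\widetilde{\boldsymbol{\beta}}) + \boldsymbol{I}\|_\infty \, \|\widehat{\boldsymbol{\beta}} - \boldsymbol{\beta}^o\|_1$. So it suffices to show $\|\widehat{\boldsymbol{\Theta}}\boldsymbol{M}(\widetilde{\boldsymbol{\beta}}) + \boldsymbol{I}\|_\infty = o_p(n^{1/2}/(d_o\log^{1/2}(np)))$; in fact a polynomial-in-$n$ bound times $\|\boldsymbol{\Sigma}^{-1}\|_{\mathrm{op},1}$ will suffice given \textbf{(A3)(b)} and \textbf{(A4)(c)}.

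The key decomposition is $\widehat{\boldsymbol{\Theta}}\boldsymbol{M}(\widetilde{\boldsymbol{\beta}}) + \boldsymbol{I} = \widehat{\boldsymbol{\Theta}}\bigl(\boldsymbol{M}(\widetilde{\boldsymbol{\beta}}) + \widehat{\mathcal{V}}(\widehat{\boldsymbol{\beta}})\bigr) + \bigl(\boldsymbol{I} - \widehat{\boldsymbol{\Theta}}\widehat{\mathcal{V}}(\widehat{\boldsymbol{\beta}})\bigr)$, recalling that $\boldsymbol{M}(\widetilde{\boldsymbol{\beta}})$ has $j$th row $\ddot{\ell}_j(\widetilde{\boldsymbol{\beta}}_j)^\top$, so that $\boldsymbol{M}(\widetilde{\boldsymbol{\beta}})$ is a row-wise-evaluated Hessian and $-\widehat{\mathcal{V}}(\widehat{\boldsymbol{\beta}})$ plays the role of a reference Hessian at $\widehat{\boldsymbol{\beta}}$. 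On the event $\mathcal{A}$ of Proposition~\ref{prop-tb1}, the second bracket has entrywise norm at most $\lambda_n$, and combined with the sparsity of $\boldsymbol{\Sigma}^{-1}$ this contributes $O(\lambda_n \|\boldsymbol{\Sigma}^{-1}\|_{\mathrm{op},1}\max_j r_j)$ to $\|\cdot\|_{\mathrm{op},\infty}$ after invoking Lemma~\ref{lem-aclime-lem-7}, which is $o_p(1)$ by \textbf{(A4)(c)}. For the first bracket we bound $\|\widehat{\boldsymbol{\Theta}}\|_{\mathrm{op},\infty} \le \|\boldsymbol{\Sigma}^{-1}\|_{\mathrm{op},\infty}$ (again valid on $\mathcal{A}$) and are left to control $\|\boldsymbol{M}(\widetilde{\boldsymbol{\beta}}) + \widehat{\mathcal{V}}(\widehat{\boldsymbol{\beta}})\|_\infty$, i.e. the discrepancy between the Hessian evaluated at the intermediate points $\widetilde{\boldsymbol{\beta}}_j$ and minus $\widehat{\mathcal{V}}(\widehat{\boldsymbol{\beta}})$.

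The main obstacle is this last term: it requires a uniform (over $j$, over entries, and over $\boldsymbol{\beta}$ in a neighbourhood of $\boldsymbol{\beta}^o$) control of the entries of $\ddot{\ell}_j(\boldsymbol{\beta})$, together with a Lipschitz-in-$\boldsymbol{\beta}$ estimate so that the $O_p(d_o\log^{1/2}(np)/n^{1/2})$ closeness of $\widehat{\boldsymbol{\beta}}$ (and hence each $\widetilde{\boldsymbol{\beta}}_j$) to $\boldsymbol{\beta}^o$ can be leveraged. I would first decompose $\boldsymbol{M}(\widetilde{\boldsymbol{\beta}}) + \widehat{\mathcal{V}}(\widehat{\boldsymbol{\beta}})$ as $\bigl(\boldsymbol{M}(\widetilde{\boldsymbol{\beta}}) + \ddot{\ell}(\widehat{\boldsymbol{\beta}})\text{-type term}\bigr) + \bigl(\text{something} + \widehat{\mathcal{V}}(\widehat{\boldsymbol{\beta}})\bigr) + \bigl(\ddot{\ell}(\boldsymbol{\beta}^o) \text{ vs } -\boldsymbol{\Sigma} \text{ concentration}\bigr)$, using \textbf{(A1)} (boundedness $K_Z$ and Lipschitz constant $L$) to get an explicit modulus of continuity for $\boldsymbol{\beta}\mapsto$ entries of the Hessian — each such entry is a weighted average of products of bounded covariate differences, and the weights $w_i(s,\boldsymbol{\beta})$ are smooth in $\boldsymbol{\beta}$ with gradient bounded in terms of $K_Z$. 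A Bernstein-type concentration argument (as in Lemma~\ref{Lemma:Zbarmu} and the auxiliary CLIME lemmas already cited) then gives $\|\ddot{\ell}(\boldsymbol{\beta}^o) + \boldsymbol{\Sigma}\|_\infty$ and $\|\widehat{\mathcal{V}}(\boldsymbol{\beta}^o) - \boldsymbol{\Sigma}\|_\infty$ of order $\log^{1/2}(np)/n^{1/2}$ (plus the same kind of $n^{-(1/3-\delta_0)}$ truncation cost already present in $\lambda_n$ for the infinite-$t_+$ case), while the Lipschitz bounds absorb the gap between $\widehat{\boldsymbol{\beta}}$, the $\widetilde{\boldsymbol{\beta}}_j$'s and $\boldsymbol{\beta}^o$ at cost $O_p(d_o\log^{1/2}(np)/n^{1/2})$. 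Assembling, $\|\widehat{\boldsymbol{\Theta}}\boldsymbol{M}(\widetilde{\boldsymbol{\beta}}) + \boldsymbol{I}\|_{\mathrm{op},\infty} = O_p\bigl(\|\boldsymbol{\Sigma}^{-1}\|_{\mathrm{op},1}\max_j r_j \max\{d_o\log(np)/n^{1/2}, n^{-(1/3-\delta_0)}\}\bigr)$, and multiplying by $\|\widehat{\boldsymbol{\beta}}-\boldsymbol{\beta}^o\|_1 = O_p(d_o\log^{1/2}(np)/n^{1/2})$ and dividing by $n^{-1/2}$ leaves $O_p\bigl(\|\boldsymbol{\Sigma}^{-1}\|_{\mathrm{op},1}\max_j r_j \max\{d_o\log(np)/n^{1/2}, n^{-(1/3-\delta_0)}\} \cdot d_o\log^{1/2}(np)\bigr)$, which is $o_p(1)$ by \textbf{(A3)(b)} and \textbf{(A4)(c)}. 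The delicate bookkeeping is ensuring the intermediate Hessian-difference bound is genuinely uniform over the (random) line segments containing the $\widetilde{\boldsymbol{\beta}}_j$, which is why restricting to the high-probability event $\{\|\widehat{\boldsymbol{\beta}} - \boldsymbol{\beta}^o\|_1 \le C d_o\log^{1/2}(np)/n^{1/2}\} \cap \mathcal{A}$ from the outset is essential.
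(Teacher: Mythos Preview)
Your high-level strategy is sound and closely parallels the paper's: bound $|\boldsymbol{c}^\top(\widehat{\boldsymbol{\Theta}}\boldsymbol{M}(\widetilde{\boldsymbol{\beta}})+\boldsymbol{I})(\widehat{\boldsymbol{\beta}}-\boldsymbol{\beta}^o)| \le \|\widehat{\boldsymbol{\Theta}}\boldsymbol{M}(\widetilde{\boldsymbol{\beta}})+\boldsymbol{I}\|_\infty\|\widehat{\boldsymbol{\beta}}-\boldsymbol{\beta}^o\|_1$ and control the first factor. Your decomposition through $\widehat{\mathcal{V}}(\widehat{\boldsymbol{\beta}})$ is a legitimate variant; the paper instead writes $\widehat{\boldsymbol{\Theta}}\boldsymbol{M}+\boldsymbol{I} = (\widehat{\boldsymbol{\Theta}}-\boldsymbol{\Sigma}^{-1})(\boldsymbol{M}+\boldsymbol{\Sigma}) + \boldsymbol{\Sigma}^{-1}(\boldsymbol{M}+\boldsymbol{\Sigma}) - (\widehat{\boldsymbol{\Theta}}-\boldsymbol{\Sigma}^{-1})\boldsymbol{\Sigma}$, which is why \textbf{(A4)(b)} (i.e.\ $\|\boldsymbol{\Sigma}\|_\infty=O(1)$) enters. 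Either route requires $\|\boldsymbol{M}(\widetilde{\boldsymbol{\beta}})+\boldsymbol{\Sigma}\|_\infty \le \|\boldsymbol{M}(\widetilde{\boldsymbol{\beta}})-\ddot{\ell}(\boldsymbol{\beta}^o)\|_\infty + \|\ddot{\ell}(\boldsymbol{\beta}^o)+\boldsymbol{\Sigma}\|_\infty$, and the first summand is indeed an elementary Lipschitz-in-$\boldsymbol{\beta}$ calculation, uniform over the random segments containing the $\widetilde{\boldsymbol{\beta}}_j$ (the paper gets the explicit bound $16K_Z^2(e^{K_Z\|\widehat{\boldsymbol{\beta}}-\boldsymbol{\beta}^o\|_1}-1)$). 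Two minor corrections: the CLIME constraint already gives $\|\widehat{\boldsymbol{\Theta}}\widehat{\mathcal{V}}(\widehat{\boldsymbol{\beta}})-\boldsymbol{I}\|_\infty\le\lambda_n$ by construction (not only on $\mathcal{A}$), and Lemma~\ref{lem-aclime-lem-7} does not apply to $\boldsymbol{I}-\widehat{\boldsymbol{\Theta}}\widehat{\mathcal{V}}$; but since you only need the entrywise bound $\lambda_n$ there, not an operator norm, this is harmless.

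The substantive gap is the step you describe as ``Bernstein-type concentration'' for $\|\ddot{\ell}(\boldsymbol{\beta}^o)+\boldsymbol{\Sigma}\|_\infty$. This is the technical heart of the paper's proof and is \emph{not} delivered by Lemma~\ref{Lemma:Zbarmu} or Lemma~\ref{lem-aclime-lem1}. After the Doob--Meyer decomposition one must control $n^{-1}\sum_{i=1}^n\int_{\mathcal{T}} a(s)\,dM_i(s)$, where the predictable integrand $a(s)=\sum_j\{Z_{jk}(s)-\bar{Z}_k(s,\boldsymbol{\beta}^o)\}\{Z_{jl}(s)-\bar{Z}_l(s,\boldsymbol{\beta}^o)\}w_j(s,\boldsymbol{\beta}^o)$ depends on all $n$ observations; the $n$ martingale integrals share this common integrand and are not independent, so a direct Bernstein bound does not apply. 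The paper instead localises at the successive event times $t_0<t_1<\cdots<t_n$, shows that each increment $d_j=W_{t_j}-W_{t_{j-1}}$ is conditionally sub-gamma with variance and scale of order $1/n$ (the key being that the compensator increment $\sum_i\int_{(t_{j-1},s_j]}d\Lambda_i$ is stochastically dominated, given $\mathcal{F}_{j-1}$, by a rescaled exponential), and then invokes de~la~Pe\~na's decoupling inequality for the resulting martingale difference sequence to obtain an exponential tail bound uniform over the $p^2$ entries. The remaining piece of $\|\ddot{\ell}(\boldsymbol{\beta}^o)+\boldsymbol{\Sigma}\|_\infty$ (replacing $\bar{\boldsymbol{Z}}$ by $\boldsymbol{\mu}$ and concentrating an i.i.d.\ average) does reduce to Lemma~\ref{Lemma:Zbarmu} plus Hoeffding, as you suggest, but the martingale piece requires this additional stopping-time and decoupling machinery that your plan does not acknowledge.
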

\begin{proof}
	Note that
	\begin{align}
\label{Eq:Start}
 \|\widehat{\boldsymbol{\Theta}} &\boldsymbol{M}(\widetilde{\boldsymbol{\beta}}) + \boldsymbol{I}\|_{\infty} \nonumber \\
	&\leq \|\widehat{\boldsymbol{\Theta}} - \boldsymbol{\Sigma}^{-1}\|_{\mathrm{op},\infty}\|\boldsymbol{M}(\widetilde{\boldsymbol{\beta}}) + \boldsymbol{\Sigma}\|_{\infty} + \|\boldsymbol{\Sigma}^{-1}\|_{\mathrm{op},1}\|\boldsymbol{M}(\widetilde{\boldsymbol{\beta}}) + \boldsymbol{\Sigma}\|_{\infty} + \|\widehat{\boldsymbol{\Theta}} - \boldsymbol{\Sigma}^{-1}\|_{\mathrm{op},\infty}\|\boldsymbol{\Sigma}\|_{\infty}.
	\end{align}
For $j=1,\ldots,p$, let $\ddot{\ell}_j(\boldsymbol{\beta}^o)$ denote the $j$th column of $\ddot{\ell}(\boldsymbol{\beta}^o)$.  Then
\begin{align}
\label{Eq:Mj}
\|&\ddot{\ell}_j(\widetilde{\boldsymbol{\beta}}_j) - \ddot{\ell}_j(\boldsymbol{\beta}^o)\|_\infty \nonumber \\
&\leq \biggl\| \sum_{i=1}^n \int_{\mathcal{T}} \biggl[\bigl\{\boldsymbol{Z}_i(s) - \bar{\boldsymbol{Z}}(s,\widetilde{\boldsymbol{\beta}}_j)\bigr\}^{\otimes 2}w_i(s,\widetilde{\boldsymbol{\beta}}_j) - \bigl\{\boldsymbol{Z}_i(s) - \bar{\boldsymbol{Z}}(s,\boldsymbol{\beta}^o)\bigr\}^{\otimes 2}w_i(s,\boldsymbol{\beta}^o)\biggr] \, d\bar{N}(s)\biggr\|_\infty \nonumber \\
&\leq \sup_{s \in \mathcal{T}} \biggl\|\sum_{i=1}^n\biggl[\bigl\{\boldsymbol{Z}_i(s) - \bar{\boldsymbol{Z}}(s,\widetilde{\boldsymbol{\beta}}_j)\bigr\}^{\otimes 2}w_i(s,\widetilde{\boldsymbol{\beta}}_j) - \bigl\{\boldsymbol{Z}_i(s) - \bar{\boldsymbol{Z}}(s,\boldsymbol{\beta}^o)\bigr\}^{\otimes 2}w_i(s,\boldsymbol{\beta}^o)\biggr]\biggr\|_\infty \nonumber \\
&\leq 4K_Z \sup_{s \in \mathcal{T}}\bigl\|\bar{\boldsymbol{Z}}(s,\widetilde{\boldsymbol{\beta}}_j) - \bar{\boldsymbol{Z}}(s,\boldsymbol{\beta}^o)\bigr\|_\infty + 4K_Z^2 \sup_{s \in \mathcal{T}} \sum_{i=1}^n |w_i(s,\widetilde{\boldsymbol{\beta}}_j) - w_i(s,\boldsymbol{\beta}^o)|.
\end{align}
But, for any $s \in \mathcal{T}$,
\begin{align}
\label{Eq:ZbarDiff2}
\bigl\|\bar{\boldsymbol{Z}}(s,\widetilde{\boldsymbol{\beta}}_j) - \bar{\boldsymbol{Z}}(s,\boldsymbol{\beta}^o)\bigr\|_\infty &= \biggl\|\frac{\sum_{i=1}^n \boldsymbol{Z}_i(s)Y_i(s)e^{\widetilde{\boldsymbol{\beta}}_j^\top \boldsymbol{Z}_i(s)}}{\sum_{i=1}^n Y_i(s)e^{\widetilde{\boldsymbol{\beta}}_j^\top \boldsymbol{Z}_i(s)}} - \frac{\sum_{i=1}^n \boldsymbol{Z}_i(s)Y_i(s)e^{\boldsymbol{\beta}^{o\top} \boldsymbol{Z}_i(s)}}{\sum_{i=1}^n Y_i(s)e^{\boldsymbol{\beta}^{o\top} \boldsymbol{Z}_i(s)}}\biggr\|_\infty \nonumber \\
&\leq 2K_Z\sum_{i=1}^n \frac{Y_i(s)|e^{\widetilde{\boldsymbol{\beta}}_j^\top \boldsymbol{Z}_i(s)} - e^{\boldsymbol{\beta}^{o\top} \boldsymbol{Z}_i(s)}|}{\sum_{\ell=1}^n Y_\ell(s)e^{\widetilde{\boldsymbol{\beta}}_j^{\top}\boldsymbol{Z}_\ell(s)}} \nonumber \\
&\leq 2K_Z(e^{K_Z\|\widehat{\boldsymbol{\beta}} - \boldsymbol{\beta}^o\|_1}-1).
\end{align}
Similarly, for any $s \in \mathcal{T}$,
\begin{align}
\label{Eq:WeightDiff}
\sum_{i=1}^n |w_i(s,\widetilde{\boldsymbol{\beta}}_j) - w_i(s,\boldsymbol{\beta}^o)| &= \sum_{i=1}^n \biggl|\frac{Y_i(s)e^{\widetilde{\boldsymbol{\beta}}_j^\top \boldsymbol{Z}_i(s)}}{\sum_{\ell=1}^n Y_\ell(s)e^{\widetilde{\boldsymbol{\beta}}_j^\top \boldsymbol{Z}_\ell(s)}} - \frac{Y_i(s)e^{\boldsymbol{\beta}^{o\top} \boldsymbol{Z}_i(s)}}{\sum_{\ell=1}^n Y_\ell(s)e^{\boldsymbol{\beta}^{o\top}\boldsymbol{Z}_\ell(s)}}\biggr| \nonumber \\
&\leq 2\sum_{i=1}^n \frac{Y_i(s)|e^{\widetilde{\boldsymbol{\beta}}_j^\top \boldsymbol{Z}_i(s)} - e^{\boldsymbol{\beta}^{o\top} \boldsymbol{Z}_i(s)}|}{\sum_{\ell=1}^n Y_\ell(s)e^{\widetilde{\boldsymbol{\beta}}_j^{\top}\boldsymbol{Z}_\ell(s)}} \leq 2\bigl(e^{K_Z\|\widehat{\boldsymbol{\beta}} - \boldsymbol{\beta}^o\|_1} - 1\bigr).
\end{align}
It follows from~\eqref{Eq:Mj},~\eqref{Eq:ZbarDiff2} and~\eqref{Eq:WeightDiff} that 
\begin{equation}
\label{Eq:Mj2}
\|\boldsymbol{M}(\widetilde{\boldsymbol{\beta}}) - \ddot{\ell}(\boldsymbol{\beta}^o)\|_\infty = \max_{j=1,\ldots,p}\|\ddot{\ell}_j(\widetilde{\boldsymbol{\beta}}_j) - \ddot{\ell}_j(\boldsymbol{\beta}^o)\|_\infty \leq 16K_Z^2(e^{K_Z\|\widehat{\boldsymbol{\beta}} - \boldsymbol{\beta}^o\|_1} - 1).
\end{equation}

Moreover,
\begin{align}    
\label{Eq:lddot}
\|\ddot{\ell}(\boldsymbol{\beta}^o) + \boldsymbol{\Sigma}\|_\infty &\leq \biggl\|\frac{1}{n}\sum_{i=1}^n \int_{\mathcal{T}} \sum_{j=1}^n \bigl\{\boldsymbol{Z}_j(s) - \bar{\boldsymbol{Z}}(s,\boldsymbol{\beta}^o)\bigr\}^{\otimes 2} w_j(s,\boldsymbol{\beta}^o) \, dM_i(s)\biggr\|_\infty \nonumber \\
&\hspace{1cm}+ \biggl\|\frac{1}{n}\sum_{i=1}^n \int_0^{t_+} \bigl\{\boldsymbol{Z}_i(s) - \bar{\boldsymbol{Z}}(s,\boldsymbol{\beta}^o)\bigr\}^{\otimes 2} \tilde{w}_i(s,\boldsymbol{\beta}^o) \lambda_0(s) \, ds - \boldsymbol{\Sigma} \biggr\|_\infty.
\end{align}
The first term is the entrywise maximum absolute norm of a random $p \times p$ matrix.  Fixing $j,k \in \{1,\dots,p\}$, it is convenient to write its $(j, k)$th entry as $n^{-1}\sum_{i=1}^n \int_{\mathcal{T}}a(s)\, dM_i(s)$, with
\[
	a(s) = a_{j, k}(s) := \sum_{i=1}^n \bigl\{Z_{ij}(s) - \bar{Z}_j(s,\boldsymbol{\beta}^o)\bigr\}\bigl\{Z_{ik}(s) - \bar{Z}_k(s,\boldsymbol{\beta}^o)\bigr\} w_i(s,\boldsymbol{\beta}^o),
\]
where $Z_{ij}(s)$ and $\bar{Z}_j(s,\boldsymbol{\beta}^o)$ are the $j$th components of $\boldsymbol{Z}_i(s)$ and $\bar{\boldsymbol{Z}}(s,\boldsymbol{\beta}^o)$ respectively.  For $t \in \mathcal{T}$, we also define the right-continuous martingale
\[
  W_t := \frac{1}{n}\sum_{i=1}^n \int_{[0,t]} a(s)\, dM_i(s),
\]
and claim that $(W_t:t \in \mathcal{T})$ is uniformly integrable.  To see this, note that
\begin{align*}
\sup_{t \in \mathcal{T}} \mathbb{E}(W_t^2) &= \frac{1}{n}\sup_{t \in \mathcal{T}} \mathbb{E} \int_0^t a(s)^2 Y(s)e^{\boldsymbol{\beta}^{o\top}\boldsymbol{Z}(s)}\lambda_0(s) \, ds \\
                                           &\leq \frac{16K_Z^4}{n}\sup_{t \in \mathcal{T}} \mathbb{E}\bigl[\mathbb{E}\{-\log \bar{F}_{\tilde{T}|\boldsymbol{\mathcal{Z}}}(T \wedge t)|\boldsymbol{\mathcal{Z}}\}\bigr] \\
  &\leq \frac{16K_Z^4}{n} \mathbb{E}\bigl[\mathbb{E}\bigl\{-\log \bar{F}_{\tilde{T}|\boldsymbol{\mathcal{Z}}}(\tilde{T})|\boldsymbol{\mathcal{Z}}\bigr\}\bigr] = \frac{16K_Z^4}{n},
\end{align*}
which establishes the desired uniform integrability.  Thus, by, e.g., \citet[][p.18]{KaratzasShreve1991} there exists a random variable $W_{t_+}$ such that $\mathbb{E}(|W_{t_+}|) < \infty$ and $W_t \stackrel{\mathrm{a.s.}}{\rightarrow} W_{t_+}$ as $t \rightarrow t_+$.  
Now let $t_0 := 0$, let $t_j := \inf\bigl\{t \in \mathcal{T}:\sum_{i=1}^n \mathbbm{1}_{\{T_i \leq t\}} = j\bigr\}$ be the $j$th observed or censored event for $j = 1, \ldots, n$, and let $t_{n+1} := t_+$.  Then $\{t_j\}$ is a sequence of increasing stopping times.  For $j = 0, \ldots, n+1$, define $X_j := W_{t_j}$, as well as the $\sigma$-algebra $\mathcal{F}_j = \mathcal{F}_{t_j}$ consisting of those events $A$ for which $A \cap \{t_j \leq t\} \in \mathcal{F}_t$ for every $t \in \mathcal{T}$.  Then, writing $d_j := X_j - X_{j-1}$, we have by the optional sampling theorem \citep[e.g.][Theorem~1.3.22]{KaratzasShreve1991} that $\{d_j:j = 1, \ldots, n+1\}$ is a martingale difference sequence with respect to the filtration $\{\mathcal{F}_j:j=0,1,\ldots,n+1\}$.

We seek to control $\mathbb{E}(|d_j|^k | \mathcal{F}_{j-1})$ for $k \in \mathbb{N}$ with $k \geq 2$.  Writing $s_j := \min_{\ell:\ell \in R_{t_{j-1}}}\tilde{T}_\ell$ for $j=1,\ldots,n$ and $s_{n+1} := t_+$, note that $d_{n+1} = 0$ and for $j=1,\ldots,n$,
	\begin{align*}
	|d_j| &= \frac{1}{n}\biggl|\sum_{i=1}^n \int_{(t_{j-1}, t_j]}a(s)\, dM_i(s)\biggr| = \frac{1}{n}\biggl|\sum_{i=1}^n \int_{(t_{j-1}, t_j]}a(s)\, dN_i(s) - \sum_{i=1}^n \int_{(t_{j-1}, t_j]}a(s)\, d\Lambda_i(s,\boldsymbol{\beta}^o)\biggr| \\
	&\leq \frac{4K_Z^2}{n}\biggl(1 + \sum_{i=1}^n \int_{(t_{j-1}, t_j]}\,d\Lambda_i(s,\boldsymbol{\beta}^o)\biggr) \leq \frac{4K_Z^2}{n}\biggl(1 + \sum_{i=1}^n \int_{(t_{j-1}, s_j]}\,d\Lambda_i(s,\boldsymbol{\beta}^o)\biggr),
	\end{align*}
where the final inequality follows because for every $j = 1, \ldots, n$, if $t_j$ is the time of a censored event, then $s_j > t_j$; if $t_j$ is the time of an observed event, then $s_j = t_j$.  Now let $i^* \in \{1,\ldots,n\}$ denote the smallest index in $R_{t_{j-1}}$, so that $i^*$ is $\mathcal{F}_{t_{j-1}}$-measurable.  Then
\begin{align*}
          \sum_{i=1}^n \int_{(t_{j-1}, s_j]}\,d\Lambda_i(s,\boldsymbol{\beta}^o) &= \sum_{i \in R_{t_{j-1}}} \int_{(t_{j-1}, s_j]} Y_i(s)e^{\boldsymbol{\beta}^{o\top}\boldsymbol{Z}_i(s)}\lambda_0(s) \, ds \\
                                                                                 &\leq e^{2\|\boldsymbol{\beta}^o\|_1K_Z} (n-j+1)\int_{(t_{j-1}, s_j]} e^{\boldsymbol{\beta}^{o\top}\boldsymbol{Z}_{i^*}(s)}\lambda_0(s) \, ds.
\end{align*}
But, writing $\boldsymbol{\mathcal{Z}}^{(n)} := \{\boldsymbol{Z}_i(t): i =1,\ldots,n,t \in \mathcal{T}\}$, for any $x > 0$,
\begin{align*}
  \mathbb{P}\biggl(\int_{(t_{j-1}, s_j]} \! \! \! e^{\boldsymbol{\beta}^{o\top}\boldsymbol{Z}_{i^*}(s)}&\lambda_0(s) \, ds > x \biggm| \mathcal{F}_{j-1}, \boldsymbol{\mathcal{Z}}^{(n)}\biggr) = \mathbb{P}\biggl(-\log \bar{F}_{\tilde{T}_{i^*}|\mathcal{F}_{j-1}, \boldsymbol{\mathcal{Z}}^{(n)}}(s_j) > x \biggm| \mathcal{F}_{j-1}, \boldsymbol{\mathcal{Z}}^{(n)}\biggr) \\
&= \mathbb{P}\biggl(\min_{\ell:\ell \in R_{t_{j-1}}} \tilde{T}_\ell > \bar{F}_{\tilde{T}_{i^*}|\mathcal{F}_{j-1}, \boldsymbol{\mathcal{Z}}^{(n)}}^{-1}(e^{-x}) \biggm| \mathcal{F}_{j-1}, \boldsymbol{\mathcal{Z}}^{(n)}\biggr) \\
                                                                                                           &= e^{-x}\prod_{\ell \in R_{t_{j-1}} \setminus \{i^*\}} \exp\biggl\{-\int_{t_{j-1}}^{\bar{F}_{\tilde{T}_{i^*}|\mathcal{F}_{j-1}, \boldsymbol{\mathcal{Z}}^{(n)}}^{-1}(e^{-x})} e^{\boldsymbol{\beta}^{o\top}\boldsymbol{Z}_\ell(s)} \lambda_0(s) \, ds\biggr\} \\
                                                                                              &\leq e^{-x}\prod_{\ell \in R_{t_{j-1}} \setminus \{i^*\}} \exp\biggl\{-e^{-2\|\boldsymbol{\beta}^o\|_1K_Z}\int_{t_{j-1}}^{\bar{F}_{\tilde{T}_{i^*}|\mathcal{F}_{j-1}, \boldsymbol{\mathcal{Z}}^{(n)}}^{-1}(e^{-x})} e^{\boldsymbol{\beta}^{o\top}\boldsymbol{Z}_{i^*}(s)} \lambda_0(s) \, ds\biggr\} \\
  &\leq \exp\Bigl\{-(n-j+1)e^{-2\|\boldsymbol{\beta}^o\|_1K_Z}x\Bigr\}.
\end{align*}
We deduce that
\[
  \sum_{i=1}^n \int_{(t_{j-1}, s_j]}\,d\Lambda_i(s,\boldsymbol{\beta}^o)\biggm| \mathcal{F}_{j-1} \leq_{\mathrm{st}} e^{2\|\boldsymbol{\beta}^o\|_1K_Z} (n-j+1) \mathrm{Exp}\bigl((n-j+1)e^{-2\|\boldsymbol{\beta}^o\|_1K_Z}\bigr),
\]
where $\leq_{\mathrm{st}}$ denotes the usual stochastic ordering.  In particular,
\begin{align*}
	\mathbb{E}(|d_j|^k | \mathcal{F}_{j-1}) &\leq  \Bigl(\frac{4K_Z^2}{n}\Bigr)^k \sum_{l = 0}^k \binom{k}{l} \mathbb{E}\biggl\{\biggl(\sum_{i=1}^n \int_{(t_{j-1}, s_j]}\,d\Lambda_i(s,\boldsymbol{\beta}^o)\biggr)^l \bigg| \mathcal{F}_{j-1}\biggr\} \nonumber \\
             &\leq \Bigl(\frac{4K_Z^2}{n}\Bigr)^k e^{4k\|\boldsymbol{\beta}^o\|_1K_Z} \sum_{l = 0}^k \binom{k}{l} l! = \Bigl(\frac{4K_Z^2}{n}\Bigr)^k e^{4k\|\boldsymbol{\beta}^o\|_1K_Z}k!e.
	\end{align*}
Hence, for all $k \in \mathbb{N}$, we have
\begin{align}\label{eq-dj-2k}
	& \mathbb{E}(|d_j|^{2k} | \mathcal{F}_{j-1}) \leq  \Bigl(\frac{4e^{4\|\boldsymbol{\beta}^o\|_1K_Z+1}K_Z^2}{n}\Bigr)^{2k} (2k)!.
\end{align}
From~\eqref{eq-dj-2k}, and writing $\nu := 8e^{4\|\boldsymbol{\beta}^o\|_1K_Z+1}K_Z^2$, we can apply \citet[][Theorem~2.3]{BoucheronEtal2013} and the fact that $d_1,\ldots,d_{n+1}$ have zero mean to deduce that each $d_j|\mathcal{F}_{j-1}$ is a sub-gamma random variable with parameters $\nu^2/n^2$ and $\nu/n$.  Now let $\mathcal{G} := \sigma(d_1,\ldots,d_{n+1})$.  It follows from Section 2 of \cite{pena1999} that for the sequence $(d_j)$, there exists a tangent sequence $(e_j)$ satisfying
	\[
	d_j | \mathcal{F}_{j-1} \stackrel{d}{=} e_j | \mathcal{F}_{j-1} \stackrel{d}{=} e_j | \mathcal{G}
	\]
	and such that $e_1,\ldots,e_{n+1}$ are conditionally independent given $\mathcal{G}$.
Thus, for $x > 0$,
\begin{align*}
\mathbb{P}\biggl(\sum_{j=1}^{n+1}d_j  \geq x \biggr)	& \leq \inf_{s > 0} e^{-sx}\mathbb{E}\biggl\{\exp\biggl(s\sum_{j=1}^n d_j\biggr)\biggr\} \leq \inf_{s > 0}e^{-sx}\biggl\{\mathbb{E}\exp \biggl(2s\sum_{j=1}^n e_j\biggr)\biggr\}^{1/2} \\
                                                        & = \inf_{s > 0}e^{-sx}\biggl[\mathbb{E}\biggl\{\mathbb{E}\exp \biggl(2s\sum_{j=1}^n e_j\biggm|\mathcal{G}\biggr)\biggr\}\biggr]^{1/2} \\
& \leq \inf_{0 <s < n/\nu} \exp\biggl(-sx + \frac{\nu^2 s^2}{n - 2\nu s}\biggr) \leq \exp \biggl\{-\frac{nx^2}{4(\nu x + \nu^2)}\biggr\},
\end{align*}
where the second inequality follows from Corollary 3.1 in \cite{pena1999}, the third inequality follows from the conditional independence of the sequence $(e_j)$ and the sub-gamma tail behaviour, and the last inequality holds by taking 
\[
s = \frac{nx}{2(\nu x + \nu^2)} < \frac{n}{\nu}.
\]
Therefore, for $x > 0$,
	\begin{align*}
	\mathbb{P}\biggl\{\biggl\|\frac{1}{n}\sum_{i=1}^n \int_{\mathcal{T}} \sum_{j=1}^n \bigl\{\boldsymbol{Z}_j(s) - \bar{\boldsymbol{Z}}(s,\boldsymbol{\beta}^o)\bigr\}^{\otimes 2} &w_j(s,\boldsymbol{\beta}^o) \, dM_i(s)\biggr\|_\infty \geq x\biggr\} \leq 2p^2 \exp  \biggl\{-\frac{nx^2}{4(\nu x + \nu^2)}\biggr\}.  
	\end{align*}
We deduce that 
\begin{equation}
\label{Eq:Words1}
\biggl\|\frac{1}{n}\sum_{i=1}^n \int_{\mathcal{T}} \sum_{j=1}^n \bigl\{\boldsymbol{Z}_j(s) - \bar{\boldsymbol{Z}}(s,\boldsymbol{\beta}^o)\bigr\}^{\otimes 2} w_j(s,\boldsymbol{\beta}^o) \, dM_i(s)\biggr\|_\infty = O_p\biggl(\frac{\log^{1/2}(np)}{n^{1/2}}\biggr).
\end{equation}
For the second term in~\eqref{Eq:lddot}, observe that 
\begin{align}
\label{Eq:CLT}
\biggl\|\frac{1}{n}\sum_{i=1}^n\int_0^{t_+} \bigl\{\boldsymbol{Z}_i(s) &- \bar{\boldsymbol{Z}}(s, \boldsymbol{\beta}^o)\bigr\}^{\otimes 2}\tilde{w}_i(s,\boldsymbol{\beta}^o)\lambda_0(s)\,ds - \boldsymbol{\Sigma}\biggr\|_{\infty} \nonumber \\
&\leq \biggl\|\frac{1}{n}\sum_{i=1}^n\int_0^{t_+} \bigl\{\boldsymbol{Z}_i(s) - \boldsymbol{\mu}(s,\boldsymbol{\beta}^o)\bigr\}^{\otimes 2}\tilde{w}_i(s,\boldsymbol{\beta}^o)\lambda_0(s)\,ds - \boldsymbol{\Sigma}\biggr\|_{\infty} \nonumber \\
&\hspace{2cm}+ \frac{1}{n}\sum_{i=1}^n\int_0^{t_+} \bigl\|\bar{\boldsymbol{Z}}(s, \boldsymbol{\beta}^o)- \boldsymbol{\mu}(s,\boldsymbol{\beta}^o)\bigr\|_\infty^2\tilde{w}_i(s,\boldsymbol{\beta}^o)\lambda_0(s)\,ds.
\end{align}
For the first term in~\eqref{Eq:CLT}, we note that it is the maximum absolute value of a random vector, each of whose components is a sample average of independent and identically distributed random variables that are bounded in absolute value by $8K_Z^2$ and have expectation zero.  Thus
\begin{align}
\label{Eq:Words2}
\biggl\|\frac{1}{n}\sum_{i=1}^n\int_0^{t_+} \bigl\{\boldsymbol{Z}_i(s) - \boldsymbol{\mu}(s,\boldsymbol{\beta}^o)\bigr\}^{\otimes 2}\tilde{w}_i(s,\boldsymbol{\beta}^o)\lambda_0(s)\,ds - \boldsymbol{\Sigma}\biggr\|_{\infty} = O_p\biggl(\frac{\log^{1/2} (np)}{n^{1/2}}\biggr).
\end{align}
For the second term in~\eqref{Eq:CLT}, by~\eqref{Eq:InftyBound},~\eqref{Eq:S1Bound},~\eqref{Eq:InftyBound2} and Markov's inequality, we have that for any $\delta > 0$,
\begin{equation}
\label{Eq:delta}
\frac{1}{n}\sum_{i=1}^n\int_0^{t_+} \bigl\|\bar{\boldsymbol{Z}}(s, \boldsymbol{\beta}^o)- \boldsymbol{\mu}(s,\boldsymbol{\beta}^o)\bigr\|_\infty^2\tilde{w}_i(s,\boldsymbol{\beta}^o)\lambda_0(s)\,ds = o_p(n^{-(1/2-\delta)}).
\end{equation}
We deduce from~\eqref{Eq:lddot},~\eqref{Eq:Words1},~\eqref{Eq:CLT},~\eqref{Eq:Words2} and~\eqref{Eq:delta} that for every $\delta > 0$,
\begin{equation}
\label{Eq:lddotConc}
\|\ddot{\ell}(\boldsymbol{\beta}^o) + \boldsymbol{\Sigma}\|_\infty = o_p(n^{-(1/2 - \delta)}).
\end{equation}

Combining~\eqref{Eq:Start} with 
\eqref{eq-prop2-im} in the proof of Proposition~\ref{prop-tb1},~\eqref{Eq:Mj2},~\eqref{Eq:lddotConc}, Lemma~\ref{Lem:lasso}(ii), \textbf{(A4)(b)} and \textbf{(A4)(c)}, we have
\[
\bigl|\boldsymbol{c}^{\top}\bigl(\widehat{\boldsymbol{\Theta}}\boldsymbol{M}(\widetilde{\boldsymbol{\beta}}) + \boldsymbol{I}\bigr)(\widehat{\boldsymbol{\beta}} - \boldsymbol{\beta}^o)\bigr| \leq \|\widehat{\boldsymbol{\Theta}}\boldsymbol{M}(\widetilde{\boldsymbol{\beta}}) + \boldsymbol{I}\|_{\infty}\|\widehat{\boldsymbol{\beta}} - \boldsymbol{\beta}^o\|_1 = o_p(n^{-1/2}),
\]
as required.
\end{proof}

\subsubsection{Completion of the proof}

We now wrap up all the results in the previous three subsections.
\begin{proof}[Proof of Theorem~\ref{thm-main}]
  From~\eqref{Eq:1}, Proposition~\ref{lem-leading}, Proposition~\ref{prop-tb1} and~\ref{prop-tb2}, we deduce from Slutsky's theorem that under the stated assumptions, the first claim follows.  To prove the second claim, note that 

	\[
	\bigl|\boldsymbol{c}^{\top}\widehat{\boldsymbol{\Theta}}\boldsymbol{c} - \boldsymbol{c}^{\top}\boldsymbol{\Sigma}^{-1}\boldsymbol{c}\bigr| \leq \bigl\|\widehat{\boldsymbol{\Theta}} - \boldsymbol{\Sigma}^{-1}\bigr\|_{\infty} = o_p(1),
	\]
where the final claim follows from~\eqref{eq-prop2-im}, Lemma~\ref{lem-aclime-lem1} and \textbf{(A4)(c)}.  Another application of Slutsky's theorem therefore yields the second claim.  
\end{proof}

\section{Numerical experiments}\label{sec:numerical}

In this section, we investigate the numerical performance of our proposed method.  We begin by discussing various practical implementation issues in~\Cref{sec-pracissue}; in Sections~\ref{sec-simulations} and~\ref{sec-realdata}, we present analyses of simulated data and real data, respectively.

\subsection{Practical issues}\label{sec-pracissue}




\subsubsection{Software}

Recall that the debiased estimator $\widehat{\boldsymbol{b}}$ is obtained from a Lasso estimator $\widehat{\boldsymbol{\beta}}$ of the vector of true regression coefficients $\boldsymbol{\beta}^o = (\beta_1^o,\ldots,\beta_p^o)^\top$, as well as a CLIME-type estimator $\widehat{\boldsymbol{\Theta}}$ of $\boldsymbol{\Sigma}^{-1}$, the population version of the inverse of the negative Hessian matrix.  
We use the R \citep{R} package \textsc{glmnet} \citep{FriedmanEtal2010, SimonEtal2011} to compute $\widehat{\boldsymbol{\beta}}$; and adapt the \textsc{clime} \citep{CaiEtal2012} and \textsc{flare} \citep{LiEtal2014} packages to obtain $\widehat{\boldsymbol{\Theta}}$.  The \textsc{clime} package is more accurate, but is slow to compute for high-dimensional data; the \textsc{flare} algorithm computes only an approximate solution, but is faster.  For simplicity, we will refer to the modified \textsc{clime} and \textsc{flare} algorithms as the \textsc{clime} and \textsc{flare} packages, respectively.  In fact we also conducted analysis based on unmodified \textsc{clime} and \textsc{flare} (with \texttt{sym = `or'}) packages, and the differences were negligible.  


\subsubsection{Tuning parameters}

Our theoretical results provide conditions on the tuning parameters $\lambda$ and $\lambda_n$ under which our confidence intervals are asymptotically valid; however, in practice, the unknown population quantities and the unspecified constants mean that these conditions do not provide a practical algorithm for choosing these tuning parameters.  Therefore, to choose $\lambda$, we use the default 10-fold cross-validation algorithm implemented in the \textsc{glmnet} package, with a grid of 100 different tuning parameters, equally spaced on the log scale.
When using the \textsc{clime} and \textsc{flare} packages to compute $\widehat{\boldsymbol{\Theta}}$, the default 10-fold cross-validation algorithms were used to compute $\lambda_n$, with $\mathrm{tr}\bigl(\mathrm{diag}\bigl((\widehat{\boldsymbol{\Sigma}}\widehat{\boldsymbol{\Theta}} - \boldsymbol{I})^2\bigr)\bigr)$ as the cross-validation criterion.  

\subsubsection{Covariates}\label{sec-Z}

Assumption~\textbf{(A1)}(i) asks that the covariate process $\boldsymbol{\mathcal{Z}}$ should be bounded.  However, in our numerical results, we generate the covariate processes from a multivariate Gaussian distribution, due to the convenience of generating different correlation structures.  We also focus for simplicity on time-independent covariates.

An important observation is that even if $\boldsymbol{Z} = (Z_1,\ldots,Z_p)^\top$ has identity covariance matrix, this does not necessarily mean that $\boldsymbol{\Sigma} = (\Sigma_{ij})$ is the identity matrix.  We can illustrate this in the case where
$\boldsymbol{Z} \sim \mathcal{N}_p(\boldsymbol{0},\boldsymbol{\Sigma}^Z)$, as follows: suppose that $(\boldsymbol{\Sigma}^Z)_{ij} = 0$ whenever $\beta^o_i \neq 0$ and $\beta^o_j = 0$.  Then
\begin{itemize}
\item for any $i,j$ with $\beta^o_i \neq 0$ and $\beta^o_j = 0$, we have $\Sigma_{ij} = 0$;
\item for any $i,j$ with $\beta^o_i = 0$ and $\beta^o_j = 0$, we have  
	\[
	\Sigma_{ij} = \mathbb{E}(Z_iZ_j)\mathbb{E}\int_0^{t_+}Y(s)\exp\biggl(\sum_{l: \beta^o_l \neq 0}\beta^o_lZ_l\biggr)\lambda_0(s)\, ds;
	\]
\item for any $i,j$ with $\beta^o_i \neq 0$ and $\beta^o_j = 0$, we have
	\begin{align*}
          \Sigma_{ij} = \mathbb{E}\int_0^{t_+} c_i(s)c_j(s)Y(s)\exp\biggl(\sum_{l: \beta^o_l \neq 0}\beta^o_lZ_l\biggr)\lambda_0(s)\, ds,
        \end{align*}
        where
        \[
          c_i(s) := Z_i - \frac{\mathbb{E}\Bigl\{Z_i Y(s)\exp\Bigl(\sum_{l: \beta^o_l \neq 0}\beta^o_lZ_l\Bigr)\Bigr\}}{\mathbb{E}\Bigl\{Y(s)\exp\Bigl(\sum_{l: \beta^o_l \neq 0}\beta^o_lZ_l\Bigr)\Bigr\}}.
        \]
\end{itemize}
In order to satisfy the sparse precision matrix conditions, we consider the following two choices of $\boldsymbol{\Sigma}^Z$ in our simulations in Section~\ref{sec-simulations}.
\begin{enumerate}
\item [a.] $\boldsymbol{\Sigma}^Z_a = \boldsymbol{I}$;
\item [b.] $\boldsymbol{\Sigma}^Z_b = (\Sigma^Z_b)_{ij}$ with  
	\[
	(\Sigma_b^Z)_{ij} = \begin{cases}
	1, & \mbox{if } i = j,\\
 	0.5, & \mbox{if } i \neq j, \, \beta^o_i \neq 0, \beta^o_j \neq 0, \\
 	0, & \mbox{if } i \neq j, \beta^o_i\beta^o_j = 0, |\beta^o_i| + |\beta^o_j| > 0,\\
 	0.5^{|i-j|}, & \mbox{if } i \neq j, \beta^o_i = 0, \beta^o_j = 0.
	\end{cases}
	\]
\end{enumerate}

\subsubsection{A simple preliminary example}

To illustrate several of the features that arise in more complicated settings, we consider the following two scenarios: let $n = 1000$; $p=10$; $\boldsymbol{Z} \sim \mathcal{N}_p(\boldsymbol{0}, \boldsymbol{I})$; $\beta_1^o = \cdots = \beta_d^o = 1$, and $\beta_{d+1}^o = \cdots = \beta_p^o = 0$ for $d = 1, 3$; $\lambda_0(t) = 1$, for all $t > 0$; $U_i = 3$ when $d_o = 1$ and $U_i = 5$ when $d_o = 3$.  In these settings, the average censoring rate is around 15\%.  In the top-left blocks of Tables~\ref{tab-1-asy} and~\ref{tab-2-asy}, we report the average initial estimator error $\hat{\beta}_j - \beta^o_j$ for each index $j=1,\ldots,p$, the average debiased estimator error $\hat{b}_j - \beta^o_j$, the average of empirical coverage (EC) of the 95\% confidence intervals, their average widths, and the average $p$-values, based on 100 repetitions.  Standard errors for all quantities are given in brackets.

Here, the results are quite encouraging: the biases of the estimates $\hat{\beta}_j$ of the signal variables are substantially corrected by the debiased estimator $\hat{b}_j$, the coverage probabilities are satisfactory (certainly in the $d_o = 1$ case) and the $p$-values for the noise variables appear to be approximately uniformly distributed (notice that, under uniformity, the standard errors should be close to $1/(100 \times 12)^{1/2} \approx 0.029$).  Of course, this is a setting in which the usual inference for the maximum partial likelihood estimate (MPLE) is also valid, as illustrated in the bottom-right blocks of~Tables~\ref{tab-1-asy} and~\ref{tab-2-asy} (for ease of exposition, the MPLE estimators are collected in the $\hat{b}_j - \beta_j^o$ columns).  The MPLE was computed using the package \textsc{survival} \citep{Therneau2015}.

Closer inspection, however, reveals that the situation is not perhaps ideal as it seems at first sight.  First, while the bias correction works very well for the noise variables, it slightly under-corrects for the noise variables.  Second, the widths of the confidence intervals are slightly smaller than those for the MPLE, which is an efficient estimator.  These issues both arise from our choice of precision matrix estimator $\widehat{\boldsymbol{\Theta}}$, which aims to provide a good approximation to $\boldsymbol{\Sigma}^{-1}$ in different matrix norms.  To attempt to address this, we therefore consider widening the intervals by replacing the diagonal entries of $\widehat{\boldsymbol{\Theta}}$ in~\eqref{eq-hatci} with the diagonal entries of $\widetilde{\boldsymbol{\Theta}}$, where $\widetilde{\boldsymbol{\Theta}} = (\widetilde{\Theta}_{ij}) \in \mathbb{R}^{p \times p}$ is given by
	\begin{equation}\label{eq-ThetaTilde}
	\widetilde{\Theta}_{ij} = \begin{cases}
 	\widehat{\Theta}_{ij} & \mbox{if } i\neq j; \\ 	
 	\max\{1/\widehat{\mathcal{V}}(\widehat{\boldsymbol{\beta}})_{jj}, \widehat{\Theta}_{jj}\} & \mbox{if } i = j.
 	\end{cases}
	\end{equation}
The rationale behind our definition of $\widetilde{\boldsymbol{\Theta}}$ is that in an extreme case, when $\widehat{\mathcal{V}}(\widehat{\boldsymbol{\beta}})$ is a diagonal matrix, $\widehat{\boldsymbol{\Theta}}$ is still a biased estimator of $\boldsymbol{\Sigma}^{-1}$.  Since our precision matrix estimators are also potentially sensitive to tuning parameter choice, and the default choice tends to over-penalise, we further consider alternative options to the 10-fold cross-validation choice $\lambda_{\mathrm{CV}}$ in the other blocks of Tables~\ref{tab-1-asy} and~\ref{tab-2-asy}:
\begin{itemize}
\item [(1)] Top-right: $\widehat{\boldsymbol{\Theta}}, 0.1\lambda_{\mathrm{CV}}$ -- confidence interval constructed based on~\eqref{eq-hatci} with $0.1\lambda_{\mathrm{CV}}$ used in $\widehat{\boldsymbol{\Theta}}$, which is provided by the \textsc{clime} package;
  	\item [(2)] Middle-left: $\widetilde{\boldsymbol{\Theta}}$ -- confidence interval replaces $\widehat{\boldsymbol{\Theta}}$ in~\eqref{eq-hatci} with $\widetilde{\boldsymbol{\Theta}}$, computed using~\eqref{eq-ThetaTilde} with $\lambda_{\mathrm{CV}}$ in the \textsc{clime} package;
	\item [(3)] Middle-right: $\widehat{\boldsymbol{\Theta}}$, \textsc{flare} -- confidence interval constructed based on~\eqref{eq-hatci}, and $\widehat{\boldsymbol{\Theta}}$ is computed using the \textsc{flare} package;
	\item [(4)] Bottom-left: Merge -- confidence interval constructed based on~\eqref{eq-hatci}, the tuning parameter for the sparse precision matrix is provided by the \textsc{flare} package using cross-validation, and $\widehat{\boldsymbol{\Theta}}$ is optimised by the \textsc{clime} package using the previously mentioned tuning parameter.
	\end{itemize}

\begin{landscape}
\begin{table}[htbp]
\centering
{\footnotesize
\begin{tabular}{cc | cccc| ccccc}
\hline
$\beta^o_j$ & $\hat{\beta}_j - \beta^o_j$  & $\hat{b}_j-\beta^o_j$  & EC & Width & $p$-value  & $\hat{b}_j - \beta^o_j$ & EC & Width & $p$-value \\
\hline
 & & \multicolumn{4}{c|}{$\widehat{\boldsymbol{\Theta}}$, $\lambda_{\mathrm{CV}}$} & \multicolumn{4}{c}{$\widehat{\boldsymbol{\Theta}}$, $0.1\lambda_{\mathrm{CV}}$}\\	
\hline 
1 & -.051(.005) & -.003(.005) & .92(.03) & .158(.000) & .000(.000) & -.001(.005)  & .93(.03) & .160(.000) & .000(.000)\\
0 & .000(.002) &  .001(.003) & .96(.02) & .124(.000) & .540(.028) & .001(.003)  & .95(.02) & .125(.000) & .531(.028)\\
0 & .000(.001) & -.000(.003)  & .97(.02) & .123(.000) & .540(.029) & .000(.003)  & .97(.02) & .125(.000) & .534(.029)\\
0 & .001(.002) & .002(.003)  & .92(.03) & .124(.000) & .511(.030) & .002(.003)  & .91(.03) & .125(.000) & .503(.030)\\
0 & -.000(.002) & -.001(.003)  & .94(.02) & .124(.000) & .532(.030) & -.001(.003)  & .95(.02) & .125(.000) & .526(.030)\\
0 & .000(.001) & -.001(.003)  & .99(.01) & .123(.000) & .522(.028) & -.002(.003)  & .99(.01) & .125(.000) & .516(.027)\\
0 &  -.001(.001) & -.004(.003)  & .96(.02) & .124(.000) & .527(.030) & -.005(.003)  & .96(.02) & .125(.000) & .522(.030)\\
0 &  -.001(.002) & -.003(.003)  & .93(.03) & .123(.000) & .516(.030) & -.003(.003)  & .93(.03) & .125(.000) & .510(.030)\\
0 & -.001(.002)  & -.000(.003) & .90(.03) & .123(.000) & .490(.030) & .000(.004)  & .91(.03) & .125(.000) & .482(.029)\\
0& .001(.002) & -.001(.003) & .94(.02) & .123(.000) & .473(.029) & -.001(.004)  & .94(.02) & .125(.000) & .467(.029)\\
\hline
& &  \multicolumn{4}{c|}{$\widetilde{\boldsymbol{\Theta}}$} &   \multicolumn{4}{c}{$\widehat{\boldsymbol{\Theta}}$, \textsc{flare}}\\	
\hline
1 & -.051(.005) & -.003(.005)  & .95(.02) & .171(.000) & .000(.000) & -.007(.005)  & .90(.03) & .153(.000) & .000(.000)\\
0 & .000(.002) & .001(.003) & .97(.02) & .135(.000) & .568(.027) & .001(.003)  & .96(.02) & .121(.000) & .546(.028)\\
0 & .000(.001) & -.000(.003)  & .98(.01) & .134(.000) & .569(.028) & .000(.003)  & .97(.02) & .121(.000) & .545(.029)\\
0 & .001(.002) & .002(.003)  & .94(.02) & .135(.000) & .536(.030) & .002(.003)  & .92(.03) & .121(.000) & .515(.030)\\
0 & -.000(.002) & -.001(.003) & .98(.01) & .134(.000) & .555(.029) & -.001(.003)  & .95(.02) & .121(.000) & .536(.030)\\
0 & .000(.001) & -.001(.003) & .99(.01) & .134(.000) & .553(.026) & -.001(.003)  & .99(.01) & .121(.000) & .529(.028)\\
0 &  -.001(.001) & -.004(.003)  & .99(.01) & .134(.000) & .555(.029) & -.004(.003)  & .98(.01) & .121(.000) & .532(.030)\\
0 &  -.001(.002) & -.003(.003) & .98(.01) & .134(.000) & .548(.029) & -.003(.003)  & .92(.03) & .121(.000) & .525(.030)\\
0 & -.001(.002) & -.000(.003)  & .94(.02) & .134(.000) & .519(.029) & .000(.003)  & .90(.03) & .121(.000) & .497(.030)\\
0& .001(.002) & -.001(.003)& .95(.02) & .134(.000) & .503(.029) & -.001(.003)  & .94(.02) & .121(.000) & .479(.029)\\
\hline
 & & \multicolumn{4}{c|}{Merge} & \multicolumn{4}{c}{MPLE}  \\	
\hline
1 & -.051(.005) & -.004(.005) & .91(.03) & .157(.000) & .000(.000)& .005(.005)  & .95(.22) & .172(.005) & .000(.000)\\
0 & .000(.002) & .001(.003)  & .96(.02) & .123(.000) & .544(.028)& .001(.003) & .95(.22) &.136(.005) & .499(.285)\\
0 & .000(.001) & .000(.003)  & .97(.02) & .123(.000) & .546(.029)& -.001(.003) & .98(.14) &.135(.004) & .513(.282)\\
0 & .001(.002) & .002(.003)  & .91(.03) & .123(.000) & .514(.030) & .001(.004) & .92(.27) &.136(.005) & .489(.301)\\
0 & -.000(.002) & -.001(.003)  & .95(.02) & .123(.000) & .536(.030) & .001(.003) & .94(.24) &.136(.004) & .500(.296)\\
0 & .000(.001) & -.001(.003)  & .99(.01) & .123(.000) & .528(.027)& -.001(.003) & .98(.14) &.136(.004) & .500(.279)\\
0 &  -.001(.001) & -.004(.003)  & .96(.02) & .123(.000) & .528(.029)& -.005(.003) & .95(.22) &.136(.004) & .503(.303)\\
0 &  -.001(.002) & -.003(.003)  & .93(.02) & .123(.000) & .523(.029) & -.004(.003) & .95(.22) &.135(.004) & .512(.307)\\
0 & -.001(.002)  & .000(.003) & .92(.03) & .123(.000) & .493(.030)& .000(.004) & .91(.29) &.136(.004) & .477(.297)\\
0& .001(.002) & -.001(.003)  & .94(.02) & .123(.000) & .476(.029)& .001(.004) & .96(.20) &.135(.004) & .446(.290)\\
\hline
\end{tabular}}
\caption{Simple preliminary example, $d_o = 1$.}\label{tab-1-asy}
\end{table}
\end{landscape}

\begin{landscape}
\begin{table}[htbp]
\centering
{\footnotesize
\begin{tabular}{cc | cccc| cccc}
\hline
$\beta^o_j$ & $\hat{\beta}_j - \beta^o_j$ & $\hat{b}_j - \beta^o_j$ & EC & Wid & pvals  & $\hat{b}_j$  & EC & Width & $p$-value \\
\hline
& & \multicolumn{4}{c|}{$\widehat{\boldsymbol{\Theta}}$, $\lambda_{\mathrm{CV}}$} & \multicolumn{4}{c}{$\widehat{\boldsymbol{\Theta}}$, $0.1\lambda_{\mathrm{CV}}$}\\	
\hline 
1& -.038(.005)  & -.003(.004) & .91(.03) & .153(.000) & .000(.000) & .001(.004) & .92(.03) & .157(.000) & .000(.000)\\
1& -.038(.004)  & -.003(.004)  & .95(.02) & .154(.000) & .000(.000) & .001(.004)  & .97(.02) & .158(.000) & .000(.000)\\
1 & -.039(.004) & -.003(.004) & .93(.03) & .153(.000) & .000(.000) & .000(.004)  & .93(.03) & .157(.000) & .000(.000)\\
0& .000(.003)  & .000(.003) & .91(.03) & .123(.000) & .507(.032) & .000(.004)  & .90(.03) & .125(.000) & .502(.032)\\
0& .000(.003) & .000(.003)  & .93(.03) & .123(.000) & .497(.031) & .000(.003)  & .93(.03) & .125(.000) & .492(.031)\\
0 & -.001(.002)  & -.002(.003)& .93(.03) & .123(.000) & .503(.031) & -.002(.003) & .93(.03) & .125(.000) & .498(.031)\\
0& -.004(.002)  & -.004(.003) & .94(.02) & .123(.000) & .521(.030) & -.004(.003)& .95(.02) & .125(.000) & .516(.030)\\
0& -.003(.002) & -.005(.003)  & .92(.03) & .123(.000) & .483(.030) & -.005(.003) & .91(.03) & .125(.000) & .479(.030)\\
0& -.002(.003) & -.003(.003)  & .92(.03) & .123(.000) & .485(.030) & -.002(.003) & .92(.03) & .125(.000) & .478(.030)\\
0 & .003(.003) & .002(.004) & .92(.03) & .123(.000) & .451(.029) & .002(.004) & .92(.03) & .125(.000) & .447(.029)\\
\hline
 & &  \multicolumn{4}{c|}{$\widetilde{\boldsymbol{\Theta}}$} &   \multicolumn{4}{c}{$\widehat{\boldsymbol{\Theta}}$, \textsc{flare}}\\	
\hline
1& -.038(.005)  & -.003(.004) & .91(.03) & .154(.000) & .000(.000) & -.006(.004) & .91(.03) & .149(.000) & .000(.000)\\
1& -.038(.004) & -.003(.004) & .97(.02) & .155(.000) & .000(.000) & -.006(.004)  & .95(.02) & .149(.000) & .000(.000)\\
1 & -.039(.004) & -.003(.004) & .92(.03) & .154(.000) & .000(.000) & -.007(.004) & .89(.03) & .148(.000) & .000(.000)\\
0& .000(.003)  & .000(.003) & .93(.03) & .136(.000) & .534(.031) & .000(.003) & .91(.03) & .121(.000) & .509(.032)\\
0& .000(.003)  & .000(.003)  & .98(.01) & .135(.000) & .524(.030) & .000(.003) & .94(.02) & .121(.000) & .499(.032)\\
0 & -.001(.002) & -.002(.003) & .97(.02) & .135(.000) & .534(.030) & -.002(.003) & .93(.03) & .121(.000) & .504(.031)\\
0 & -.001(.002)  & -.004(.003) & .95(.02) & .135(.000) & .552(.029) & -.004(.003) & .94(.02) & .121(.000) & .522(.030)\\
0& -.003(.002) & -.005(.003) & .97(.02) & .135(.000) & .515(.029) & -.005(.003) & .92(.03) & .121(.000) & .485(.031)\\
0& -.002(.003)  & -.003(.003)& .94(.02) & .135(.000) & .515(.029) & -.003(.003) & .92(.03) & .121(.000) & .486(.030)\\
0 & .003(.003)  & .002(.004)& .95(.02) & .135(.000) & .482(.028) & .002(.004) & .91(.03) & .121(.000) & .451(.029)\\
\hline
& & \multicolumn{4}{c|}{Merge} & \multicolumn{4}{c}{MPLE}  \\	
\hline
1& -.038(.005)  & -.003(.004)  & .91(.03) & .154(.000) & .000(.000)& .008(.045)  & .94(.24) & .171(.005) & .000(.000)\\
1& -.038(.004) & -.003(.004)  & .95(.02) & .154(.000) & .000(.000) & .008(.039) & .99(.10) & .172(.005) & .000(.000)\\
1 & -.039(.004) & -.004(.004)  & .93(.03) & .153(.000) & .000(.000) & .003(.041) & .96(.20) & .171(.005) & .000(.000)\\
0& .000(.003)& .000(.003) & .92(.03) & .124(.000) & .510(.032)& .000(.039) & .90(.30) & .137(.005) & .487(.308)\\
0& .000(.003) & .000(.003) & .94(.02) & .123(.000) &.497(.031) & .001(.036) & .95(.22) & .136(.004) & .484(.301)\\
0 & -.001(.002)   & -.002(.003)  & .93(.03) & .123(.000) & .504(.031) & -.002(.034) & .97(.17) & .137(.004) & .497(.299)\\
0 & -.001(.002) & -.004(.003)  & .95(.02) & .123(.000) & .523(.030) & -.005(.034) & .95(.22) & .137(.004) & .508(.294)\\
0& -.003(.002) & -.005(.003)  & .92(.03) & .123(.000) & .485(.030) & -.006(.035) & .96(.20) & .136(.004) & .491(.295)\\
0& -.002(.003)  & -.003(.003)  & .92(.03) & .123(.000) & .486(.029) & .000(.038) &.93(.26) & .137(.004) & .464(.285)\\
0 & .003(.003) & .002(.004)  & .92(.03) & .123(.000) & .452(.029) & .001(.039)  & .94(.24) & .136(.004) & .443(.284)\\
\hline
\end{tabular}}
\caption{Simple preliminary example, $d_o = 3$.}\label{tab-2-asy}
\end{table}
\end{landscape}


Comparing the columns of $\hat{\beta}_j - \beta^o_j$ and $\hat{b}_j - \beta^o_j$, we can see that our proposed methods indeed correct the bias due to the shrinkage introduced by the Lasso estimators, but the biases for the signal variables are not fully corrected, and in terms of the signs of the errors, they all tend to be under-corrected, except the $\widehat{\boldsymbol{\Theta}}, 0.1\lambda_{\mathrm{CV}}$ blocks.  The differences between the $\widehat{\boldsymbol{\Theta}}, \lambda_{\mathrm{CV}}$ and $\widehat{\boldsymbol{\Theta}}, 0.1\lambda_{\mathrm{CV}}$ blocks show that the 10-fold cross-validation chosen tuning parameters still over-penalise the sparse precision matrix estimation and lead to under-correction of $\widehat{\boldsymbol{b}}$.  From the EC and Width columns in the $\widehat{\boldsymbol{\Theta}}, \lambda_{\mathrm{CV}}$ and $\widetilde{\boldsymbol{\Theta}}$ blocks, we can see that in some cases, using $\widetilde{\boldsymbol{\Theta}}$ indeed helps in terms of improving the coverages (naturally, the confidence intervals are a little wider).  
We can also see that 
the \textsc{flare} package does not produce identical solutions to the \textsc{clime} package even in this relatively simple context.  It is worth noting that the $\widehat{\boldsymbol{\Theta}}$, \textsc{FLARE} and Merge blocks have the same initial estimators, the same tuning parameter grids for $\widehat{\boldsymbol{\Theta}}$ and the same cross-validation algorithms.  Further investigation in the case $d_o=1$ reveals that the \textsc{flare} package tends to choose slightly larger tuning parameters, which explains the better centering and coverage of the \textsc{clime} confidence intervals; see Table~\ref{tab-lambda}.


\begin{table}[htbp]
\centering
\begin{tabular}{c c c}
\hline
Packages & Mean & Median\\
\hline
\textsc{clime} & 0.022 & 0.015\\
\textsc{flare} & 0.026 & 0.025\\
\hline
\end{tabular}
\caption{Selected tuning parameter comparisons.}\label{tab-lambda}
\end{table}

\subsection{Further simulated examples}\label{sec-simulations}

In order to provide a deeper understanding of our proposed method, we consider the following 16 simulation settings described below, where CT is the censoring time and CR is censoring rate.

\begin{enumerate}
	\item [(1)] $n = 1000$, $p = 10$, $\beta^o_i = 1$, $i = 1, 2, 3$, $\beta^o_i = 0$, $i = 4, \ldots, 10$, $\boldsymbol{Z}\sim \mathcal{N}(0, \boldsymbol{\Sigma}^Z_a)$, $\mathrm{CT} = 5$, $\mathrm{CR}\approx 15\%$;
	\item [(2)] $n = 1000$, $p = 10$, $\beta^o_i = 1$, $i = 1, 2, 3$, $\beta^o_i = 0$, $i = 4, \ldots, 10$, $\boldsymbol{Z}\sim \mathcal{N}(0, \boldsymbol{\Sigma}^Z_a)$, $\mathrm{CT} = 2$, $\mathrm{CR}\approx 30\%$; 
	\item [(3)] $n = 1000$, $p = 10$, $(\beta^o_1, \beta^o_2, \beta^o_3) = (1.2, 1, 0.8)$, $\beta^o_i = 0$, $i = 4, \ldots, 10$, $\boldsymbol{Z}\sim \mathcal{N}(0, \boldsymbol{\Sigma}^Z_a)$, $\mathrm{CT} = 5$, $\mathrm{CR}\approx 15\%$; 
	\item [(4)] $n = 1000$, $p = 10$, $(\beta^o_1, \beta^o_2, \beta^o_3) = (1.2, 1, 0.8)$, $\beta^o_i = 0$, $i = 4, \ldots, 10$, $\boldsymbol{Z}\sim \mathcal{N}(0, \boldsymbol{\Sigma}^Z_a)$, $\mathrm{CT} = 2$, $\mathrm{CR}\approx 30\%$;  
	\item [(5-8)] As for (1)-(4), but with $\boldsymbol{Z} \sim \mathcal{N}(0, \boldsymbol{\Sigma}^Z_b)$, $\mathrm{CT} = 10, 2.5, 10, 2.5$;
	\item [(9-10)] As for (1)-(2), but with $p = 300$, $\beta^o_i = 1$, $i = 1, \ldots, 6$, $\beta^o_i = 0$, $i = 7, \ldots, 300$, $\mathrm{CT} = 9, 2.5$;
	\item [(11-12)] As for (3)-(4), but with $p = 300$, $(\beta^o_1, \ldots, \beta^o_6) = (0.5, 0.7, 0.9, 1.1, 1.3, 1.5)$, $\beta^o_i = 0$, $i = 7, \ldots, 300$, $\mathrm{CT} = 10, 3$;
	\item [(13-16)] As for (9)-(12), but with $\boldsymbol{Z} \sim \mathcal{N}(0, \boldsymbol{\Sigma}^Z_b)$, $\mathrm{CT} = 100, 7, 100, 7$.
\end{enumerate}

In \Cref{tab-3-asy}, we report averaged results for signal and noise variables separately, with $\widehat{\boldsymbol{\Theta}}$ and $\widetilde{\boldsymbol{\Theta}}$ chosen by 10-fold cross-validation.  
The simulations were run on a cluster, each node of which is a Intel(R) Xeon(R) CPU E5-2670 0@2.60GHz machine, with 16 CPUs.  To conduct one repetition of a ($n, p$) = (1000, 300) setting, it took on average 32 minutes.  This explains why we limit our simulations to $p=300$ even though our theory can handle $p \gg n$ settings.

\begin{landscape}
\begin{table}[htbp]
\centering
{\footnotesize
\begin{tabular}{c | cc |ccc| ccc}
\hline
& & & \multicolumn{3}{c|}{$\widehat{\boldsymbol{\Theta}}$} & \multicolumn{3}{c}{$\widetilde{\boldsymbol{\Theta}}$}\\	
  & $\hat{b}_j-\beta^o_j$ & $\hat{\beta}_j-\beta^o_j$ & EC & Width & $p$-values  & EC & Width & $p$-values \\
\hline
(1) S & -.003(.001) & -.038(.001) & .933(.008) & .153(.000) & .000(.000) & .933(.008) & .154(.000) & .000(.000)\\
(1) N & -.001(.001) & -.001(.001) & .929(.008) & .123(.000) & .491(.010) & .956(.006) & .135(.000) & .521(.009)\\
(2) S & -.009(.001) & -.039(.001) & .907(.009) & .150(.000) & .000(.000) & .940(.008) & .165(.000) & .000(.000)\\
(2) N & -.002(.001) & -.001(.001) & .921(.008) & .123(.000) & .503(.010) & .957(.006) & .147(.000) & .556(.009)\\
(3) S & -.003(.001) & -.038(.001) & .940(.007) & .154(.000) & .000(.000) & .940(.007) & .155(.000) & .000(.000)\\
(3) N & -.002(.001) & -.001(.001) & .933(.008) & .123(.000) & .497(.010) & .951(.007) & .135(.000) & .527(.009)\\
(4) S & -.009(.001) & -.039(.001) & .883(.010) & .150(.000) & .000(.000) & .913(.009) & .166(.000) & .000(.000)\\
(4) N & -.002(.001) & -.001(.001) & .914(.009) & .123(.000) & .510(.010) & .957(.006) & .147(.000) & .565(.010)\\
(5) S & -.004(.002) & -.036(.002) & .937(.008) & .177(.000) & .000(.000) & .953(.006) & .194(.000) & .000(.000)\\
(5) N & .000(.001) & .000(.001) & .933(.008) & .152(.000) & .496(.009) & .937(.008) & .152(.000) & .496(.009)\\
(6) S & -.008(.002) & -.035(.002) & .887(.010) & .174(.000) & .000(.000) & .950(.007) & .211(.000) & .000(.000)\\
(6) N & .000(.001) & .000(.001) & .913(.009) & .151(.000) & .495(.01) & .921(.008) & .154(.000) & .508(.010)\\
(7) S & -.003(.002) & -.036(.002) & .930(.008) & .177(.000) & .000(.000) & .940(.007) & .194(.000) & .000(.000)\\
(7) N & .000(.001) & .000(.001) & .936(.008) & .152(.000) & .496(.009) & .936(.008) & .152(.000) & .494(.009)\\
(8) S & -.007(.002) & -.033(.002) & .903(.009) & .175(.000) & .000(.000) & .940(.007) & .212(.000) & .000(.000)\\
(8) N & -.001(.001) & .000(.001) & .917(.009) & .151(.000) & .496(.010) & .920(.009) & .154(.000) & .504(.010)\\
(9) S & -.169(.005) & -.264(.005) & .290(.026) & .242(.001) & .000(.000) & .322(.027) & .268(.001) & .000(.000)\\
(9) N & .000(.002) & .000(.001) & .984(.006) & .218(.001) & .625(.014) & .992(.004) & .251(.001) & .663(.014)\\
(10) S & -.078(.002) & -.155(.002) & .415(.016) & .138(.000) & .000(.000) & .495(.016) & .159(.000) & .000(.000)\\
(10) N & .000(.001) & .000(.000) & .976(.004) & .120(.000) & .609(.008) & .992(.002) & .149(.000) & .668(.007)\\
(11) S & -.063(.002) & -.150(.002) & .553(.016) & .143(.000) & .000(.000) & .612(.015) & .149(.000) & .000(.000)\\
(11) N & .000(.001) & .000(.000) & .977(.004) & .120(.000) & .586(.008) & .988(.003) & .136(.000) & .621(.008)\\
(12) S & -.081(.002) & -.154(.002) & .413(.016) & .141(.000) & .000(.000) & .485(.016) & .158(.000) & .000(.000)\\
(12) N & .000(.001) & .000(.000) & .976(.005) & .120(.000) & .608(.008) & .991(.002) & .147(.000) & .665(.007)\\
(13) S & -.034(.002) & -.122(.002) & .848(.011) & .178(.000) & .000(.000) & .895(.010) & .198(.000) & .000(.000)\\
(13) N & .000(.001) & .000(.000) & .985(.003) & .150(.000) & .593(.008) & .985(.003) & .150(.000) & .593(.008)\\
(14) S & -.052(.002) & -.126(.002) & .745(.014) & .177(.000) & .000(.000) & .852(.011) & .219(.000) & .000(.000)\\
(14) N & .000(.001) & .000(.000) & .988(.003) & .149(.000) & .624(.008) & .989(.003) & .151(.000) & .628(.008)\\
(15) S & -.028(.002) & -.122(.002) & .863(.011) & .180(.000) & .000(.000) & .897(.009) & .198(.000) & .000(.000)\\
(15) N & .000(.001) & .000(.000) & .985(.003) & .151(.000) & .593(.008) & .985(.003) & .151(.000) & .593(.008)\\
(16) S & -.046(.002) & -.126(.002) & .772(.013) & .178(.000) & .000(.000) & .845(.011) & .219(.000) & .000(.000)\\
(16) N & .000(.001) & .000(.000) & .987(.003) & .149(.000) & .624(.008) & .988(.003) & .151(.000) & .628(.008)\\
\hline
\end{tabular}}
\caption{Simulation settings (1)-(16).  S and N rows are for results for signal and noise variables respectively.}\label{tab-3-asy}
\end{table}
\end{landscape}

It is reassuring to see that, in all cases, the confidence intervals for the noise variables have close to nominal coverage and the $p$-values for the noise variables appear to be uniformly distributed.  Thus, our methodology is providing a reliable method for identifying signal variables, with uncertainty quantification.  On the other hand, while the confidence intervals for the signal variables have good coverage when $p=10$ (particularly with $\widetilde{\boldsymbol{\Theta}}$), it is much more challenging ensure nominal adequate for the signal variables in the $p=300$ case.  Apparently, the sample size needs to be very large for the asymptotics to `kick in', to the extent that we can think, for instance, that \textbf{(A4)(c)} is satisfied.  The greater width of the intervals when using $\widetilde{\boldsymbol{\Theta}}$ yields improved coverage for the signal variables, but leads to some over-coverage for the noise variables.

One approach in high-dimensional settings, then, is to use our methodology as a screening method to identify signal variables (with false discovery guarantees), and then use the standard MPLE inference to obtain confidence intervals for the signal variables at a second stage.

\subsection{Real data analysis}\label{sec-realdata}	

In this section, we apply our method to a diffuse large B-cell lymphoma (DLBCL) dataset, comprising survival times of 240 DLBCL patients and gene expression data from 7399 genes \citep{RosenwaldEtal2001}.  
To reduce dimensionality, we computed the Lasso path, noting that the cross-validation algorithm picked the 16th largest value of $\lambda$ on our grid of size 100.  In total, 84 variables were selected at some stage in the first 25 $\lambda$ values, and we therefore retained these 84 variables in our subsequent analysis.


In \Cref{fig-2}, we plot the glmnet solution paths, with solid and black paths being the ones for those variables deemed to be significant according to our methodology, and dashed and grey paths for those variables deemed insignificant.  The left and right panels correspond to the use of $\widehat{\boldsymbol{\Theta}}$ and $\widetilde{\boldsymbol{\Theta}}$ respectively, and the red vertical lines indicate the regularisation parameter values chosen by cross-validation.  The only difference between the inferences drawn from the two precision matrix estimates is the confidence interval widths, so the selected variables when using $\widehat{\boldsymbol{\Theta}}$ are a proper subset of those obtained using $\widetilde{\boldsymbol{\Theta}}$.   

It can be seen that some variables enter the model fairly early along the path, but appear not significant according to our methods.  These variables are often omitted from the model at a later stage along the path, as other variables enter.  This observation is demonstrated in \Cref{tab-real-asy}, which presents the median life-spans of the corresponding variables, where the life-span is defined as the proportion of the locations on the solution paths for which a certain variable is chosen.  

\begin{figure*}
\centering
\includegraphics[width = \textwidth]{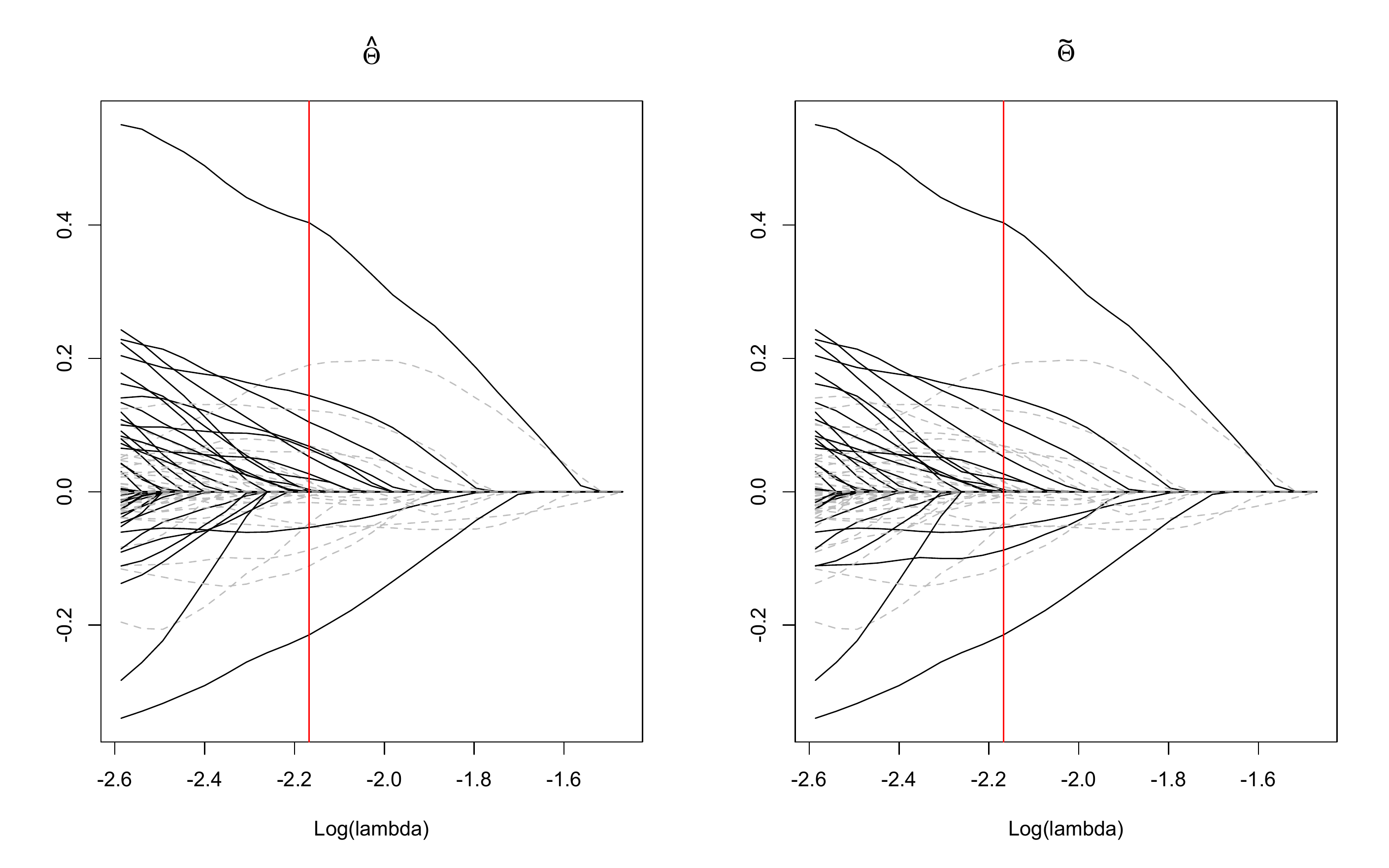}	
\caption{Solution paths}\label{fig-2}
\end{figure*}

\begin{table}[htbp]
\centering
\begin{tabular}{c c c | c c c}
\hline
\multicolumn{3}{c|}{$\widehat{\boldsymbol{\Theta}}$} &  \multicolumn{3}{c}{$\widetilde{\boldsymbol{\Theta}}$}  \\
No. & Significant & Insignificant & No. & Significant & Insignificant \\
\hline
41 & 0.78 & 0.26 & 32 & 0.78 & 0.35\\
\hline
\end{tabular}
\caption{Median life-spans for variables deemed significant and insignificant.}\label{tab-real-asy}
\end{table}

\section*{Appendix}

We first recall the following basic facts about the operator norms of a matrix.
\begin{lemma}\label{lem:multi}
	For any matrix $\boldsymbol{A} = (A_{ij}) \in\mathbb{R}^{m\times m}$, we have
	\[
	\|\boldsymbol{A}\|_{\mathrm{op},1} = \max_{j=1, \ldots, m}\sum_{i=1}^m |A_{ij}| \quad \text{and} \quad \|\boldsymbol{A}\|_{\mathrm{op},\infty} = \max_{i=1, \ldots, m}\sum_{j=1}^m |A_{ij}|
	\]
\end{lemma}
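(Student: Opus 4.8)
The plan is to establish each of the two identities by proving the matching pair of inequalities: a bound $\|\boldsymbol{A}\boldsymbol{v}\|_\bullet \le (\text{RHS})\,\|\boldsymbol{v}\|_\bullet$ valid for every $\boldsymbol{v}$, and a reverse bound obtained by plugging in an explicit extremal (or near-extremal) choice of $\boldsymbol{v}$. If $\boldsymbol{A} = 0$ both sides of each identity vanish, so we may assume $\boldsymbol{A} \neq 0$.

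First I would treat $\|\boldsymbol{A}\|_{\mathrm{op},1}$. For the upper bound, expand $\|\boldsymbol{A}\boldsymbol{v}\|_1 = \sum_{i=1}^m \bigl|\sum_{j=1}^m A_{ij}v_j\bigr|$, apply the triangle inequality, and interchange the order of summation to get $\sum_{j=1}^m |v_j|\sum_{i=1}^m |A_{ij}| \le \bigl(\max_{j}\sum_{i}|A_{ij}|\bigr)\|\boldsymbol{v}\|_1$; dividing by $\|\boldsymbol{v}\|_1$ and taking the supremum gives $\|\boldsymbol{A}\|_{\mathrm{op},1} \le \max_j \sum_i |A_{ij}|$. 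For the reverse inequality, let $j^*$ attain $\max_j\sum_i |A_{ij}|$ and take $\boldsymbol{v} = \boldsymbol{e}_{j^*}$, so that $\|\boldsymbol{A}\boldsymbol{e}_{j^*}\|_1 = \sum_i |A_{ij^*}|$ and $\|\boldsymbol{e}_{j^*}\|_1 = 1$; hence $\|\boldsymbol{A}\|_{\mathrm{op},1} \ge \max_j\sum_i|A_{ij}|$, and the two bounds together give the first identity.

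Next, for $\|\boldsymbol{A}\|_{\mathrm{op},\infty}$, the upper bound is analogous: $\|\boldsymbol{A}\boldsymbol{v}\|_\infty = \max_i \bigl|\sum_j A_{ij}v_j\bigr| \le \max_i \sum_j |A_{ij}|\,|v_j| \le \bigl(\max_i \sum_j |A_{ij}|\bigr)\|\boldsymbol{v}\|_\infty$. For the reverse inequality, let $i^*$ attain $\max_i \sum_j |A_{ij}|$ and choose $\boldsymbol{v}$ with entries $v_j := \sgn(A_{i^*j})$, interpreting $\sgn(0) := 0$; then $\|\boldsymbol{v}\|_\infty \le 1$, while the $i^*$th coordinate of $\boldsymbol{A}\boldsymbol{v}$ equals $\sum_j A_{i^*j}\sgn(A_{i^*j}) = \sum_j |A_{i^*j}| = \max_i\sum_j|A_{ij}|$, so $\|\boldsymbol{A}\boldsymbol{v}\|_\infty \ge \max_i \sum_j |A_{ij}|$ and the ratio is at least the claimed value.

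There is no genuine obstacle here — the argument is entirely elementary. The only point deserving a moment's care is the choice of test vector in the $\ell_\infty$ case: one cannot simply use a standard basis vector but must use the sign pattern of a maximal row, and the convention $\sgn(0) = 0$ is what keeps $\|\boldsymbol{v}\|_\infty \le 1$ when that row has some zero entries.
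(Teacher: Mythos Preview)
Your proof is correct and follows essentially the same approach as the paper: an upper bound via the triangle inequality and a lower bound via an explicit extremal vector ($\boldsymbol{e}_{j^*}$ for the $\ell_1$ case, the sign pattern of a maximal row for the $\ell_\infty$ case). The paper in fact omits the $\ell_\infty$ case as ``similar'', so your write-up is slightly more complete.
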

\begin{proof}
For any $\boldsymbol{v} = (v_1,\ldots,v_m)^\top \in \mathbb{R}^m \setminus \{\boldsymbol{0}\}$, we have
	\[
	\frac{\|\boldsymbol{A}\boldsymbol{v}\|_{1}}{\|\boldsymbol{v}\|_1} = \frac{\sum_{i=1}^m \bigl|\sum_{j=1}^m A_{ij}v_j\bigr|}{\sum_{j=1}^m |v_j|} \leq \frac{\sum_{j=1}^m |v_j| \sum_{i=1}^m |A_{ij}|}{\sum_{j=1}^m |v_j|} \leq \max_{j=1,\ldots,m} \sum_{i=1}^m |A_{ij}|.
	\] 
On the other hand, suppose initially that $\boldsymbol{A} \neq 0$.  Let $j^* \in \argmax_{j = 1, \ldots, m}\sum_{i=1}^m|A_{ij}|$, and let $\boldsymbol{v} = (v_1,\ldots,v_m)^\top \in \mathbb{R}^m$ be given by $v_i = \mathrm{sgn}(A_{ij^*})\mathbbm{1}_{\{i = j^*\}}$ for $i=1,\ldots,m$.  Then
	\[
	\max_{j = 1, \ldots, m}\sum_{i=1}^m |A_{ij}| = \sum_{i=1}^m |A_{ij^*}| = \frac{\|\boldsymbol{A}\boldsymbol{v}\|_{1}}{\|\boldsymbol{v}\|_1}.
	\]
This equality also holds for any $\boldsymbol{v} \neq 0$ when $\boldsymbol{A} = 0$, so the result for the case of $\|\cdot\|_{\mathrm{op,1}}$ follows.  The argument for the case of $\|\cdot\|_{\mathrm{op,\infty}}$ is similar and is omitted.
\end{proof}

The lemma below is key in deriving both the asymptotic normality of the leading term and the asymptotic negligibility of the residual terms.  It provides conditions under which we can control the deviations of the process $\{\bar{\boldsymbol{Z}}(t,\boldsymbol{\beta}):t \in [0,t_*)\}$ from $\{\boldsymbol{\mu}(t,\boldsymbol{\beta}):t \in [0,t_*)\}$, where $t_* \in [0,t_+)$ is an (initially arbitrary) truncation.  
It will be convenient to define some notation.  Let 
\begin{align}
\label{Eq:Lmu}
L_{\boldsymbol{\mu}}(\|\boldsymbol{\beta}\|_1) &:= \frac{e^{2\|\boldsymbol{\beta}\|_1K_Z}(L + K_Z\|f_T\|_\infty + 2K_Z\|\boldsymbol{\beta}\|_1L) + K_Ze^{2\|\boldsymbol{\beta}\|_1K_Z}(\|f_T\|_\infty + 2\|\boldsymbol{\beta}\|_1L)}{\bar{F}_T(t_*)}, \\
\label{Eq:LZ}
L_{\bar{\boldsymbol{Z}}}(\|\boldsymbol{\beta}\|_1) &:= L + 4K_Z\|\boldsymbol{\beta}\|_1L. 
\end{align}
\begin{lemma}
\label{Lemma:Zbarmu}
Assume~\textbf{(A1)} and~\textbf{(A2)(a)} and let $t_* \in \mathcal{T}$.  For $\boldsymbol{\beta} \in \mathbb{R}^p$, let $\epsilon_n = \epsilon_n(\|\boldsymbol{\beta}\|_1) := \frac{3K_Ze^{2\|\boldsymbol{\beta}\|_1K_Z}}{n\bar{F}_T(t_*) - n^{1/2}(\log n)\bar{F}_T^{1/2}(t_*)}$ and fix $\epsilon > \epsilon_n$.  Set
\[
h_0 = h_0(n,\epsilon) := \min\biggl\{\frac{1}{2(\|\boldsymbol{\beta}\|_1+1)L} \, , \, \frac{\epsilon - \epsilon_n}{3L_{\bar{\boldsymbol{Z}}}(\|\boldsymbol{\beta}\|_1)} \, , \, \frac{\epsilon}{3L_{\boldsymbol{\mu}}(\|\boldsymbol{\beta}\|_1)} \, , \, \frac{\bar{F}_T(t_*)}{2\|f_T\|_\infty}\biggr\}.
\]
Then, writing $M_0 = M_0(n,\epsilon) := t_*/h_0 + n^3 + 1$, we have
\begin{align*}
\mathbb{P}\biggl(\sup_{t \in [0,t_*)} &\|\bar{\boldsymbol{Z}}(t,\boldsymbol{\beta}) - \boldsymbol{\mu}(t,\boldsymbol{\beta})\|_\infty > \epsilon\biggr) \leq \frac{1}{2n} + e^{-(\log^2 n)/2} \\
&+ (2p+2)M_0\exp\biggl(-\frac{n\epsilon^2\bar{F}_T(t_*)}{1152K_Ze^{4\|\boldsymbol{\beta}\|_1K_Z}(K_Z + \frac{\epsilon}{36})}\biggr) + 2M_0\exp\biggl(-\frac{3n\bar{F}_T(t_*)}{28e^{4\|\boldsymbol{\beta}\|_1K_Z)}}\biggr).
\end{align*}
\end{lemma}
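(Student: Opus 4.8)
The plan is to write $\bar{\boldsymbol{Z}}(s,\boldsymbol{\beta}) - \boldsymbol{\mu}(s,\boldsymbol{\beta})$ as a ratio whose numerator is centred. Setting $S^{(0)}(s,\boldsymbol{\beta}) := n^{-1}\sum_{i=1}^{n} \tilde{w}_i(s,\boldsymbol{\beta}) = n^{-1}\sum_{i=1}^n Y_i(s)e^{\boldsymbol{\beta}^{\top}\boldsymbol{Z}_i(s)}$ and $G_n(s,\boldsymbol{\beta}) := n^{-1}\sum_{i=1}^n \{\boldsymbol{Z}_i(s) - \boldsymbol{\mu}(s,\boldsymbol{\beta})\}\tilde{w}_i(s,\boldsymbol{\beta})$, one has the exact identity $\bar{\boldsymbol{Z}}(s,\boldsymbol{\beta}) - \boldsymbol{\mu}(s,\boldsymbol{\beta}) = G_n(s,\boldsymbol{\beta})/S^{(0)}(s,\boldsymbol{\beta})$ whenever $S^{(0)}(s,\boldsymbol{\beta}) > 0$, and, for each fixed $s$, $\mathbb{E}\{G_n(s,\boldsymbol{\beta})\} = \boldsymbol{0}$ by the very definition of $\boldsymbol{\mu}(s,\boldsymbol{\beta})$. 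So it suffices to (i) bound $S^{(0)}(\cdot,\boldsymbol{\beta})$ below, and (ii) bound $\|G_n(\cdot,\boldsymbol{\beta})\|_{\infty}$ above, both uniformly over $[0,t_*)$ and off exceptional events of the required probabilities; the statement then follows on dividing.

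For (i), \textbf{(A1)} gives $S^{(0)}(s,\boldsymbol{\beta}) \ge e^{-\|\boldsymbol{\beta}\|_1 K_Z} n^{-1}\sum_{i=1}^n Y_i(s)$, and $s \mapsto \sum_{i=1}^n Y_i(s) = \#\{i : T_i \ge s\}$ is non-increasing, so its infimum over $[0,t_*)$ is $\#\{i : T_i \ge t_*\}$, a $\mathrm{Binomial}(n,\bar{F}_T(t_*))$ variable (using \textbf{(A2)(a)} to identify $\bar{F}_T(t_*^-) = \bar{F}_T(t_*)$). A Bernstein bound for its lower tail at scale $n^{1/2}(\log n)\bar{F}_T^{1/2}(t_*)$ shows that, off an event $\mathcal{E}_0$ of probability at most $\tfrac{1}{2n} + e^{-(\log^2 n)/2}$, we have $\sum_{i=1}^n Y_i(s) \ge n\bar{F}_T(t_*) - n^{1/2}(\log n)\bar{F}_T^{1/2}(t_*)$ for all $s \in [0,t_*)$; the quantity $\epsilon_n$ is, up to the constant $3K_Z$, the reciprocal of this bound times $e^{2\|\boldsymbol{\beta}\|_1 K_Z}$, and it will reappear below as an upper bound on the size of the jumps of $\bar{\boldsymbol{Z}}(\cdot,\boldsymbol{\beta})$. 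Separately, for the finitely many grid points $u_k$ introduced in (ii) I would also use a multiplicative Bernstein bound to get $S^{(0)}(u_k,\boldsymbol{\beta}) \ge \tfrac{1}{2}\mathbb{E}\{S^{(0)}(u_k,\boldsymbol{\beta})\} \ge \tfrac12 e^{-\|\boldsymbol{\beta}\|_1 K_Z}\bar{F}_T(t_*)$; since there are at most $M_0$ such points, a union bound contributes the term $2M_0\exp\{-3n\bar{F}_T(t_*)/(28 e^{4\|\boldsymbol{\beta}\|_1 K_Z})\}$.

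For (ii), discretise $[0,t_*)$ by a deterministic mesh $0 = u_0 < u_1 < \cdots$ with consecutive gaps at most $h_0$, so that the number of mesh points is at most $t_*/h_0 + 1$ (with $M_0 = t_*/h_0 + n^3 + 1$ a crude common upper bound, the $n^3$ absorbing the cardinalities of some auxiliary unions over observation times). The three non-trivial terms in the definition of $h_0$ are chosen so that: the deterministic map $t \mapsto \boldsymbol{\mu}(t,\boldsymbol{\beta})$ oscillates by at most a prescribed fraction of $\epsilon$ over a gap --- its Lipschitz constant is $L_{\boldsymbol{\mu}}(\|\boldsymbol{\beta}\|_1)$ of \eqref{Eq:Lmu}, obtained by differentiating the ratio defining $\boldsymbol{\mu}$ and using \textbf{(A1)}, \textbf{(A2)(a)} and $\mathbb{E}\{S^{(0)}(t,\boldsymbol{\beta})\} \ge e^{-\|\boldsymbol{\beta}\|_1 K_Z}\bar{F}_T(t_*)$ in the denominator; the continuous part of $t \mapsto \bar{\boldsymbol{Z}}(t,\boldsymbol{\beta})$ (that is, between observation times) oscillates by at most a fraction of $\epsilon - \epsilon_n$, its Lipschitz constant being $L_{\bar{\boldsymbol{Z}}}(\|\boldsymbol{\beta}\|_1)$ of \eqref{Eq:LZ}; and $\bar{F}_T$ changes only by a bounded factor over a gap, which (together with the fact that $L_{\boldsymbol{\mu}}(\|\boldsymbol{\beta}\|_1)\bar{F}_T(t_*) \ge e^{2\|\boldsymbol{\beta}\|_1 K_Z}K_Z\|f_T\|_{\infty}$) lets one control the number of observation times falling in any one mesh interval. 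The remaining term $\{2(\|\boldsymbol{\beta}\|_1 + 1)L\}^{-1}$ in $h_0$ keeps the linearisations of the exponentials valid. At each mesh point $u_k$ and coordinate $j$, $(G_n(u_k,\boldsymbol{\beta}))_j$ is an average of $n$ i.i.d.\ mean-zero variables bounded by $2K_Z e^{\|\boldsymbol{\beta}\|_1 K_Z}$ with variance at most $4K_Z^2 e^{2\|\boldsymbol{\beta}\|_1 K_Z}\bar{F}_T(u_k)$; Bernstein's inequality at a level of order $\epsilon\,\bar{F}_T(u_k)$ then applies, and the crucial point is that the variance and the level carry the same power of $\bar{F}_T(u_k)$, so the resulting exponent is of order $n\epsilon^2\bar{F}_T(u_k)/\{K_Z e^{4\|\boldsymbol{\beta}\|_1 K_Z}(K_Z + \epsilon)\}$, which, since $u_k < t_*$ and $\bar{F}_T$ is non-increasing, is bounded below by the same expression with $\bar{F}_T(u_k)$ replaced by $\bar{F}_T(t_*)$. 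A union bound over the $p$ coordinates, the $M_0$ mesh points and the two tails gives the term $(2p+2)M_0\exp\{-n\epsilon^2\bar{F}_T(t_*)/(1152 K_Z e^{4\|\boldsymbol{\beta}\|_1 K_Z}(K_Z + \epsilon/36))\}$, and, combined with the lower bound on $S^{(0)}(u_k,\boldsymbol{\beta})$ from (i), bounds $\|\bar{\boldsymbol{Z}}(u_k,\boldsymbol{\beta}) - \boldsymbol{\mu}(u_k,\boldsymbol{\beta})\|_{\infty}$ by a fraction of $\epsilon$ simultaneously over the mesh.

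Finally, for arbitrary $s \in [0,t_*)$, take the mesh point $u_k \le s$ with $s - u_k < h_0$ and decompose $\bar{\boldsymbol{Z}}(s,\boldsymbol{\beta}) - \boldsymbol{\mu}(s,\boldsymbol{\beta})$ as $\{\bar{\boldsymbol{Z}}(s,\boldsymbol{\beta}) - \bar{\boldsymbol{Z}}(u_k,\boldsymbol{\beta})\} + \{\bar{\boldsymbol{Z}}(u_k,\boldsymbol{\beta}) - \boldsymbol{\mu}(u_k,\boldsymbol{\beta})\} + \{\boldsymbol{\mu}(u_k,\boldsymbol{\beta}) - \boldsymbol{\mu}(s,\boldsymbol{\beta})\}$. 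The middle term is controlled by the mesh bound; the third by $L_{\boldsymbol{\mu}}(\|\boldsymbol{\beta}\|_1)h_0$; and the first splits into its Lipschitz part, bounded by $L_{\bar{\boldsymbol{Z}}}(\|\boldsymbol{\beta}\|_1)h_0$, and the sum of the jumps of $\bar{\boldsymbol{Z}}(\cdot,\boldsymbol{\beta})$ at observation times $T_i \in (u_k,s]$, each such jump equalling $2K_Z$ times a normalised weight $w_i(T_i^-,\boldsymbol{\beta}) \le e^{2\|\boldsymbol{\beta}\|_1 K_Z}/\sum_{\ell} Y_{\ell}(T_i)$, which on $\mathcal{E}_0$ is at most of order $1/(n\bar{F}_T(t_*))$; since there are only $O(n\epsilon\bar{F}_T(t_*))$ such $T_i$ per mesh interval on a further high-probability event, the jump part is absorbed into $\epsilon_n$ plus a small slack. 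Choosing the fractions of $\epsilon$ above so that these contributions sum to at most $\epsilon$, and adding the exceptional probabilities from (i) and (ii), yields the stated bound. I expect the jump-control step to be the main obstacle: near $t_*$ the risk set can be as small as order $n\bar{F}_T(t_*)$, so a single normalised weight, and hence a single jump of $\bar{\boldsymbol{Z}}(\cdot,\boldsymbol{\beta})$, can be of order $1/(n\bar{F}_T(t_*))$ --- this is precisely why one truncates at $t_* < t_+$, why the factor $\bar{F}_T(t_*)/(2\|f_T\|_{\infty})$ and the quantity $\epsilon_n$ enter, and why the at-risk process must be controlled, via its monotonicity, all the way down to level $n\bar{F}_T(t_*)$ rather than merely at a fixed time; a secondary subtlety is arranging, as indicated, that the Bernstein exponent carries only a single power of $\bar{F}_T(t_*)$.
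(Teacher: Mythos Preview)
Your overall architecture --- the three-term split
\(\{\bar{\boldsymbol{Z}}(s) - \bar{\boldsymbol{Z}}(u_k)\} + \{\bar{\boldsymbol{Z}}(u_k) - \boldsymbol{\mu}(u_k)\} + \{\boldsymbol{\mu}(u_k) - \boldsymbol{\mu}(s)\}\),
the Lipschitz bounds $L_{\boldsymbol{\mu}}$ and $L_{\bar{\boldsymbol{Z}}}$, and Bernstein at mesh points --- matches the paper.  Your pointwise device $\bar{\boldsymbol{Z}} - \boldsymbol{\mu} = G_n/S^{(0)}$ is a minor variant of the paper's two-term decomposition~\eqref{Eq:rj2} and works equally well.

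The genuine gap is the jump-control step, which you correctly flag as the main obstacle but do not resolve.  Your proposed accounting --- $O(n\epsilon\bar{F}_T(t_*))$ observation times per mesh interval, each contributing a jump of order $1/(n\bar{F}_T(t_*))$ --- gives a total jump contribution of order $\epsilon$, \emph{not} of order $\epsilon_n$ as you assert.  Since the Lipschitz part of $\bar{\boldsymbol{Z}}(s) - \bar{\boldsymbol{Z}}(u_k)$ already consumes $(\epsilon-\epsilon_n)/3$ (this is what the second entry in $h_0$ is calibrated for), you need the jump part to be at most $\epsilon_n/3$, not $O(\epsilon)$, for the three thirds to add to $\epsilon$.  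With the $h_0$ fixed by the lemma statement, a count-times-size argument cannot deliver this.

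The paper's device is to refine the $h_0$-mesh by the quantile partition $s_j := F_T^{-1}(j/M)$ with $M = n^3$ and to use a birthday-type bound: writing $\Omega_0$ for the event that every interval $[s_j,s_{j+1})$ contains at most one $T_i$, one has $\mathbb{P}(\Omega_0^c) \leq M\{1 - (1-1/M)^n - (n/M)(1-1/M)^{n-1}\} \leq 1/(2n)$.  The mesh $\{r_j\}$ is then a common refinement of the $h_0$-grid and $\{s_j\}$, whence $M_0 = t_*/h_0 + n^3 + 1$; and on $\Omega_0$ each mesh interval sees at most \emph{one} jump, of size at most $K_Ze^{2\|\boldsymbol{\beta}\|_1K_Z}/|R_{t_*}| \leq \epsilon_n/3$ on $\Omega_1 := \{|R_{t_*}| \geq n\bar{F}_T(t_*) - n^{1/2}(\log n)\bar{F}_T^{1/2}(t_*)\}$.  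This is where the $n^3$ in $M_0$ really comes from, and also where the $1/(2n)$ in the probability bound originates --- it is the birthday probability $\mathbb{P}(\Omega_0^c)$, not (as you write) part of a Bernstein lower-tail bound on $|R_{t_*}|$; the latter contributes only the $e^{-(\log^2 n)/2}$ term via $\mathbb{P}(\Omega_1^c)$.
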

\noindent \textbf{Remark}: Consider an asymptotic regime in which \textbf{(A3)(a)} holds and $\bar{F}(t_*) = O\bigl(n^{-(1/2 -\delta)}\bigr)$ for some $\delta \in (0,1/2)$.  If $\int_0^{t_+} t^\alpha f_T(t) \, dt < \infty$ for some $\alpha > 0$ (i.e.\ \textbf{(A2)(b)} holds), then for such $t_*$,
\[
n^{-(1/2-\delta)} = \int_{F_T^{-1}(1 - n^{-(1/2 -\delta)})}^{t_+} f_T \leq \frac{1}{\bigl\{F_T^{-1}(1 - n^{-(1/2 -\delta)})\bigr\}^\alpha} \int_0^{t_+} t^\alpha f_T(t) \, dt;
\]
in other words, $F_T^{-1}(1 - n^{-(1/2 -\delta)}) = O(n^{(1/2-\delta)/\alpha})$.  It therefore follows from Lemma~\ref{Lemma:Zbarmu} that if \textbf{(A1)},~\textbf{(A2)} and~\textbf{(A3)(a)} hold, then
\[
\sup_{t \in [0,t_*)} \|\bar{\boldsymbol{Z}}(t,\boldsymbol{\beta}) - \boldsymbol{\mu}(t,\boldsymbol{\beta})\|_\infty \stackrel{p}{\rightarrow} 0.
\]
\begin{proof}
As a first step, we prove that the process $\{\boldsymbol{\mu}(t,\boldsymbol{\beta}):t \in [0,t_*)\}$ inherits a Lipschitz property from $\{\boldsymbol{Z}(t,\boldsymbol{\beta}):t \in [0,t_*)\}$.  In fact, writing $\tilde{w}(t,\boldsymbol{\beta}) := Y(t)e^{\boldsymbol{\beta}^\top \boldsymbol{Z}(t)}$, for $t,t+h \in [0,t_*)$ with $h \in (0,h_0]$ (so in particular $(\|\boldsymbol{\beta}\|_1+1)Lh \leq 1/2$), 
\begin{align*}
\bigl|\mathbb{E}\bigl\{&\tilde{w}(t+h,\boldsymbol{\beta})\bigr\} - \mathbb{E}\bigl\{\tilde{w}(t,\boldsymbol{\beta})\bigr\}\bigr| \\
&\leq \bigl|\mathbb{E}\bigl\{\tilde{w}(t+h,\boldsymbol{\beta})\bigr\} - \mathbb{E}\bigl\{Y(t)e^{\boldsymbol{\beta}^{\top}\boldsymbol{Z}(t+h)}\bigr\}\bigr| + \bigl|\mathbb{E}\bigl\{Y(t)e^{\boldsymbol{\beta}^{\top}\boldsymbol{Z}(t+h)}\bigr\} - \mathbb{E}\bigl\{\tilde{w}(t,\boldsymbol{\beta})\bigr\}\bigr| \\
&\leq e^{\|\boldsymbol{\beta}\|_1K_Z}\|f_T\|_\infty h + e^{\|\boldsymbol{\beta}\|_1K_Z}(e^{\|\boldsymbol{\beta}\|_1Lh} - 1) \leq e^{\|\boldsymbol{\beta}\|_1K_Z}(\|f_T\|_\infty + 2\|\boldsymbol{\beta}\|_1L)h.
\end{align*}
Similarly, again for $t, t+h \in [0,t_*)$ and $h \in (0,h_0]$, 
\begin{align*}
\bigl\|\mathbb{E}\bigl\{\boldsymbol{Z}(t+h)&\tilde{w}(t+h,\boldsymbol{\beta})\bigr\} - \mathbb{E}\bigl\{\boldsymbol{Z}(t)\tilde{w}(t,\boldsymbol{\beta})\bigr\}\bigr\|_\infty \\
&\leq e^{\|\boldsymbol{\beta}\|_1K_Z}Lh + K_Ze^{\|\boldsymbol{\beta}\|_1K_Z}\|f_T\|_\infty h + K_Ze^{\|\boldsymbol{\beta}\|_1K_Z}(e^{\|\boldsymbol{\beta}\|_1Lh} - 1) \\
&\leq e^{\|\boldsymbol{\beta}\|_1K_Z}(L + K_Z\|f_T\|_\infty + 2K_Z\|\boldsymbol{\beta}\|_1L)h.
\end{align*}
It follows that provided $h \in (0,h_0]$, so that $\bar{F}_T(t+h) \geq \bar{F}_T(t)/2$ for $t,t+h \in [0,t_*)$, we have
\begin{align}
\label{Eq:murjplush}
\|\boldsymbol{\mu}&(t+h,\boldsymbol{\beta}) - \boldsymbol{\mu}(t,\boldsymbol{\beta})\|_\infty \nonumber \\
&= \biggl\|\frac{\mathbb{E}\bigl\{\boldsymbol{Z}(t+h)\tilde{w}(t+h,\boldsymbol{\beta})\bigr\}\mathbb{E}\bigl\{\tilde{w}(t,\boldsymbol{\beta})\bigr\} - \mathbb{E}\bigl\{\boldsymbol{Z}(t)\tilde{w}(t,\boldsymbol{\beta})\bigr\}\mathbb{E}\bigl\{\tilde{w}(t+h,\boldsymbol{\beta})\bigr\}}{\mathbb{E}\bigl\{\tilde{w}(t+h,\boldsymbol{\beta})\bigr\}\mathbb{E}\bigl\{\tilde{w}(t,\boldsymbol{\beta})\bigr\}}\biggr\|_\infty \nonumber \\
&\leq L_{\boldsymbol{\mu}}(\|\boldsymbol{\beta}\|_1)h,
\end{align}
where $L_{\boldsymbol{\mu}}(\|\boldsymbol{\beta}\|_1)$ was defined in~\eqref{Eq:Lmu}.  We now aim to prove a similar property for the process $\{\bar{\boldsymbol{Z}}(t,\boldsymbol{\beta}):t \in [0,t_*)\}$ (though this process may have jumps). Let $M := n^3$, and let $s_j := F_T^{-1}(j/M)$ for $j=0,1,\ldots,M-1$.  Let $E_j := \sum_{i=1}^n \mathbbm{1}_{\{T_i \in [s_j,s_{j+1})\}}$, and let $\Omega_0 := \cap_{j=1}^M \{E_j \leq 1\}$.  Then
\[
\mathbb{P}(\Omega_0^c) \leq M\biggl\{1 - \biggl(1-\frac{1}{M}\biggr)^n - \frac{n}{M}\biggl(1-\frac{1}{M}\biggr)^{n-1}\biggr\} \leq \frac{1}{2n}.
\]
Now, fix $t \in [0,t_*)$ and $h \in (0,h_0]$ such that $t,t+h \in [s_j,s_{j+1})$ for some $j$, and let $R_t := \{i:Y_i(t) = 1\}$ denote the risk set at time $t$.  If $\sum_{i \in R_t} \mathbbm{1}_{\{T_i \in [t_i,t_i+h)\}} = 0$, i.e.\ there are no observed events in $[t,t+h)$, then
\begin{align*}
\sum_{i \in R_t} |w_i(t+h,\boldsymbol{\beta}) - w_i(t,\boldsymbol{\beta})| &= \sum_{i \in R_t} \biggl|\frac{e^{\boldsymbol{\beta}^\top\boldsymbol{Z}_i(t+h)}\sum_{j \in R_t} e^{\boldsymbol{\beta}^\top\boldsymbol{Z}_j(t)} - e^{\boldsymbol{\beta}^\top\boldsymbol{Z}_i(t)}\sum_{j \in R_t} e^{\boldsymbol{\beta}^\top\boldsymbol{Z}_j(t+h)}}{\bigl(\sum_{j \in R_t} e^{\boldsymbol{\beta}^\top\boldsymbol{Z}_j(t+h)}\bigr)\bigl(\sum_{j \in R_t} e^{\boldsymbol{\beta}^\top\boldsymbol{Z}_j(t)}\bigr)}\biggr| \\
&\leq 2(e^{\|\boldsymbol{\beta}\|_1Lh}-1).
\end{align*}
On the other hand, if there is one observed event (corresponding to the individual $i^*$) in $[t,t+h)$, then
\begin{align*}
\sum_{i \in R_t} |w_i(t+h,\boldsymbol{\beta}) - w_i(t,\boldsymbol{\beta})| &= \sum_{i \in R_t \setminus \{i^*\}} |w_i(t+h,\boldsymbol{\beta}) - w_i(t,\boldsymbol{\beta})| + w_{i^*}(t,\boldsymbol{\beta}) \\
&\leq 2(e^{\|\boldsymbol{\beta}\|_1Lh}-1) + \frac{e^{2\|\boldsymbol{\beta}\|_1K_Z}}{|R_t|}.
\end{align*}
It follows that on the event $\Omega_0$, if $t \in [0,t_*)$ and $h \in (0,h_0]$ are such that $t,t+h \in [s_j,s_{j+1})$ for some $j$, then
\begin{align*}
\|\bar{\boldsymbol{Z}}(t+h,\boldsymbol{\beta}) - \bar{\boldsymbol{Z}}(t,\boldsymbol{\beta})\|_\infty &= \biggl\|\sum_{i \in R_t}  \boldsymbol{Z}_i(t+h)w_i(t+h,\boldsymbol{\beta}) - \sum_{i \in R_t} \boldsymbol{Z}_i(t)w_i(t,\boldsymbol{\beta})\biggr\|_\infty \\
&\leq L_{\bar{\boldsymbol{Z}}}(\|\boldsymbol{\beta}\|_1)h + \frac{K_Ze^{2\|\boldsymbol{\beta}\|_1K_Z}}{|R_t|},
\end{align*}
where $L_{\bar{\boldsymbol{Z}}}(\|\boldsymbol{\beta}\|_1)$ was defined in~\eqref{Eq:LZ}.  

Now let $\Omega_1 := \bigl\{|R_{t_*}| \geq n\bar{F}_T(t_*) - n^{1/2}(\log n)\bar{F}_T^{1/2}(t_*)\bigr\}$, so that by a standard Binomial tail bound \citep[e.g.][p.~440]{ShorackWellner1986},
\[
\mathbb{P}(\Omega_1^c) \leq e^{-(\log^2 n)/2}.
\]
Fix $\epsilon > \epsilon_n$, and partition $[0,t_*)$ into at most $\lceil t_*/h_0 \rceil + M \leq M_0$ intervals $\{[r_j,r_{j+1}):j=0,\ldots,M_0-1\}$ such that for each $j$, there exists $k$ for which $[r_j,r_{j+1}) \subseteq [s_k,s_{k+1})$, and such that $|r_{j+1} - r_j| \leq h_0$.  Then
\begin{align}
\label{Eq:Zrjplush}
\mathbb{P}\biggl(\sup_{h \in (0,h_0]} \max_{j=0,1,\ldots,M_0-1} \|\bar{\boldsymbol{Z}}(r_j+h,\boldsymbol{\beta}) - \bar{\boldsymbol{Z}}(r_j,\boldsymbol{\beta})\|_\infty > \frac{\epsilon}{3}\biggr) &\leq \mathbb{P}(\Omega_0^c) + \mathbb{P}(\Omega_1^c) \nonumber \\
&\leq \frac{1}{2n} + e^{-(\log^2 n)/2}.
\end{align}
Finally, we seek to control the difference between $\bar{\boldsymbol{Z}}(\cdot,\boldsymbol{\beta})$ and $\boldsymbol{\mu}(\cdot,\boldsymbol{\beta})$ at $r_0,\ldots,r_{M_0}$.  To this end, note that for any $t \in [0,t_*)$,
\begin{align}
\label{Eq:rj2}
	 \bigl\|\bar{\boldsymbol{Z}}(t, \boldsymbol{\beta}) - \boldsymbol{\mu}(t, \boldsymbol{\beta})\bigr\|_\infty &= \biggl\|\frac{n^{-1}\sum_{i=1}^n \boldsymbol{Z}_i(t)\tilde{w}_i(t, \boldsymbol{\beta})}{n^{-1}\sum_{j=1}^n\tilde{w}_j(t, \boldsymbol{\beta})} - \frac{\mathbb{E}\bigl\{\boldsymbol{Z}(t)\tilde{w}(t, \boldsymbol{\beta})\bigr\}}{\mathbb{E}\bigl\{\tilde{w}(t, \boldsymbol{\beta})\bigr\}}\biggr\|_\infty \nonumber \\
	&\leq \frac{1}{n^{-1}\sum_{j=1}^n\tilde{w}_j(t, \boldsymbol{\beta})}\biggl\|\frac{1}{n}\sum_{i=1}^n \boldsymbol{Z}_i(t)\tilde{w}_i(t, \boldsymbol{\beta}) - \mathbb{E}\bigl\{\boldsymbol{Z}(t)\tilde{w}(t,\boldsymbol{\beta})\bigr\}\biggr\|_\infty \nonumber \\
	&\hspace{0.5cm}+ K_Z\mathbb{E}\bigl\{\tilde{w}(t, \boldsymbol{\beta})\bigr\}\biggl|\frac{1}{n^{-1}\sum_{j=1}^n \tilde{w}_j(t,\boldsymbol{\beta})} - \frac{1}{\mathbb{E}\bigl\{\tilde{w}(t, \boldsymbol{\beta})\bigr\}}\biggr|. 
\end{align}
Let  
\[
\Omega_2 := \biggl\{\frac{1}{n}\sum_{j=1}^n\tilde{w}_j(t, \boldsymbol{\beta}) \geq \frac{1}{2}\mathbb{E}\bigl\{\tilde{w}(t, \boldsymbol{\beta})\bigr\} \biggr\},
\]
so that by Bernstein's inequality,
\begin{equation}
\label{Eq:Omega2}
\mathbb{P}(\Omega_2^c) \leq \exp\biggl(-\frac{n\mathbb{E}^2\tilde{w}(t, \boldsymbol{\beta})}{8\bigl\{\mathbb{E}\tilde{w}^2(t, \boldsymbol{\beta}) + e^{\|\boldsymbol{\beta}\|_1K_Z}\mathbb{E}\tilde{w}(t, \boldsymbol{\beta})/6\bigr\}}\biggr) \leq \exp\biggl(-\frac{3n\bar{F}_T(t)}{28\exp(4\|\boldsymbol{\beta}\|_1K_Z)}\biggr).
\end{equation}
Then, by Bernstein's inequality again,
\begin{align}
\label{Eq:FirstTerm}
\mathbb{P}\biggl(\biggl\{&\frac{1}{n^{-1}\sum_{j=1}^n\tilde{w}_j(t, \boldsymbol{\beta})}\biggl\|\frac{1}{n}\sum_{i=1}^n \boldsymbol{Z}_i(t)\tilde{w}_i(t, \boldsymbol{\beta}) - \mathbb{E}\bigl\{\boldsymbol{Z}(t)\tilde{w}(t,\boldsymbol{\beta})\bigr\}\biggr\|_\infty > \frac{\epsilon}{2}\biggr\} \ \bigcap \ \Omega_2\biggr) \nonumber \\
&\leq \mathbb{P}\biggl(\biggl\|\frac{1}{n}\sum_{i=1}^n \boldsymbol{Z}_i(t)\tilde{w}_i(t, \boldsymbol{\beta}) - \mathbb{E}\bigl\{\boldsymbol{Z}(t)\tilde{w}(t,\boldsymbol{\beta})\bigr\}\biggr\|_\infty > \frac{\epsilon \mathbb{E}\bigl\{\tilde{w}(t,\boldsymbol{\beta})\bigr\}}{4}\biggr) \nonumber \\
&\leq 2p\exp\biggl(-\frac{n\epsilon^2\bar{F}_T(t)}{32K_Z\exp\bigl(4\|\boldsymbol{\beta}\|_1K_Z\bigr)(K_Z + \epsilon/12)}\biggr).
\end{align}
Now the mean value theorem, for $x,y > 0$ with $x \geq y/2$,
\begin{align*}
\biggl|\frac{1}{x} - \frac{1}{y}\biggr| \leq \sup_{x_* = (1-t)x+ty:t \in [0,1]} \frac{|x - y|}{x_*^2} \leq \frac{4|x-y|}{y^2}.
\end{align*}
Hence, by another application of Bernstein's inequality,
\begin{align}
\label{Eq:SecondTerm}
\mathbb{P}\biggl(\biggl\{K_Z\mathbb{E}\bigl\{\tilde{w}(t, \boldsymbol{\beta})\bigr\}&\biggl|\frac{1}{n^{-1}\sum_{i=1}^n \tilde{w}_i(t,\boldsymbol{\beta})} - \frac{1}{\mathbb{E}\bigl\{\tilde{w}(t, \boldsymbol{\beta})\bigr\}}\biggr|  > \frac{\epsilon}{2}\biggr\} \ \bigcap \ \Omega_2\biggr) \nonumber \\
&\leq \mathbb{P}\biggl(\biggl|\frac{1}{n}\sum_{i=1}^n \tilde{w}_i(t, \boldsymbol{\beta}) - \mathbb{E}\bigl\{\tilde{w}(t, \boldsymbol{\beta})\bigr\}\biggr| > \frac{\epsilon \mathbb{E}\bigl\{\tilde{w}(t,\boldsymbol{\beta})\bigr\}}{8K_Z}\biggr)   \nonumber\\
&\leq 2\exp\biggl(-\frac{n\epsilon^2\bar{F}_T(t)}{128K_Z\exp(4\|\boldsymbol{\beta}\|_1K_Z)(K_Z+ \epsilon/24)}\biggr).
\end{align}
It follows from~\eqref{Eq:rj2},~\eqref{Eq:Omega2},~\eqref{Eq:FirstTerm} and~\eqref{Eq:SecondTerm} that for any $\epsilon > 0$ and $t \in [0,t_*)$,
\begin{align}
\label{Eq:ZbarmuDiff}
\mathbb{P}\bigl(\bigl\|\bar{\boldsymbol{Z}}(t, \boldsymbol{\beta}) - \boldsymbol{\mu}(t, \boldsymbol{\beta})\bigr\|_\infty > \epsilon\bigr) &\leq (2p+2)\exp\biggl(-\frac{n\epsilon^2\bar{F}_T(t)}{128K_Z\exp\bigl(4\|\boldsymbol{\beta}\|_1K_Z\bigr)(K_Z + \epsilon/12)}\biggr) \nonumber \\
&\hspace{4cm}+ \exp\biggl(-\frac{3n\bar{F}_T(t)}{28\exp(4\|\boldsymbol{\beta}\|_1K_Z)}\biggr).
\end{align}
From~\eqref{Eq:murjplush},~\eqref{Eq:Zrjplush} and~\eqref{Eq:ZbarmuDiff}, together with the fact that $L_{\boldsymbol{\mu}}(\|\boldsymbol{\beta}\|_1)h_0 \leq \epsilon/3$, we conclude that for $\epsilon > \epsilon_n$,
\begin{align*}
\mathbb{P}&\biggl(\sup_{t \in [0,t_*)} \|\bar{\boldsymbol{Z}}(t,\boldsymbol{\beta}) - \boldsymbol{\mu}(t,\boldsymbol{\beta})\|_\infty > \epsilon\biggr) \\
&\leq \mathbb{P}\biggl(\sup_{h \in (0,h_0]} \max_{j = 0,1,\ldots,M_0-1} \|\bar{\boldsymbol{Z}}(r_j+h,\boldsymbol{\beta}) - \bar{\boldsymbol{Z}}(r_j,\boldsymbol{\beta})\|_\infty > \frac{\epsilon}{3}\biggr) \\
&\hspace{4cm}+ \mathbb{P}\biggl(\max_{j = 0,1,\ldots,M_0-1} \|\bar{\boldsymbol{Z}}(r_j,\boldsymbol{\beta}) - \boldsymbol{\mu}(r_j,\boldsymbol{\beta})\|_\infty > \frac{\epsilon}{3}\biggr) \\
&\leq \frac{1}{2n} + e^{-(\log^2 n)/2} + (2p+2)M_0\exp\biggl(-\frac{n\epsilon^2\bar{F}_T(t_*)}{1152K_Z\exp\bigl(4\|\boldsymbol{\beta}\|_1K_Z\bigr)(K_Z + \epsilon/36)}\biggr) \\
&\hspace{4cm}+ M_0\exp\biggl(-\frac{3n\bar{F}_T(t_*)}{28\exp(4\|\boldsymbol{\beta}\|_1K_Z)}\biggr),
\end{align*}
as required.
\end{proof}  

The lemma below is used several times in controlling the residual terms in~\eqref{Eq:1}.
\begin{lemma}\label{lem-aclime-lem1} 
Assume~\textbf{(A1)},~\textbf{(A2)(a)},~\textbf{(A3)(b)} and \textbf{(A4)(a)}.  For $\widehat{\boldsymbol{\beta}}$ in~\eqref{eq-betahat}, let $\lambda \asymp n^{-1/2}\log^{1/2} (np)$.  Then, for every $\eta \in (0,1/3)$, 
	\[
	\|\widehat{\mathcal{V}}(\widehat{\boldsymbol{\beta}})\boldsymbol{\Sigma}^{-1} - \boldsymbol{I}\|_{\infty} = O_p\biggl\{\max\biggl(\|\boldsymbol{\Sigma}^{-1}\|_{\mathrm{op},1}\frac{d_o\log^{1/2}(np)}{n^{1/2}} \, , \, \|\boldsymbol{\Sigma}^{-1}\|_{\mathrm{op},1}n^{-(1/3-\eta)}\biggr)\biggr\}.
	\]
\end{lemma}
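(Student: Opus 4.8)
The plan is to reduce the claim to a bound on $\|\widehat{\mathcal{V}}(\widehat{\boldsymbol{\beta}}) - \boldsymbol{\Sigma}\|_{\infty}$. Since $\boldsymbol{\Sigma}\boldsymbol{\Sigma}^{-1} = \boldsymbol{I}$, write $\widehat{\mathcal{V}}(\widehat{\boldsymbol{\beta}})\boldsymbol{\Sigma}^{-1} - \boldsymbol{I} = \bigl(\widehat{\mathcal{V}}(\widehat{\boldsymbol{\beta}}) - \boldsymbol{\Sigma}\bigr)\boldsymbol{\Sigma}^{-1}$, and note the elementary inequality $\|\boldsymbol{A}\boldsymbol{B}\|_{\infty} \leq \|\boldsymbol{A}\|_{\infty}\|\boldsymbol{B}\|_{\mathrm{op},1}$, which is immediate from Lemma~\ref{lem:multi} (the $(i,k)$ entry of $\boldsymbol{A}\boldsymbol{B}$ is bounded by $\|\boldsymbol{A}\|_\infty$ times the $\ell_1$ norm of the $k$th column of $\boldsymbol{B}$). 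This gives $\|\widehat{\mathcal{V}}(\widehat{\boldsymbol{\beta}})\boldsymbol{\Sigma}^{-1} - \boldsymbol{I}\|_{\infty} \leq \|\boldsymbol{\Sigma}^{-1}\|_{\mathrm{op},1}\,\|\widehat{\mathcal{V}}(\widehat{\boldsymbol{\beta}}) - \boldsymbol{\Sigma}\|_{\infty}$, so it suffices to prove $\|\widehat{\mathcal{V}}(\widehat{\boldsymbol{\beta}}) - \boldsymbol{\Sigma}\|_{\infty} = O_p\bigl(\max\{d_o\log^{1/2}(np)/n^{1/2},\, n^{-(1/3-\eta)}\}\bigr)$ for each $\eta \in (0,1/3)$; I would write $\widehat{\mathcal{V}}(\widehat{\boldsymbol{\beta}}) - \boldsymbol{\Sigma} = \{\widehat{\mathcal{V}}(\widehat{\boldsymbol{\beta}}) - \widehat{\mathcal{V}}(\boldsymbol{\beta}^o)\} + \{\widehat{\mathcal{V}}(\boldsymbol{\beta}^o) - \boldsymbol{\Sigma}\}$ and handle the two brackets separately.

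For the first bracket, recall from~\eqref{eq-hatV} that $\widehat{\mathcal{V}}(\boldsymbol{\beta})$ depends on $\boldsymbol{\beta}$ only through the vectors $\bar{\boldsymbol{Z}}(T_i,\boldsymbol{\beta})$. Using $\boldsymbol{a}^{\otimes 2} - \boldsymbol{b}^{\otimes 2} = (\boldsymbol{a} - \boldsymbol{b})\boldsymbol{a}^\top + \boldsymbol{b}(\boldsymbol{a} - \boldsymbol{b})^\top$ with $\boldsymbol{a} := \boldsymbol{Z}_i(T_i) - \bar{\boldsymbol{Z}}(T_i,\widehat{\boldsymbol{\beta}})$ and $\boldsymbol{b} := \boldsymbol{Z}_i(T_i) - \bar{\boldsymbol{Z}}(T_i,\boldsymbol{\beta}^o)$, both of $\ell_\infty$ norm at most $2K_Z$ by~\textbf{(A1)(a)}, together with the deterministic estimate $\sup_{s \in \mathcal{T}}\|\bar{\boldsymbol{Z}}(s,\widehat{\boldsymbol{\beta}}) - \bar{\boldsymbol{Z}}(s,\boldsymbol{\beta}^o)\|_\infty \leq 2K_Z(e^{K_Z\|\widehat{\boldsymbol{\beta}} - \boldsymbol{\beta}^o\|_1} - 1)$ obtained exactly as in~\eqref{Eq:ZbarDiff2}, one gets the pointwise bound $\|\widehat{\mathcal{V}}(\widehat{\boldsymbol{\beta}}) - \widehat{\mathcal{V}}(\boldsymbol{\beta}^o)\|_\infty \leq 8K_Z^2(e^{K_Z\|\widehat{\boldsymbol{\beta}} - \boldsymbol{\beta}^o\|_1} - 1)$. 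By Lemma~\ref{Lem:lasso}(ii) and~\textbf{(A3)(b)}, $\|\widehat{\boldsymbol{\beta}} - \boldsymbol{\beta}^o\|_1 = O_p(d_o\log^{1/2}(np)/n^{1/2}) = o_p(1)$, so, since $e^x - 1 \leq 2x$ for $x \in [0,1]$, this bracket is $O_p(d_o\log^{1/2}(np)/n^{1/2})$.

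For the second bracket, $\widehat{\mathcal{V}}(\boldsymbol{\beta}^o) = n^{-1}\sum_{i=1}^n \delta_i\{\boldsymbol{Z}_i(T_i) - \bar{\boldsymbol{Z}}(T_i,\boldsymbol{\beta}^o)\}^{\otimes 2}$ while $\boldsymbol{\Sigma} = \mathbb{E}\bigl[\delta\{\boldsymbol{Z}(T) - \boldsymbol{\mu}(T,\boldsymbol{\beta}^o)\}^{\otimes 2}\bigr]$, so I would split it further as $\boldsymbol{\Delta}_1 + \boldsymbol{\Delta}_2$ with $\boldsymbol{\Delta}_2 := n^{-1}\sum_i \delta_i\{\boldsymbol{Z}_i(T_i) - \boldsymbol{\mu}(T_i,\boldsymbol{\beta}^o)\}^{\otimes 2} - \boldsymbol{\Sigma}$. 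Since $\boldsymbol{\Delta}_2$ is an average of i.i.d.\ mean-zero matrices whose entries are bounded in absolute value by $4K_Z^2$ (using $\|\boldsymbol{\mu}(\cdot,\boldsymbol{\beta}^o)\|_\infty \leq K_Z$, which holds because $\boldsymbol{\mu}$ is a conditional weighted average of $\boldsymbol{Z}$), Hoeffding's inequality and a union bound over the $p^2$ entries give $\|\boldsymbol{\Delta}_2\|_\infty = O_p(\log^{1/2}(np)/n^{1/2})$. For $\boldsymbol{\Delta}_1$, the same $\boldsymbol{a}^{\otimes 2} - \boldsymbol{b}^{\otimes 2}$ expansion (now with $\boldsymbol{b} := \boldsymbol{Z}_i(T_i) - \boldsymbol{\mu}(T_i,\boldsymbol{\beta}^o)$) gives $\|\boldsymbol{\Delta}_1\|_\infty \leq 4K_Z\cdot n^{-1}\sum_{i=1}^n \delta_i\|\bar{\boldsymbol{Z}}(T_i,\boldsymbol{\beta}^o) - \boldsymbol{\mu}(T_i,\boldsymbol{\beta}^o)\|_\infty$. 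Here the truncation enters: set $\rho_n \asymp (\log(np)/n)^{1/3}$ and $t_* := F_T^{-1}(1-\rho_n)$ (well defined since $F_T$ is continuous by~\textbf{(A2)(a)}), and split the sum according to $\{T_i < t_*\}$ and $\{T_i \geq t_*\}$. On $\{T_i < t_*\}$ the summand is at most $\sup_{t \in [0,t_*)}\|\bar{\boldsymbol{Z}}(t,\boldsymbol{\beta}^o) - \boldsymbol{\mu}(t,\boldsymbol{\beta}^o)\|_\infty$; on $\{T_i \geq t_*\}$ it is at most $2K_Z\mathbbm{1}_{\{T_i \geq t_*\}}$, and $n^{-1}\sum_i \mathbbm{1}_{\{T_i \geq t_*\}}$ has mean $\bar{F}_T(t_*) = \rho_n$, hence is $O_p(\rho_n)$ by Markov's inequality.

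Finally I would apply Lemma~\ref{Lemma:Zbarmu} with $\boldsymbol{\beta} = \boldsymbol{\beta}^o$, this $t_*$, and $\epsilon$ a sufficiently large constant multiple of $\rho_n$: with $\rho_n \asymp (\log(np)/n)^{1/3}$ the two exponents in that lemma's bound are of order $n\epsilon^2\bar F_T(t_*) \asymp \log(np)$ and $n\bar F_T(t_*) \asymp n^{2/3}\log^{1/3}(np)$ respectively; using~\textbf{(A2)(b)} to bound $t_* = F_T^{-1}(1-\rho_n) = O(\rho_n^{-1/\alpha})$ (exactly as in the remark following Lemma~\ref{Lemma:Zbarmu}), and the fact that $h_0$ is bounded below by a constant multiple of $\rho_n^2$, the prefactor $M_0$ is polynomial in $n$, so taking the constant in $\epsilon$ large enough forces the whole bound to $0$; hence $\sup_{t \in [0,t_*)}\|\bar{\boldsymbol{Z}}(t,\boldsymbol{\beta}^o) - \boldsymbol{\mu}(t,\boldsymbol{\beta}^o)\|_\infty = O_p(\rho_n)$ and thus $\|\boldsymbol{\Delta}_1\|_\infty = O_p\bigl((\log(np)/n)^{1/3}\bigr)$, which by~\textbf{(A3)(a)} is $O_p(n^{-(1/3-\eta)})$ for every $\eta \in (0,1/3)$. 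Adding the three contributions and multiplying by $\|\boldsymbol{\Sigma}^{-1}\|_{\mathrm{op},1}$ gives the claim. The main obstacle is exactly this last, truncation-balancing step: one must choose $\rho_n$ so that the $n^{-1/2}$-type deviation bound for $\bar{\boldsymbol{Z}} - \boldsymbol{\mu}$ over $[0,t_*)$, which deteriorates as $\bar F_T(t_*) \downarrow 0$, is traded off against the probability mass beyond $t_*$, and one must check that the polynomial-in-$n$ overhead $M_0$ (whose control in the unbounded-time case rests on~\textbf{(A2)(b)}) and the $\epsilon$-dependence of $h_0$ in Lemma~\ref{Lemma:Zbarmu} are absorbed into the arbitrarily small exponent $\eta$.
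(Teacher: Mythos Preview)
Your argument is correct and is essentially the paper's own: both reduce to $\|\widehat{\mathcal{V}}(\widehat{\boldsymbol{\beta}}) - \boldsymbol{\Sigma}\|_\infty$, split into the same three pieces (the Lasso perturbation $\bar{\boldsymbol{Z}}(\cdot,\widehat{\boldsymbol{\beta}})-\bar{\boldsymbol{Z}}(\cdot,\boldsymbol{\beta}^o)$ controlled via Lemma~\ref{Lem:lasso}(ii), an i.i.d.\ Hoeffding term, and the $\bar{\boldsymbol{Z}}-\boldsymbol{\mu}$ term handled by time truncation plus Lemma~\ref{Lemma:Zbarmu}), then multiply by $\|\boldsymbol{\Sigma}^{-1}\|_{\mathrm{op},1}$. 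The only difference is the truncation level: the paper takes $\bar F_T(t_*)=n^{-(1/3-\eta)}$ directly, whereas you optimise to $\rho_n\asymp(\log(np)/n)^{1/3}$ and then absorb $\log(np)$ into $\eta$ via~\textbf{(A3)(a)}; your explicit appeals to~\textbf{(A2)(b)} and~\textbf{(A3)(a)} are not among the lemma's stated hypotheses, but the paper's own proof tacitly needs the same ingredients to control the prefactor $M_0$ and the factor $p$ coming from Lemma~\ref{Lemma:Zbarmu}.
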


\begin{proof}
Writing $W_1 := n^{-1}\sum_{i=1}^n\int_{\mathcal{T}} \{\boldsymbol{Z}_i(s) - \boldsymbol{\mu}(s,\boldsymbol{\beta}^o)\}^{\otimes 2}\, dN_i(s) - \boldsymbol{\Sigma}$, we have
		\begin{align}
\label{Eq:LongDisplay}
		\|\widehat{\mathcal{V}}(\widehat{\boldsymbol{\beta}}) &- \boldsymbol{\Sigma}\|_{\infty} \nonumber \\
&\leq  \biggl\|\frac{1}{n}\sum_{i=1}^n\int_{\mathcal{T}} \bigl\{(\boldsymbol{Z}_i(s) - \bar{\boldsymbol{Z}}(s,\widehat{\boldsymbol{\beta}}))^{\otimes 2} - (\boldsymbol{Z}_i(s) - \boldsymbol{\mu}(s,\boldsymbol{\beta}^o))^{\otimes 2}\bigr\}\, dN_i(s)\biggr\|_{\infty} + \|W_1\|_\infty \nonumber \\
		&\leq \biggl\|\frac{1}{n}\sum_{i=1}^n \int_{\mathcal{T}}\bigl(\boldsymbol{Z}_i(s) - \bar{\boldsymbol{Z}}(s,\widehat{\boldsymbol{\beta}})\bigr)\bigl(\bar{\boldsymbol{Z}}(s,\widehat{\boldsymbol{\beta}}) - \boldsymbol{\mu}(s,\boldsymbol{\beta}^o)\bigr)^{\top}\,dN_i(s)\biggr\|_{\infty} \nonumber \\
		&\hspace{1.5cm}+ \biggl\|\frac{1}{n}\sum_{i=1}^n \int_{\mathcal{T}}\bigl(\boldsymbol{Z}_i(s) - \boldsymbol{\mu}(s,\boldsymbol{\beta}^o)\bigr)\bigl(\bar{\boldsymbol{Z}}(s,\widehat{\boldsymbol{\beta}}) - \boldsymbol{\mu}(s,\boldsymbol{\beta}^o)\bigr)^{\top}\,dN_i(s)\biggr\|_{\infty} + \|W_1\|_\infty \nonumber \\
		&\leq \frac{4K_Z}{n} \sum_{i=1}^n \int_{\mathcal{T}}\|\bar{\boldsymbol{Z}}(s,\widehat{\boldsymbol{\beta}}) - \boldsymbol{\mu}(s,\boldsymbol{\beta}^o)\|_{\infty}\,dN_i(s) + \|W_1\|_\infty \nonumber \\
		&\leq  \frac{4K_Z}{n} \sum_{i=1}^n \int_{\mathcal{T}}\|\bar{\boldsymbol{Z}}(s,\widehat{\boldsymbol{\beta}}) - \bar{\boldsymbol{Z}}(s,\boldsymbol{\beta}^o)\|_{\infty}\,dN_i(s) \nonumber \\
&\hspace{3cm}+ \frac{4K_Z}{n} \sum_{i=1}^n \int_{\mathcal{T}}\|\bar{\boldsymbol{Z}}(s,\boldsymbol{\beta}^o) - \boldsymbol{\mu}(s,\boldsymbol{\beta}^o)\|_{\infty}\,dN_i(s)+ \|W_1\|_\infty.
		\end{align}
Now, for any $s \in \mathcal{T}$,
\begin{align}
\label{Eq:ZbarDiff}
\|\bar{\boldsymbol{Z}}(s,\widehat{\boldsymbol{\beta}}) - \bar{\boldsymbol{Z}}(s,\boldsymbol{\beta}^o)\|_{\infty} &= \biggl\|\frac{\sum_{i=1}^n \boldsymbol{Z}_i(s)Y_i(s)e^{\widehat{\boldsymbol{\beta}}^\top \boldsymbol{Z}_i(s)}}{\sum_{i=1}^n Y_i(s)e^{\widehat{\boldsymbol{\beta}}^\top \boldsymbol{Z}_i(s)}} - \frac{\sum_{i=1}^n \boldsymbol{Z}_i(s)Y_i(s)e^{\boldsymbol{\beta}^{o\top} \boldsymbol{Z}_i(s)}}{\sum_{i=1}^n Y_i(s)e^{\boldsymbol{\beta}^{o\top} \boldsymbol{Z}_i(s)}}\biggr\|_\infty \nonumber \\
&\leq 2K_Z(e^{K_Z\|\widehat{\boldsymbol{\beta}} - \boldsymbol{\beta}^o\|_1}-1).
\end{align}
It therefore follows from Lemma~\ref{Lem:lasso}(ii) that 
\begin{equation}
\label{Eq:LassoTerm}
\frac{4K_Z}{n} \sum_{i=1}^n \int_{\mathcal{T}}\|\bar{\boldsymbol{Z}}(s,\widehat{\boldsymbol{\beta}}) - \bar{\boldsymbol{Z}}(s,\boldsymbol{\beta}^o)\|_{\infty}\,dN_i(s) = O_p\biggl(\frac{d_o\log^{1/2}(np)}{n^{1/2}}\biggr).
\end{equation}
Next, fix an arbitrary $\eta \in (0,1/3)$ and let $t_* := F_T^{-1}(1-n^{-(1/3-\eta)})$.  Recalling that $R_t = \{i:Y_i(t) = 1\}$, set $\Omega_* := \bigl\{n\bar{F}(t_*)/2 \leq |R_{t_*}| \leq 3n\bar{F}(t_*)/2\bigr\}$.  Then by Hoeffding's inequality,
\begin{equation}
  \label{Eq:Omegastar}
\mathbb{P}(\Omega_*^c) \leq 2e^{-n^{1/3+2\eta}/2}.
\end{equation}
Moreover, on $\Omega_*$, 
\begin{align}
\label{Eq:BadTerm}
\frac{4K_Z}{n} \sum_{i=1}^n \int_{\mathcal{T}}\|\bar{\boldsymbol{Z}}(s,\boldsymbol{\beta}^o) - \boldsymbol{\mu}(s,\boldsymbol{\beta}^o)\|_{\infty}\,dN_i(s) &\leq 4K_Z\sup_{s \in [0,t_*)} \|\bar{\boldsymbol{Z}}(s,\boldsymbol{\beta}^o) - \boldsymbol{\mu}(s,\boldsymbol{\beta}^o)\|_{\infty} \nonumber \\
&\hspace{3cm}+ 12K_Z^2n^{-(1/3-\eta)} \nonumber \\
&= O_p(n^{-(1/3-\eta)}), 
\end{align}
where the final bound follows from Lemma~\ref{Lemma:Zbarmu}.  Finally, by \textbf{(A1)(a)} and Hoeffding's inequality, we have that for every $x > 0$,
	\[
	\mathbb{P}\bigl(\|W_1\|_{\infty} > x \bigr) \leq p(p+1)e^{-nx^2/(32K_Z^4)},
	\]
so that
\begin{equation}
\label{Eq:EasyTerm}
\|W_1\|_{\infty} = O_p\biggl(\frac{\log^{1/2} (np)}{n^{1/2}}\biggr).
\end{equation}
We conclude from~\eqref{Eq:LongDisplay},~\eqref{Eq:LassoTerm},~\eqref{Eq:Omegastar},~\eqref{Eq:BadTerm} and~\eqref{Eq:EasyTerm} that for every $\eta \in (0,1/3)$,
\begin{align*}
\|\widehat{\mathcal{V}}(\widehat{\boldsymbol{\beta}})\boldsymbol{\Sigma}^{-1} - \boldsymbol{I}\|_{\infty} & \leq \|\boldsymbol{\Sigma}^{-1}\|_{\mathrm{op},1}\|\widehat{\mathcal{V}}(\widehat{\boldsymbol{\beta}})- \boldsymbol{\Sigma}\|_{\infty} \\
&= O_p\biggl\{\max\biggl(\|\boldsymbol{\Sigma}^{-1}\|_{\mathrm{op},1}\frac{d_o\log^{1/2}(np)}{n^{1/2}} \, , \, \|\boldsymbol{\Sigma}^{-1}\|_{\mathrm{op},1}n^{-(1/3-\eta)}\biggr)\biggr\},
\end{align*}
as required.
\end{proof}
The following result is a consequence of~\citet[][Lemma~7.1]{CaiEtal2016}.
\begin{lemma}
\label{lem-aclime-lem-7}
Let $\boldsymbol{\Theta} = (\boldsymbol{\Theta}_1,\ldots,\boldsymbol{\Theta}_p)^\top = (\Theta_{ij}) \in \mathbb{R}^{p \times p}$ be symmetric and let $\check{\boldsymbol{\Theta}} = (\check{\boldsymbol{\Theta}}_1,\ldots,\check{\boldsymbol{\Theta}}_p)^\top$ be an estimator of $\boldsymbol{\Theta}$.  On the event
		\[
		\bigl\{\|\check{\boldsymbol{\Theta}}_j\|_1 \leq \|\boldsymbol{\Theta}_j\|_1, \, j = 1, \ldots, p\bigr\},
		\]
		we have
		\[
		\|\check{\boldsymbol{\Theta}} - \boldsymbol{\Theta}\|_{\mathrm{op},\infty} \leq 12\|\check{\boldsymbol{\Theta}} - \boldsymbol{\Theta}\|_{\infty} \max_{j=1,\ldots,p} \sum_{i=1}^p \mathbbm{1}_{\{\Theta_{ij} \neq 0\}}.
		\]
\end{lemma}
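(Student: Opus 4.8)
The plan is to reduce the statement to a row-by-row bound and then establish that bound via the ``cone condition'' for the vector $\check{\boldsymbol{\Theta}}_j - \boldsymbol{\Theta}_j$ that is forced by the hypothesis $\|\check{\boldsymbol{\Theta}}_j\|_1 \le \|\boldsymbol{\Theta}_j\|_1$. This is essentially the content of \citet[][Lemma~7.1]{CaiEtal2016}, but it is short enough that I would give a self-contained argument and simply note that it yields the cited bound.

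First, by Lemma~\ref{lem:multi}, $\|\check{\boldsymbol{\Theta}} - \boldsymbol{\Theta}\|_{\mathrm{op},\infty}$ is the largest $\ell_1$-norm of a row of $\check{\boldsymbol{\Theta}} - \boldsymbol{\Theta}$; with the conventions $\check{\boldsymbol{\Theta}} = (\check{\boldsymbol{\Theta}}_1,\ldots,\check{\boldsymbol{\Theta}}_p)^\top$ and $\boldsymbol{\Theta} = (\boldsymbol{\Theta}_1,\ldots,\boldsymbol{\Theta}_p)^\top$, this equals $\max_{j=1,\ldots,p}\|\check{\boldsymbol{\Theta}}_j - \boldsymbol{\Theta}_j\|_1$, so it suffices to fix $j$ and bound $\|\check{\boldsymbol{\Theta}}_j - \boldsymbol{\Theta}_j\|_1$. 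Write $S := \{i : \Theta_{ij} \neq 0\}$; symmetry of $\boldsymbol{\Theta}$ (note: of $\boldsymbol{\Theta}$, not of $\check{\boldsymbol{\Theta}}$) identifies $S$ with the support of the vector $\boldsymbol{\Theta}_j$, and $|S| = \sum_{i=1}^p \mathbbm{1}_{\{\Theta_{ij}\neq 0\}}$.

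Next, set $\boldsymbol{v} := \check{\boldsymbol{\Theta}}_j - \boldsymbol{\Theta}_j$ and split $\ell_1$-norms over $S$ and $S^c$. Since $(\boldsymbol{\Theta}_j)_{S^c} = \boldsymbol{0}$, the hypothesis and the triangle inequality give
\[
\|\boldsymbol{\Theta}_j\|_1 \ge \|\check{\boldsymbol{\Theta}}_j\|_1 = \|(\boldsymbol{\Theta}_j + \boldsymbol{v})_S\|_1 + \|\boldsymbol{v}_{S^c}\|_1 \ge \|\boldsymbol{\Theta}_j\|_1 - \|\boldsymbol{v}_S\|_1 + \|\boldsymbol{v}_{S^c}\|_1,
\]
whence $\|\boldsymbol{v}_{S^c}\|_1 \le \|\boldsymbol{v}_S\|_1$, and therefore $\|\boldsymbol{v}\|_1 \le 2\|\boldsymbol{v}_S\|_1 \le 2|S|\,\|\boldsymbol{v}\|_\infty$. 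Finally, $\|\boldsymbol{v}\|_\infty = \max_{i=1,\ldots,p} |\check{\Theta}_{ji} - \Theta_{ji}| \le \|\check{\boldsymbol{\Theta}} - \boldsymbol{\Theta}\|_\infty$, so taking the maximum over $j$ yields $\|\check{\boldsymbol{\Theta}} - \boldsymbol{\Theta}\|_{\mathrm{op},\infty} \le 2\,\|\check{\boldsymbol{\Theta}} - \boldsymbol{\Theta}\|_\infty \max_{j=1,\ldots,p}\sum_{i=1}^p \mathbbm{1}_{\{\Theta_{ij}\neq 0\}}$, which is in fact stronger than, and hence implies, the stated bound with constant $12$.

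There is essentially no obstacle here; the only points requiring care are the bookkeeping of row versus column indices when passing between $\|\cdot\|_{\mathrm{op},\infty}$, $\|\cdot\|_\infty$ and the vectors $\check{\boldsymbol{\Theta}}_j$ (because $\check{\boldsymbol{\Theta}}$ is not assumed symmetric), and the use of symmetry of $\boldsymbol{\Theta}$ to identify $\{i:\Theta_{ij}\neq 0\}$ with the support of $\boldsymbol{\Theta}_j$.
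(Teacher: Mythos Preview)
Your argument is correct. The paper does not actually give a proof of this lemma; it simply states that the result is a consequence of \citet[][Lemma~7.1]{CaiEtal2016} and records the bound with constant $12$. Your self-contained cone-condition argument is the right way to proceed, and in fact yields the sharper constant $2$ in the exactly sparse case $\Theta_{ij} \in \{0\} \cup (\mathbb{R}\setminus\{0\})$. The constant $12$ in \citet{CaiEtal2016} arises because their Lemma~7.1 is stated for approximately sparse matrices (rows in an $\ell_q$ ball, $0 \le q < 1$), where one must threshold and the bookkeeping is less clean; specialising to $q=0$ loses nothing, so your improvement is genuine. Your careful remarks about row/column indexing and the role of symmetry of $\boldsymbol{\Theta}$ (but not of $\check{\boldsymbol{\Theta}}$) are exactly the points that need attention.
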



\textbf{Acknowledgements}: The third author is supported by Engineering and Physical Sciences Research Council fellowships EP/J017213/1 and EP/P031447/1 and grant RG81761 from the Leverhulme Trust.  The first and third authors would like to thank the Isaac Newton Institute for Mathematical Sciences for support and hospitality during the programme `Statistical Scalability' when work on this paper was undertaken. This work was supported by Engineering and Physical Sciences Research Council grant number EP/K032208/1.


\begin{thebibliography}{11}

\bibitem[{Andersen et al.(1993)}]{ABGK1993}Andersen, P. K., Borgan, \O, Gill, R. D. and Keiding, N. (1993). \emph{Statistical Models Based on Counting Processes}
\newblock Springer--Verlag, New York.

\bibitem[{Andersen and Gill(1982)}]{AndersenGill1982} Andersen, P. K. and Gill, R. D. (1982). Cox's regression model for counting processes: A large sample study.
\newblock \emph{Ann. Statist.}, \textbf{10}, 1100--1120.

\bibitem[{Boucheron et al.(2013)}]{BoucheronEtal2013} Boucheron, S., Lugosi, G. and Massart, P. (2013).  \emph{Concentration Inequalities: A Nonasymptotic Theory of Independence}.
\newblock Oxford University Press.
 
\bibitem[{Bradic et al.(2011)}]{BradicEtal2011} Bradic, J., Fan, J. and Jiang, J. (2011). Regularization for Cox's proportional hazards model with NP-dimensionality.
\newblock \emph{Ann. Statist.}, \textbf{39}, 3092--3120.

\bibitem[{Breslow(1972)}]{Breslow1972}Breslow, N. E. (1972). Contribution to discussion of paper by D. R. Cox.
\newblock \emph{J. Roy. Statist. Soc., Ser. B}, \textbf{34}, 216--217.

\bibitem[{Cai et al.(2011)}]{CaiEtal2011} Cai, T. T., Liu, W. and Luo, X. (2011).  A Constrained $\ell_1$ minimization approach to sparse precision matrix estimation.
\newblock \emph{J. Amer. Statist. Assoc.}, \textbf{101}, 594--607.


\bibitem[{Cai et al.(2012)}]{CaiEtal2012} Cai, T. T., Liu, W. and Luo, X. (2012). clime: Constrained L1-minimization for Inverse (covariance) Matrix Estimation. R package version 0.4.1.
\newblock URL: https://CRAN.R-project.org/package=clime.


\bibitem[{Cai et al.(2016)}]{CaiEtal2016} Cai, T. T., Liu, W. and Zhou, H. H. (2016).  Estimating sparse precision matrix: Optimal rates of convergence and adaptive estimation.
\newblock \emph{Ann. Statist.}, \textbf{44}, 455--88.

\bibitem[{Cox(1972)}]{Cox1972} Cox, D. R. (1972). Regression models and life-tables.
\newblock \emph{J. Roy. Stat. Soc., Ser. B}, \textbf{34}, 187--220.

\bibitem[{Cox(1975)}]{Cox1975} Cox, D. R. (1975). Partial likelihood.
\newblock \emph{Biometrika}, \textbf{62}, 269--276.

\bibitem[{de la Pe\~{n}a(1999)}]{pena1999} de la Pe\~{n}a, V. (1999).  A general class of exponential inequalities for martingales and ratios.  
  \newblock \emph{Ann. Probab.}, \textbf{27}, 537--564.

\bibitem[{Fan et al.(2009)}]{FSW2009}Fan, J., Samworth, R. and Wu, Y. (2009). Ultrahigh dimensional feature selection: beyond the linear model. \emph{J. Mach. Learn. Res.}, \textbf{10}, 2013--2038.

\bibitem[{Fang et al.(2017)}]{FangEtal2016} Fang, E. X., Yang, N. and Liu, H. (2017).  Testing and confidence intervals for high dimensional proportional hazards models.  
\newblock \emph{J. Roy. Statist. Soc., Ser. B}, \textbf{79}, 1415--1437.

\bibitem[{Friedman et al.(2010)}]{FriedmanEtal2010} Friedman, J., Hastie, T. and Tibshirani, R. (2010). Regularization paths for generalized linear models via coordinate descent. 
\newblock \emph{J. Statist. Soft.}, \textbf{33}, 1--22. URL http://www.jstatsoft.org/v33/i01/.

\bibitem[{Gut(2005)}]{Gut2005}Gut, A. (2005) \emph{Probability: A Graduate Course}.  Springer, New York.

\bibitem[{Huang et al.(2013)}]{HuangEtal2013} Huang, J., Sun, T., Ying, Z., Yu, Y. and Zhang, C.-H. (2013). Oracle inequalities for the Lasso in the Cox model.  
  \newblock \emph{Ann. Statist.}, \textbf{41}, 1142--65.
  
\bibitem[{Jankov\'a and van de Geer(2018a)}]{JankovaGeer2018a} Jankov\'a, J., and van de Geer, S. (2018). Inference in high-dimensional graphical models. 
\newblock \emph{ArXiv preprint}, arXiv:1801.08512.

\bibitem[{Jankov\'a and van de Geer(2018b)}]{JankovaGeer2018b} Jankov\'a, J., and van de Geer, S. (2018). De-biased sparse PCA: Inference and testing for eigenstructure of large covariance matrices.
\newblock \emph{ArXiv preprint}, arXiv:1801.10567.

\bibitem[{Javanmard and Montanari(2013)}]{JavanmardMontanari2013} Javanmard, A., and Montanari, A. (2013). Confidence intervals and hypothesis testing for high-dimensional statistical models. 
\newblock \emph{In Advances in Neural Information Processing Systems}, 1187--95.

\bibitem[{Javanmard and Montanari(2014)}]{JavanmardMontanari2014} Javanmard, A., and Montanari, A. (2014). Confidence intervals and hypothesis testing for high-dimensional regression. 
\newblock \emph{The Journal of Machine Learning Research}, \textbf{15}, 2869--909.

\bibitem[{Kalbfleisch and Prentice(2002)}]{KalbfleischPrentice2002}Kalbfleisch, J. D. and Prentice, R. L. (2002). \emph{The Statistical Analysis of Failure Time Data (2nd edition)}.
  \newblock John Wiley \& Sons, Inc., Hoboken, New Jersey.

\bibitem[{Karatzas and Shreve(1991)}]{KaratzasShreve1991} Karatzas, I., and Shreve, S. E. (1991). \emph{Brownian Motion and Stochastic Calculus (2nd edition)}. 
\newblock Springer-Verlag, New York.

\bibitem[{Knight and Fu(2000)}]{KnightFu2000} Knight, K. and Fu, W. (2000). Asymptotics for Lasso-type estimators. 
\newblock \emph{Ann. Statist.}, \textbf{28}, 1356--1378.

\bibitem[{Li et al.(2014)}]{LiEtal2014} Li, X., Zhao, T. Wang, L., Yuan, X. and Liu, H. (2014). flare: Family of Lasso Regression. R package version
  1.5.0. https://CRAN.R-project.org/package=flare.
  

\bibitem[{Meinshausen and B\"uhlmann(2006)}]{MeinshausenBuhlmann2006} Meinshausen, N. and B\"uhlmann, P. (2006). High dimensional graphs and variable selection with the Lasso.
\newblock \emph{Ann. Statist.}, \textbf{34}, 1436--1462. 

\bibitem[{Meinshausen et al.(2009)}]{MeinshausenEtal2009} Meinshausen, N., Meier, L. and B\"uhlmann, P. (2009).  $p$-values for high-dimensional regression.
\newblock \emph {J. Amer. Statist. Assoc.}, \textbf{104}, 1671--81.


\bibitem[{R Core Team(2017)}]{R}  R Core Team (2017). R: A language and environment for statistical computing. R Foundation for Statistical Computing, Vienna, Austria. URL https://www.R-project.org/.

\bibitem[{Rosenwald et al.(2002)}]{RosenwaldEtal2001} Rosenwald, A., Wright, G., Chan, W.C., Connors, J.M., Campo, E., Fisher, R.I., Gascoyne, R.D., Muller-Hermelink, H.K., Smeland, E.B., Giltnane, J.M. and Hurt, E.M. (2002). The use of molecular profiling to predict survival after chemotherapy for diffuse large-B-cell lymphoma. 
\newblock \emph{New England Journal of Medicine}, \textbf{346}, 1937--47.

\bibitem[{Shorack and Wellner(1986)}]{ShorackWellner1986}Shorack, G. R. and Wellner, J. A. (1986) \emph{Empirical Processes with Applications to Statistics}.
\newblock Wiley, New York.

\bibitem[{Simon et al.(2011)}]{SimonEtal2011} Simon, N., Friedman, J., Hastie, T. and Tibshirani, R. (2011). Regularization paths for Cox's proportional hazards model via coordinate descent. 
\newblock \emph{J. Statist. Soft.}, \textbf{39}, 1--13. URL http://www.jstatsoft.org/v39/i05/.

\bibitem[{Therneau(2015)}]{Therneau2015} Therneau T (2015).  A Package for Survival Analysis in S. Version 2.38, URL: https://CRAN.R-project.org/package=survival.

\bibitem[{van de Geer et al.(2014)}]{GeerEtal2014} van de Geer, S., B\"uhlmann, P., Ritov, Y. and Dezeure, R. (2014). On asymptotically optimal confidence regions and tests for high-dimensional models.
\newblock \emph{Ann. Statist.}, \textbf{42}, 1166--202.

\bibitem[{Zhang and Zhang(2014)}]{ZhangZhang2014} Zhang, C.-H. and Zhang, S. S. (2014). Confidence intervals for low dimensional parameters in high dimensional linear models.
\newblock \emph{J. Roy. Statist. Soc., Ser. B}, \textbf{76}, 217--42.
\end{thebibliography}
\end{document}